\newtheorem{definition}{Definition}
\newtheorem{lemma}{Lemma}
\newtheorem{theorem}{Theorem}
\newtheorem{proof}{Proof}
\newtheorem{remark}{Remark}
\newtheorem{proposition}{Proposition}
\renewcommand{\phi}{\varphi}
\newcommand{\M}{\mathcal{M}}
\newcommand{\N}{\ensuremath{\mathcal{N}}}
\newcommand{\Kw}{\textit{Kw}}
\newcommand{\Ew}{\textit{Ew}}
\newcommand{\Cw}{\textit{Cw}}
\newcommand{\C}{\textit{C}}
\newcommand{\G}{\textsf{G}}
\newcommand{\BP}{\textsf{P}}
\newcommand{\K}{\ensuremath {\textit{K}}}
\newcommand{\E}{\ensuremath {\textit{E}}}
\newcommand{\PLKwCw}{\textbf{Cw}}
\newcommand{\PLKC}{\ensuremath{\textbf{C}}}
\newcommand{\lr}[1]{\langle #1 \rangle}
\newcommand{\lra}{\leftrightarrow}
\newcommand{\weg}[1]{}
\title{Commonly Knowing Whether}
\author{
  Jie Fan\thanks{Jie Fan and Xingchi Su are main authors of this paper.}, Davide Grossi, Barteld Kooi, Xingchi Su$^\star$, Rineke Verbrugge}
\begin{document}
\maketitle

\begin{abstract}
This paper introduces the notion of `commonly knowing whether', a non-standard version of standard common knowledge which is defined on the basis of `knowing whether', instead of standard `knowing that'. After giving five possible definitions of this notion, we explore the logical relations among them in the single-agent and multi-agent cases. We propose a sound and complete axiomatization. We investigate one of the five definitions in terms of expressivity via a strategy of modal comparison games.
\end{abstract}

\section{Introduction}

Common knowledge, or `commonly knowing that', as the strongest concept among group epistemic notions, has been studied extensively in various areas such as artificial intelligence, epistemic logic, epistemology, philosophy of language, epistemic game theory, see e.g.~\cite{parikh1992levels,meyer2004epistemic,fagin2004reasoning,Lewis1969Convention,Clark1981Definite,aumann1995epistemic}. Intuitively, a proposition is common knowledge among a group of agents, if the proposition is true, everyone (in the group) knows it, everyone knows everyone knows it, everyone knows everyone knows everyone knows it, and so on ad infinitum.\footnote{This so-called `iterate approach', attributed to~\cite{Lewis1969Convention}, is the most common and orthodox view of common knowledge. There are also approaches called `the fixed-point approach' and `the shared-environment approach', see~\cite{barwise2016three} for the comparisons among the three approaches and also for related references.} common knowledge is defined based on the notion of `knowing that'.

Beyond `knowing that', recent years have witnessed a growing interest in other types of knowledge, such as `knowing whether', `knowing what', `knowing how', `knowing why', `knowing who', see~\cite{Wang:2018} for an excellent overview. Among these notions, `knowing whether' is the closest friend of `knowing that'.     The notion of `Knowing whether' is used frequently to specify knowledge goals and preconditions for actions, see e.g.~\cite{McCarthy79,Reiter01,petrick2004extending}. Besides, `knowing whether' corresponds to an important notion of philosophy, that is, non-contingency, the negation of contingency, which dates back to Aristotle~\cite{Brogan67}. An agent {\em knows whether} a proposition $\phi$ holds, if the agents knows that $\phi$ is true, or the agent knows that $\phi$ is false; otherwise, the agent is ignorant about $\phi$. A proposition $\phi$ is {\em non-contingent}, if it is necessary that $\phi$, or it is impossible that $\phi$; otherwise, $\phi$ is contingent. Just as `knowing that' is the epistemic counterpart of necessity, `knowing whether' is the epistemic counterpart of non-contingency. For an overview of contingency and `knowing whether', we recommend~\cite{Fan2015CONTINGENCY}.

It is therefore natural to ask what the notion of `commonly knowing whether' is in contrast to the notion of common knowledge based on `knowing that'. `Commonly knowing whether' is not equivalent to common knowledge. For instance, suppose you see two people chatting beside a window but you cannot look outside yourself. Then you know that they {\em commonly know whether} it is sunny outside, but you do not know that they {\em commonly know that} it is sunny outside since you do not see the weather. There has been no unanimous agreement yet on the formal definition of `commonly knowing whether'. As we will show, there are at least five definitions for this notion, which are not logically equivalent over various frame classes.

There has been no unanimous agreement yet on the formal definition of `commonly knowing whether'. As we will show, there are at least five definitions for this notion, which are not logically equivalent over various frame classes. Also, we will demonstrate that one of the definitions is {\em not} expressible with common knowledge. 

The remainder of the paper is organized as follows. Section~\ref{sec.definitions} gives five definitions for the notion of `commonly knowing whether', which are based on distinct intuitions. Section~\ref{sec.implicationrelations} compares their implicational powers in the single-agent and also multi-agent cases. Section~\ref{sec.axiomatization} gives an axiomatization and demonstrates its soundness and completeness. Section~\ref{sec.expressivity} explores the relative expressivity of one of the five definitions, via a strategy of modal comparison games. Moreover, we explore some special properties of $Cw_5$ over binary trees in Section~\ref{sec.binarytrees}. Finally, we conclude with some future work in Section~\ref{sec.conclusions}.

\section{Definitions of `Commonly Knowing Whether' ($Cw$)}\label{sec.definitions}

This section presents some definitions for the notion of `commonly knowing whether'. To begin with, we fix a denumerable set of propositional variables $\BP$ and a nonempty finite set of agents $\G$. We use $\G^+$ to refer to the set of finite and nonempty sequences only consisting of agents from $\G$, and $|\G|$ to refer to the number of agents in $\G$. The languages involved in this paper are defined recursively as follows:
\[
\begin{array}{ll}
\PLKwCw&\phi::= p \mid \neg \phi \mid (\phi\land\phi)\mid \Kw_i\phi\mid \Ew\phi \mid \Cw\phi\\
\PLKC&\phi::= p \mid \neg \phi \mid (\phi\land\phi)\mid \K_i\phi\mid \E\phi \mid \C\phi\\

\end{array}
\]
where $p\in \BP$, and $i\in \G$. In this paper, for the sake of simplicity, we only concern $\G$ but not any of its proper subsets. 

Intuitively, $\Kw_i\phi$ is read ``agent $i$ knows whether $\phi$'', $\Ew\phi$ is read ``everyone knows whether $\phi$'', $\Cw\phi$ is read ``a set of agents commonly knows whether $\phi$'', and $\K_i, \E,\C$ are more familiar operators of individual knowledge, general knowledge, and common knowledge, respectively.

As noted in the introduction, according to the iterate approach, common knowledge is defined as infinite iteration of general knowledge (namely `everyone knows'). Similarly, we can give an iterate approach to commonly knowing whether, according to which commonly knowing whether amounts to infinite iteration of `everyone knows whether'. For this, we need to define the notion of `everyone knowing whether'. One definition for `everyone knowing whether' is similar to the notion of `everyone knows', that is, to say everyone in a group knows whether $\phi$, if every agent in the group knows whether $\phi$~(Def.~\ref{def.ew2}). This is seemingly the most natural notion of `everyone knowing whether'. Another definition for `everyone knowing whether' is similar to (individual) knowing whether; namely, everyone in a group knows whether $\phi$, if everyone in the group knows that $\phi$ is true, or everyone in the group knows that $\phi$ is false~(Def.~\ref{def.ew1}).

\begin{definition}\label{def.ew1}
$\Ew_1\phi:=\E\phi\vee \E\neg\phi$
\end{definition}

\begin{definition}\label{def.ew2}
$\Ew_2\phi:=\bigwedge_{i\in\G} \Kw_i\phi$
\end{definition}

It should be easy to check that $\Ew_1$ is stronger than $\Ew_2$. This can be informally explained as follows: for any agent, if (s)he knows $\phi$ or knows $\neg\phi$, then (s)he knows whether $\phi$ is true; however, the other direction fails, because it is possible that some agents know that $\phi$ is true but others know that $\phi$ is false.

Based on the above definitions of `everyone knowing whether' and common knowledge, we propose the following possible definitions for `commonly knowing whether'.

\begin{definition}\label{def.cw1}
$\Cw_1\phi:= \C\phi\vee \C\neg\phi$
\end{definition}

The definition of $\Cw_1$ is structurally similar to that of $\Ew_1$. Intuitively, it says that a group {\em commonly knows whether} $\phi$, if the group has either common knowledge of $\phi$ or has common knowledge of the negation of $\phi$. $\Cw_1$ corresponds to the notion of `commonly knowing whether' in the sunny weather example in the introduction. Besides, this definition may also find applications in question-answer contexts. suppose that Sue is attending a conference and asking the speaker a question: ``Is $q$ true or false?'' No matter whether the speaker says `Yes' or `No', all attendees will {\em commonly know whether} $p$. Since if the speaker says `Yes', then the attendees will commonly know that $p$ is true; otherwise, the attendees will commonly know that $p$ is false. The answer of the speaker amounts to an announcement whether $p$~\cite{van2016propositional} (depending on the truth value of $p$), which leads to the common knowledge as to the truth value of $p$.

\begin{definition}\label{def.cw2}
$\Cw_2\phi:= \C\Ew\phi$
\end{definition}

According to this definition, a group {\em commonly knows whether} $\phi$, if it is common knowledge that everyone knows whether $\phi$. Since there are two different definitions of $\Ew$, we have also two different definitions of $\Cw_2$, that is, $\Cw_{21}\phi:=\C\Ew_1\phi$ and $\Cw_{22}\phi:=\C\Ew_2\phi$.

\begin{definition}\label{def.cw3}
$\Cw_3\phi:= \bigwedge_{k\geq 1}(\Ew)^k\phi$
\end{definition}

This is the iterated approach to commonly knowing whether: a group {\em commonly knows whether} $\phi$, if everyone (in the group) knows whether $\phi$, everyone knows whether everyone knows whether $\phi$, everyone knows whether everyone knows whether everyone knows whether $\phi$, and so on ad infinitum. $\Cw_{3}$ corresponds to the notion of `commonly knowing whether' in the muddy children example in the introduction. As indicated in that example, {\em neither} $q$ {\em nor} $\neg q$ is common knowledge among the single-agent set of the mud child, thus $\Cw_{3}$ is not stronger than $\Cw_1$. Again, because there are two different definitions of $\Ew$, we have also two different definitions of $\Cw_3$, that is, $\Cw_{31}\phi:=\bigwedge_{k\geq 1}(\Ew_1)^k\phi$ and $\Cw_{32}\phi:=\bigwedge_{k\geq 1}(\Ew_2)^k\phi$.

\begin{definition}\label{def.cw4}
$\Cw_4\phi:= \bigwedge_{i\in \G}\Cw_1\Kw_i\phi$, that is, $\Cw_4\phi:= \bigwedge_{i\in \G}(\C\Kw_i\phi\vee \C\neg \Kw_i\phi)$
\end{definition}

According to this definition, a group {\em commonly knows whether} $\phi$, if for every member (in the group), it is common knowledge that (s)he knows whether $\phi$ or it is common knowledge that (s)he does not know whether $\phi$.

\begin{definition}\label{def.cw5}
$\Cw_5\phi := \bigwedge_{s\in \G^+}\Kw_s\phi$, where $\Kw_s\phi:= \Kw_{s_1} \dots \Kw_{s_n}\phi$ if $s=s_1\dots s_n$ is a nonempty sequence of agents.
\end{definition}

This definition is inspired by the hierarchy of inter-knowledge of a group given in \cite{parikh1992levels}. According to this definition, `commonly knowing whether' amounts to listing all the possible inter-`knowing whether' states among every nonempty subset of the group.
.

\section{Implication Relations among the definitions of $Cw$}\label{sec.implicationrelations}

This section explores the implication relations among the above five definitions of `commonly knowing whether'. The semantics of $\PLKwCw$ and $\PLKC$ are interpreted over Kripke models. A {\em (Kripke) model} is a tuple $\M=\lr{W,\{R_i\mid i\in\G\},V}$, where $W$ is a nonempty set of (possible) worlds (also called `states'), $R_i$ is an accessibility relation for each agent $i$, and $V$ is a valuation function assigning a set of possible worlds to each propositional variable in $\BP$. We say that $(\M,s)$ is a {\em pointed model}, if $\M$ is a model and $s$ is a world in $\M$. We use $\to$ to denote the union of all $R_i$ for $i\in\G$, and $\twoheadrightarrow$ to denote the reflexive-transitive closure of $\to$, that is, $\twoheadrightarrow=\bigcup_{n\in\mathbb{N}}\to^n$. A {\em (Kripke) frame} is a model without valuations. Moreover, we use $\mathcal{K}$, $\mathcal{T}$, $\mathcal{KD}45$, and $\mathcal{S}5$ to denote the class of all Kripke frames, the class of reflexive frames, the class of serial, transitive and Euclidean frames, and the class of reflexive and Euclidean frames, respectively.

Given a model $\M=\lr{W,\{R_i\mid i\in\G\},V}$ and a world $w\in W$, the semantics of $\PLKwCw$ and $\PLKC$ are defined inductively as follows (we do not list the semantics of $\Ew$ and $\Cw$ due to their alternative definitions).

\begin{definition}
The truth conditions for the formulas in $\PLKwCw$ and $\PLKC$ are shown as follows:
\begin{itemize}
\item $\M,w\vDash p \Leftrightarrow  w\in V(p)$

\item $\M,w\vDash\neg\phi\Leftrightarrow  \M,w\nvDash\phi$

\item $\M,w\vDash\phi\land\psi\Leftrightarrow \M,w\vDash\phi\ and\ \M,w\vDash\psi$

\item $\M,w\vDash\Kw_i\phi\Leftrightarrow for\ all\ u,v\in W,\ 
if\ wR_iu\ and\ wR_iv,\ 
then\ (\M,u\vDash\phi\iff \M,v\vDash\phi)$

\item $\M,w\vDash\K_i\phi\Leftrightarrow for\ all\ u\in W, if\ wR_iu,\
then\ \M,u\vDash\phi$

\item $\M,w\vDash\E\phi\Leftrightarrow for\ all\ u\in W,\ if\ w\to u,\ then\ \M,u\vDash\phi$

\item $\M,w\vDash\C\phi\Leftrightarrow for\ all\ u\in W,\  if\ w\twoheadrightarrow u,\ 
then\ \M,u\vDash\phi$
\end{itemize}
\end{definition}

Notions of satisfiability and validity are defined as normal. For instance, we say that $\phi$ is {\em valid over the class of frames $\mathcal{C}$}, written $\mathcal{C}\vDash\phi$, if for all frames $\mathcal{F}$ in $\mathcal{C}$, for all models $\M$ based on $\mathcal{F}$, for all worlds $w$ in $\M$, we have $\M,s\vDash\phi$; we say that $\phi$ is {\em valid}, if $\phi$ is valid over the class of all frames $\mathcal{K}$.

The following result will simplify the later proofs. We omit the proof details due to the space limitation.

\begin{proposition}
\begin{enumerate}
    \item $\vDash \C\phi\to \C\E\phi$
    \item $\vDash \phi\to\psi$ implies $\vDash \C\phi\to \C\psi$.
\end{enumerate}
\end{proposition}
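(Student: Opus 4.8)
The plan is to argue purely semantically at the level of pointed models, exploiting the fact that $\E$ and $\C$ are ordinary box modalities: $\E$ is the box for the relation $\to$, and $\C$ is the box for its reflexive-transitive closure $\twoheadrightarrow$. Under this reading both claims reduce to standard facts, namely the interaction between a relation and its reflexive-transitive closure (for part~1) and the monotonicity of a box modality under a valid implication (for part~2).

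For part~(1), I would fix an arbitrary pointed model $(\M,w)$ satisfying $\M,w\vDash\C\phi$ and unfold the two nested semantic clauses. To establish $\M,w\vDash\C\E\phi$ I take an arbitrary $u$ with $w\twoheadrightarrow u$ and aim to show $\M,u\vDash\E\phi$; for this I take an arbitrary $v$ with $u\to v$ and aim to show $\M,v\vDash\phi$. The crucial step is to observe that, since $\to\,\subseteq\,\twoheadrightarrow$, we have $u\twoheadrightarrow v$, and since $\twoheadrightarrow$ is transitive (being a reflexive-transitive closure) we may compose $w\twoheadrightarrow u$ with $u\twoheadrightarrow v$ to obtain $w\twoheadrightarrow v$. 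The assumption $\M,w\vDash\C\phi$ then immediately yields $\M,v\vDash\phi$, which closes both quantifier loops, and since $(\M,w)$ was arbitrary this gives validity.

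For part~(2), I assume $\vDash\phi\to\psi$ and fix any $(\M,w)$ with $\M,w\vDash\C\phi$. For an arbitrary $u$ with $w\twoheadrightarrow u$, the clause for $\C$ gives $\M,u\vDash\phi$; instantiating the global validity $\vDash\phi\to\psi$ at the point $(\M,u)$ then gives $\M,u\vDash\psi$. As $u$ ranges over all $\twoheadrightarrow$-successors of $w$, we conclude $\M,w\vDash\C\psi$, and hence $\vDash\C\phi\to\C\psi$.

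I do not expect a genuine obstacle in either part. The only points requiring care are the bookkeeping of the nested quantifiers in part~(1)---in particular the two applications of $\to\,\subseteq\,\twoheadrightarrow$ and of transitivity of $\twoheadrightarrow$---and, in part~(2), the correct localisation of the hypothesis: $\phi\to\psi$ must be used as a \emph{validity}, true at every point, so that it can legitimately be applied at each reachable world $u$ rather than only at $w$.
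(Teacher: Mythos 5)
Your proof is correct: the paper itself omits the proof of this proposition ``due to the space limitation,'' and your argument is exactly the standard semantic one it presupposes --- part (1) via $\to\,\subseteq\,\twoheadrightarrow$ and transitivity of the reflexive-transitive closure, part (2) via monotonicity of the box for $\twoheadrightarrow$ under a global validity. No gaps; the two points of care you flag (the quantifier bookkeeping and using $\phi\to\psi$ as a validity at every reachable world) are precisely the right ones.
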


\subsection{Single-agent Case}

This part investigates the implication relations among definitions of `commonly knowing whether'. First, we explore the implication relations between the two definitions of `everyone knowing whether'. It turns out that $\Ew_1$ and $\Ew_2$ are equivalent in the single-agent case.

\begin{proposition}\label{prop.single-agent-equiv}
If $|\G|=1$, then for any $\phi$, $\vDash\Ew_1\phi\lra\Ew_2\phi$.
\end{proposition}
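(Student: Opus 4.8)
The plan is to exploit the fact that with a single agent the general-knowledge modality $\E$ collapses onto the individual modality $\K_i$. First I would fix $\G=\{i\}$, so that $\to=R_i$ and hence $\M,w\vDash\E\psi$ iff $\M,w\vDash\K_i\psi$ for every formula $\psi$. Unfolding the two definitions, this reduces the claim to showing, at an arbitrary pointed model $(\M,w)$, that $\K_i\phi\vee\K_i\neg\phi$ holds iff $\Kw_i\phi$ holds.

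For the left-to-right direction I would simply invoke the observation already recorded in the text that $\Ew_1$ is stronger than $\Ew_2$: if every $R_i$-successor of $w$ satisfies $\phi$, or every $R_i$-successor satisfies $\neg\phi$, then any two successors trivially agree on the truth value of $\phi$, which is exactly what $\Kw_i\phi$ demands.

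The substance is the right-to-left direction, and the key move is a case split on whether $w$ has any $R_i$-successors. Assume $\M,w\vDash\Kw_i\phi$, i.e.\ all $R_i$-successors of $w$ agree on the truth value of $\phi$. If $w$ has no $R_i$-successor, then both $\K_i\phi$ and $\K_i\neg\phi$ hold vacuously, so $\Ew_1\phi$ holds. Otherwise pick a witness successor $u$; either $\M,u\vDash\phi$ or $\M,u\vDash\neg\phi$, and by the agreement provided by $\Kw_i\phi$ every successor shares the truth value of $u$, yielding $\K_i\phi$ in the first case and $\K_i\neg\phi$ in the second. In either case $\Ew_1\phi$ holds, completing the equivalence.

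I expect the only delicate point to be the bookkeeping of the empty-successor case, which is precisely where the argument uses that there is a single agent: $\Kw_i$ constrains agreement only inside one accessibility relation, so consistency within that single successor set already fixes a uniform truth value. With two or more agents the successor sets may disagree across agents, so $\Ew_2$ can hold without forcing a single global truth value as $\Ew_1$ requires, which is why the proposition is confined to $|\G|=1$.
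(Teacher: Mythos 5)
Your proposal is correct and follows essentially the same route as the paper: identify $\to$ with $R_i$ so that $\E$ collapses to $\K_i$ and $\Ew_2\phi$ to $\Kw_i\phi$, then use the semantic equivalence of $\Kw_i\phi$ with $\K_i\phi\vee\K_i\neg\phi$. You merely make explicit (via the case split on whether $w$ has successors) the step the paper leaves implicit, which is a harmless elaboration; note only that the single-agent hypothesis is really used in collapsing $\E$ onto $\K_i$, not in the empty-successor bookkeeping, since $\Kw_i\phi\lra(\K_i\phi\vee\K_i\neg\phi)$ holds for each individual agent regardless of $|\G|$.
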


\begin{proof}
Let $|\G|=1$, say $\G=\{i\}$. In this case, $\Ew_2\phi$ is just $\Kw_i\phi$. Also, $\vDash \E\phi\lra \K_i\phi$, and $\vDash\E\neg\phi\lra \K_i\neg\phi$, and thus $\vDash\E\phi\vee\E\neg\phi\lra \K_i\phi\vee\K_i\neg\phi$, that is,  $\vDash\Ew_1\phi\lra\Ew_2\phi$.
\end{proof}

This tells us that in the single-agent case, we do not need to distinguish between $\Cw_{21}$ and $\Cw_{22}$, and between $\Cw_{31}$ and $\Cw_{32}$, and thus we only talk about $\Cw_2$ and $\Cw_3$, respectively, when only one agent is involved. In the reminder of this subsection, we assume $\G$ to be a single-agent set $\{i\}$.

\begin{proposition}\label{prop.Cw1toCw2}
$\vDash \Cw_1\phi\to\Cw_{2}\phi$, $\vDash \Cw_{2}\phi\to\Cw_{3}\phi$, and therefore, $\vDash \Cw_1\phi\to\Cw_{3}\phi$.
\end{proposition}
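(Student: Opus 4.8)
The plan is to establish the two implications $\vDash\Cw_1\phi\to\Cw_{2}\phi$ and $\vDash\Cw_{2}\phi\to\Cw_{3}\phi$ separately, and then obtain $\vDash\Cw_1\phi\to\Cw_{3}\phi$ by transitivity of $\to$. Throughout I would use that in the single-agent case $\Ew\phi$ collapses to $\Kw_i\phi$, so that $\Cw_{2}\phi=\C\Kw_i\phi$ and $\Cw_{3}\phi=\bigwedge_{k\geq 1}(\Kw_i)^k\phi$.

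For the first implication I would argue axiomatically from the preceding proposition. Since $\vDash\E\phi\lra\K_i\phi$ and (by a one-line semantic check) $\vDash\K_i\phi\to\Kw_i\phi$, we have $\vDash\E\phi\to\Kw_i\phi$, so the monotonicity clause of that proposition gives $\vDash\C\E\phi\to\C\Kw_i\phi$; combined with its part~1, $\vDash\C\phi\to\C\E\phi$, this yields $\vDash\C\phi\to\C\Kw_i\phi$. The symmetric reasoning applied to $\neg\phi$, using $\vDash\K_i\neg\phi\to\Kw_i\phi$, gives $\vDash\C\neg\phi\to\C\Kw_i\phi$. Disjoining the two antecedents then produces $\vDash(\C\phi\vee\C\neg\phi)\to\C\Kw_i\phi$, i.e. $\vDash\Cw_1\phi\to\Cw_{2}\phi$.

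For the second implication I would switch to a direct semantic argument. Fix a pointed model $(\M,w)$ with $\M,w\vDash\C\Kw_i\phi$, and let $S=\{u\mid w\twoheadrightarrow u\}$ be the set of worlds reachable from $w$. Two facts are immediate: $w\in S$, and $S$ is closed under $R_i$, because $R_i\subseteq{\to}\subseteq{\twoheadrightarrow}$ and $\twoheadrightarrow$ is transitive. Moreover the hypothesis $\C\Kw_i\phi$ says precisely that $\Kw_i\phi$ holds at every world of $S$. I would then prove, by induction on $k\geq 1$, the strengthened statement that $(\Kw_i)^k\phi$ holds at every world of $S$. The base case $k=1$ is the hypothesis itself. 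For the inductive step, fix $u\in S$; every $R_i$-successor of $u$ again lies in $S$, so by the induction hypothesis $(\Kw_i)^k\phi$ is true at all of them, whence they all agree on this formula and $\Kw_i(\Kw_i)^k\phi=(\Kw_i)^{k+1}\phi$ holds at $u$. Evaluating the claim at $w\in S$ gives $\M,w\vDash(\Kw_i)^k\phi$ for every $k$, hence $\M,w\vDash\Cw_{3}\phi$.

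The point that needs care is the strengthening of the induction hypothesis in the second implication: proving $(\Kw_i)^k\phi$ at $w$ alone does not propagate through the outer $\Kw_i$, since passing that modality requires information about $(\Kw_i)^k\phi$ at the $R_i$-successors of $w$. Carrying the statement ``$(\Kw_i)^k\phi$ holds everywhere in $S$'' is what makes the induction go through, and this in turn relies on the closure of $S$ under $R_i$. The remaining steps are routine.
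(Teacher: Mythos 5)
Your proof is correct. For the first implication you follow the same route as the paper: push $\C\phi$ to $\C\E\phi$ via part~1 of the preceding proposition, weaken $\E\phi$ (resp.\ $\E\neg\phi$) to $\Kw_i\phi$ under the $\C$ by monotonicity, and disjoin; the paper handles the negative disjunct via $\vDash\Ew\phi\lra\Ew\neg\phi$ where you use $\vDash\K_i\neg\phi\to\Kw_i\phi$, which is the same content. For the second implication the paper only gestures at the argument --- it unfolds $\C\Ew\phi$ as $\Ew\phi\land\E\Ew\phi\land\cdots$ and says one ``can show'' each $\E$ may be replaced by $\Ew$ --- whereas you carry out the induction explicitly and semantically, over the set $S$ of $\twoheadrightarrow$-reachable worlds. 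Your observation that the induction hypothesis must be strengthened to ``$(\Kw_i)^k\phi$ holds throughout $S$'' rather than just at $w$, and that this works because $S$ is closed under $R_i$, is precisely the point the paper's one-liner suppresses (the syntactic version needs the analogous step of applying $\E\psi\to\Ew\psi$ underneath $k$ outer $\E$ operators). So the two arguments agree in substance, with yours being the more careful rendering of the second half.
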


\begin{proof}
Since $\vDash \C\phi\to\C\E\phi$, and $\vDash\C\E\phi\to\C\Ew\phi$ (because $\vDash\E\phi\to\Ew\phi$), we have $\vDash\C\phi\to\C\Ew\phi$. Similarly, we obtain $\vDash\C\neg\phi\to\C\Ew\neg\phi$. It is easy to see that $\vDash\Ew\phi\lra\Ew\neg\phi$. Therefore, $\vDash\C\phi\vee\C\neg\phi\to\C\Ew\phi$, i.e., $\vDash\Cw_1\phi\to\Cw_2\phi$.

Moreover, since $\C\Ew\phi=\Ew\phi\land\E\Ew\phi\land\cdots$, and $\vDash \E\phi\to\Ew\phi$, we can show that $\vDash\C\Ew\phi\to\Ew\phi\land\Ew\Ew\phi\land\cdots$, that is, $\vDash\Cw_2\phi\to\Cw_3\phi$.
\end{proof}

\begin{proposition}\label{prop.Cw2ntoCw1}
$\nvDash\Cw_{2}\phi\to\Cw_1\phi$.
\end{proposition}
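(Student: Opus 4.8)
The plan is to refute validity by exhibiting a single pointed model in which $\Cw_2\phi$ is satisfied but $\Cw_1\phi$ fails; since validity here is over the class $\mathcal{K}$ of all frames, one such counterexample suffices. I would first unpack the two formulas under the single-agent reading fixed for this subsection. With $\G=\{i\}$ we have $\vDash\Ew\phi\lra\Kw_i\phi$ (Proposition~\ref{prop.single-agent-equiv}), so $\Cw_2\phi$ is $\C\Kw_i\phi$, and $\C$ is just the box for the reflexive-transitive closure $\twoheadrightarrow$ of $R_i$. Reading off the semantics, $\Cw_2\phi$ demands that agent $i$ knows whether $\phi$ at \emph{every} $\twoheadrightarrow$-reachable world, i.e.\ that $\phi$ is truth-value-constant on each such world's set of $R_i$-successors; whereas $\Cw_1\phi=\C\phi\vee\C\neg\phi$ demands that $\phi$ be \emph{globally} constant across all reachable worlds.

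The key idea is that these two requirements come apart precisely when $\phi$ is \emph{locally} homogeneous (constant on each world's successor set) yet \emph{globally} heterogeneous (taking both truth values among the reachable worlds). To realise this I would take $\phi$ to be an atom $p$ and build a two-world ``chain'' $W=\{w,u\}$ with $R_i=\{(w,u),(u,u)\}$ and $V(p)=\{u\}$, evaluated at $w$. Because each world sees exactly one $R_i$-successor, $\Kw_i p$ holds at both $w$ and $u$, and since the set of worlds $\twoheadrightarrow$-reachable from $w$ is exactly $\{w,u\}$, $\Cw_2 p=\C\Kw_i p$ holds at $w$. On the other hand $p$ is false at $w$ and true at $u$, both of which are reachable from $w$, so neither $\C p$ nor $\C\neg p$ holds there, whence $\Cw_1 p$ fails at $w$.

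There is no serious obstacle here; the only point needing care is the bookkeeping of the reflexive-transitive closure, namely that $w\twoheadrightarrow w$ holds by definition even though $w$ carries no reflexive $R_i$-loop. This is what guarantees that both truth values of $p$ genuinely occur among the worlds reachable from $w$, which is exactly what is needed to kill both disjuncts of $\Cw_1 p$, while the single-successor structure simultaneously keeps $\Kw_i p$ true everywhere. After fixing this, verifying the four truth conditions ($\Kw_i p$ at $w$ and at $u$, and the failure of $\C p$ and of $\C\neg p$ at $w$) is entirely routine.
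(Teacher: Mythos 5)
Your proof is correct and takes essentially the same approach as the paper: a two-world single-agent chain with $p$ false at the evaluation world and true at its successor, so that $\Kw_i p$ holds at every reachable world (giving $\Cw_2 p$) while both $\C p$ and $\C\neg p$ fail. The only cosmetic difference is that you put a reflexive loop on the second world where the paper leaves it a dead end (making $\Kw_i p$ hold there vacuously); both verifications go through.
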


\begin{proof}
Consider the following model $\M$:
\[
\xymatrix{s:\neg p\ar[rr]|i&&t: p}
\]
Clearly, $\M,s\vDash\K_i p$ and $\M,t\vDash\K_i p$. Then $\M,s\vDash\E p$ and $\M,t\vDash \E p$. We can obtain that $\M,s\vDash\C\Ew p$. However, $\M,s\nvDash\C p\vee\C\neg p$: Since $\M,s\nvDash  p$, we have $\M,s\nvDash\C p$; since $\M,t\nvDash\neg p$, we have $\M,s\nvDash\C\neg p$. Therefore, $\nvDash\Cw_{2} p\to\Cw_1p$.
\end{proof}

\begin{proposition}\label{prop.cw3ntocw2}
$\nvDash\Cw_{3}\phi\to\Cw_{2}\phi$, and thus $\nvDash\Cw_3\phi\to\Cw_1\phi$.
\end{proposition}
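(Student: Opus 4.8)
The plan is to refute the implication by building a single-agent countermodel in which $\Cw_3 p$ holds while $\Cw_2 p$ fails. Since $\G=\{i\}$ we have $\Ew=\Kw_i$, so $\Cw_3 p=\bigwedge_{k\geq 1}(\Kw_i)^k p$ and $\Cw_2 p=\C\Kw_i p$. The decisive observation is that $\Kw_i$ is \emph{not} a normal modality: at a world whose set of $R_i$-successors is a singleton, the truth condition for $\Kw_i\psi$ is satisfied trivially for \emph{every} $\psi$ (the only available witnesses $u,v$ coincide). By contrast, $\C\Kw_i p$ genuinely requires $\Kw_i p$ to be \emph{true} at every $\twoheadrightarrow$-reachable world. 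Exploiting this gap is the whole idea.

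Concretely, I would take a point $s$ with a single $R_i$-successor $t$, and give $t$ two $R_i$-successors $u,v$ that disagree on $p$ (say $p$ false at $u$ and true at $v$, with $p$ false at $s$). First, $\M,t\nvDash\Kw_i p$, because $u,v$ are $R_i$-successors of $t$ splitting on the value of $p$; since $s\twoheadrightarrow t$, this already yields $\M,s\nvDash\C\Kw_i p$, i.e.\ $\M,s\nvDash\Cw_2 p$. Second, because $s$ has the unique successor $t$, we have $\M,s\vDash\Kw_i\psi$ for \emph{every} formula $\psi$; in particular $\M,s\vDash(\Kw_i)^k p$ for all $k\geq 1$, so $\M,s\vDash\Cw_3 p$. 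This gives $\nvDash\Cw_3 p\to\Cw_2 p$.

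For the second claim, $\nvDash\Cw_3\phi\to\Cw_1\phi$, I would derive it from the first using Proposition~\ref{prop.Cw1toCw2}: were $\Cw_3\phi\to\Cw_1\phi$ valid, composing with the valid $\Cw_1\phi\to\Cw_2\phi$ would make $\Cw_3\phi\to\Cw_2\phi$ valid, contradicting what was just shown. Alternatively, the very same model witnesses it directly, since $p$ is false at $s$ (so $\M,s\nvDash\C p$) while $p$ is true at the reachable world $v$ (so $\M,s\nvDash\C\neg p$), whence $\M,s\nvDash\Cw_1 p$.

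The step I expect to be the only real subtlety is verifying that $\Cw_3 p$ holds at $s$: one must check that the trivial satisfaction of $\Kw_i$ at the single-successor world $s$ propagates to \emph{all} iterations $(\Kw_i)^k p$, not merely the first conjunct. The key point to state carefully is that the outermost $\Kw_i$ evaluated at $s$ inspects only $s$'s own successors, so the failure of $\Kw_i p$ deeper in the model (at $t$) never leaks into the truth value of $(\Kw_i)^k p$ at $s$. Everything else is a routine unfolding of the truth conditions.
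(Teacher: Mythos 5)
Your proof is correct and follows essentially the same approach as the paper's: the paper also uses a single-agent model in which $s$ has a unique successor (so $\Kw_i\psi$, and hence every $(\Kw_i)^k p$, holds trivially at $s$) while a $\twoheadrightarrow$-reachable world has two successors disagreeing on $p$, refuting $\C\Kw_i p$. The only differences are cosmetic (the paper's countermodel has two worlds with a loop rather than a four-world tree), and your derivation of $\nvDash\Cw_3\phi\to\Cw_1\phi$ via $\vDash\Cw_1\phi\to\Cw_2\phi$ is exactly the intended reading of the paper's ``thus''.
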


\begin{proof}
Consider the following model $\mathcal{N}$:
\[
\xymatrix{s:p\ar[rr]|i&&t:\neg p\ar[ll]|i\ar@(ur,ul)|i}
\]
Since $s$ has only one successor, $\mathcal{N},s\vDash\Kw_i\phi$ for all $\phi$, thus $\mathcal{N},s\vDash\Ew\phi$ for all $\phi$, and hence $\mathcal{N},s\vDash \Cw_3\phi$. In particular, $\mathcal{N},s\vDash\Cw_3p$.

However, $\mathcal{N},t\nvDash\Kw_ip$, thus $\mathcal{N},t\nvDash\Ew p$, and then $\mathcal{N},s\nvDash\C\Ew p$, that is, $\mathcal{N},s\nvDash\Cw_2p$. Therefore, $\nvDash\Cw_{3}p\to\Cw_{2}p$.
\end{proof}

\begin{proposition}\label{prop.cw2tocw4}
$\vDash \Cw_2\phi\to \Cw_4\phi$. As a corollary, $\vDash\Cw_1\phi\to\Cw_4\phi$.
\end{proposition}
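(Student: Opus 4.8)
The plan is to prove the implication uniformly for both readings of the antecedent, $\Cw_{21}\phi=\C\Ew_1\phi$ and $\Cw_{22}\phi=\C\Ew_2\phi$, by first collapsing the stronger antecedent $\Cw_{21}$ onto the weaker $\Cw_{22}$, and then deriving $\Cw_4\phi$ from $\Cw_{22}\phi$ directly. Throughout, the only machinery I would use is the monotonicity rule for $\C$ (namely, $\vDash\psi\to\chi$ implies $\vDash\C\psi\to\C\chi$) together with the observation, already recorded in the text, that $\Ew_1$ is stronger than $\Ew_2$. The underlying semantic intuition is simple: $\C\Ew\phi$ forces $\Kw_i\phi$ to be \emph{true} at every $\twoheadrightarrow$-reachable world for every agent $i$, which is exactly the left disjunct of each conjunct of $\Cw_4$.

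Concretely, I would proceed in three steps. First, establish $\vDash\Ew_1\phi\to\Ew_2\phi$, i.e.\ $\vDash(\E\phi\vee\E\neg\phi)\to\bigwedge_{i\in\G}\Kw_i\phi$: at any world, if $\E\phi$ holds then every $\to$-successor, hence every $R_i$-successor, satisfies $\phi$, so all $R_i$-successors agree and $\Kw_i\phi$ holds, and symmetrically if $\E\neg\phi$ holds; conjoining over $i$ gives $\Ew_2\phi$. Applying monotonicity of $\C$ yields $\vDash\C\Ew_1\phi\to\C\Ew_2\phi$, i.e.\ $\vDash\Cw_{21}\phi\to\Cw_{22}\phi$, so it suffices to treat $\Cw_{22}$. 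Second, from $\Cw_{22}\phi=\C\bigwedge_{j\in\G}\Kw_j\phi$, fix $i\in\G$ and use $\vDash\bigwedge_{j\in\G}\Kw_j\phi\to\Kw_i\phi$ with monotonicity of $\C$ to get $\vDash\Cw_{22}\phi\to\C\Kw_i\phi$. Third, since $\C\Kw_i\phi$ is the left disjunct of $\Cw_1\Kw_i\phi=\C\Kw_i\phi\vee\C\neg\Kw_i\phi$, conjoining over all $i\in\G$ gives $\vDash\Cw_{22}\phi\to\bigwedge_{i\in\G}(\C\Kw_i\phi\vee\C\neg\Kw_i\phi)=\Cw_4\phi$.

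The step that makes the multi-agent case genuinely more than the single-agent collapse --- and the one I would flag as the only point needing care --- is the passage from the single conjunctive formula $\C\bigwedge_{j}\Kw_j\phi$ to the per-agent conjuncts of $\Cw_4$: one must note that $\C$ weakens through the conjunction, so that commonly knowing whether the \emph{whole} conjunction already delivers, for each agent separately, the left disjunct $\C\Kw_i\phi$ (the right disjunct $\C\neg\Kw_i\phi$ is never needed, reflecting that $\Cw_2$ is strong enough to keep every $\Kw_i\phi$ true throughout the reachable submodel). Finally, for the corollary I would compose the main implication with $\vDash\Cw_1\phi\to\Cw_2\phi$; the proof of the latter in Proposition~\ref{prop.Cw1toCw2} rests only on $\vDash\C\phi\to\C\E\phi$, $\vDash\E\phi\to\Ew\phi$ and monotonicity of $\C$, none of which requires $|\G|=1$, so it transfers to the multi-agent setting and gives $\vDash\Cw_1\phi\to\Cw_4\phi$.
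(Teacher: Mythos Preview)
Your argument is correct and rests on the same core idea as the paper's: from $\Cw_2\phi$ one extracts $\C\Kw_i\phi$ (via monotonicity of $\C$), which is already the left disjunct of each conjunct of $\Cw_4\phi$. The only difference is scope: Proposition~\ref{prop.cw2tocw4} sits in the single-agent subsection, where $\G=\{i\}$ and hence $\Ew\phi$ is literally $\Kw_i\phi$, so the paper's proof is a one-liner ($\C\Ew\phi=\C\Kw_i\phi\to\C\Kw_i\phi\vee\C\neg\Kw_i\phi$); you instead prove the multi-agent statement (handling both $\Cw_{21}$ and $\Cw_{22}$), which the paper does separately later with essentially your second and third steps. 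Your extra generality is harmless and your proof specializes correctly to $|\G|=1$.
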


\begin{proof}
Since $\vDash\C\Ew\phi\lra\C\Kw_i\phi$, whereas $\vDash\Cw_1\Kw_i\phi\lra
(\C\Kw_i\phi\vee\C\neg\Kw_i\phi)$, therefore $\vDash\C\Ew\phi\to(\C\Kw_i\phi\vee\C\neg\Kw_i\phi)$, that is, $\vDash \Cw_2\phi\to \Cw_4\phi$.
\end{proof}

\begin{proposition}\label{prop.cw3ntocw4}
$\nvDash\Cw_3\phi\to\Cw_4\phi$.
\end{proposition}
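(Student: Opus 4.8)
The plan is to exhibit a single pointed model separating $\Cw_3$ from $\Cw_4$. Since we are in the single-agent case with $\G=\{i\}$, I would first unfold both definitions into their single-agent forms. Here $\Ew\psi$ collapses to $\Kw_i\psi$, so $\Cw_3\phi=\bigwedge_{k\geq 1}(\Kw_i)^k\phi$, while $\Cw_4\phi=\C\Kw_i\phi\vee\C\neg\Kw_i\phi$. The structural observation I would lean on is the asymmetry between the two: $\Cw_3\phi$ is evaluated at the point $s$ only, through iterated $\Kw_i$ modalities, whereas $\Cw_4\phi$ embeds a $\C$ operator and therefore imposes a uniform condition across every world reachable from $s$. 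It is this asymmetry that should make a counterexample possible.

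Exploiting it, I would try to reuse the model $\mathcal{N}$ already built for Proposition~\ref{prop.cw3ntocw2}, evaluated at $s$. The key feature of that model is that $s$ has a unique successor, which makes $\Kw_i\psi$ true at $s$ for \emph{every} $\psi$; by an easy induction every iterate $(\Kw_i)^k p$ is then satisfied at $s$, so $\mathcal{N},s\vDash\Cw_3 p$ comes for free (indeed this was already established there). The remaining work is to check that $\Cw_4 p$ fails at $s$. For this I would inspect the set of worlds reachable from $s$ under $\twoheadrightarrow$: it contains $s$ itself, where $\Kw_i p$ holds, and the world $t$, where $\Kw_i p$ fails because $t$ sees two worlds disagreeing on $p$. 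The presence of $t$ defeats $\C\Kw_i p$, and the presence of $s$ (reachable from itself via the reflexive closure) defeats $\C\neg\Kw_i p$; hence neither disjunct of $\Cw_4 p$ holds at $s$.

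The step I would treat most carefully — the main obstacle, such as it is — is confirming that the very feature which makes $\Cw_3 p$ trivially true at $s$ (the single successor) does not accidentally force one of the two $\C$-conjuncts of $\Cw_4 p$ to hold. This is why I would verify explicitly that the reachable set from $s$ genuinely contains worlds witnessing both $\Kw_i p$ and $\neg\Kw_i p$, so that $\C\Kw_i p$ and $\C\neg\Kw_i p$ each has its own falsifying witness. Once both witnesses are in place the separation is complete, yielding $\mathcal{N},s\vDash\Cw_3 p$ and $\mathcal{N},s\nvDash\Cw_4 p$, and therefore $\nvDash\Cw_3\phi\to\Cw_4\phi$.
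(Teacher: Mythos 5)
Your proposal is correct and matches the paper's own proof essentially verbatim: the paper also reuses the model $\mathcal{N}$ from Proposition~\ref{prop.cw3ntocw2}, cites $t\nvDash\Kw_i p$ to refute $\C\Kw_i p$ and $s\vDash\Kw_i p$ (via reflexivity of $\twoheadrightarrow$) to refute $\C\neg\Kw_i p$. The extra care you take in checking both falsifying witnesses is exactly the content of the paper's two-line verification.
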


\begin{proof}
Consider the model $\mathcal{N}$ in the proof of Prop.~\ref{prop.cw3ntocw2} again.
On the one hand, we have shown that $\mathcal{N},s\vDash\Cw_3 p$. On the other hand, since $\mathcal{N},t\nvDash\Kw_i p$, we have that $\mathcal{N},s\nvDash\C\Kw_ip$; moreover, because $\mathcal{N},s\nvDash\neg\Kw_ip$, we infer that $\mathcal{N},s\nvDash\C\neg\Kw_ip$, which implies that $\mathcal{N},s\nvDash\Cw_4p$. Therefore, $\nvDash\Cw_3p\to \Cw_4p$.
\end{proof}

\begin{proposition}\label{prop.cw4ntocw3}
$\nvDash\Cw_4\phi\to\Cw_3\phi$. As a corollary, $\nvDash\Cw_4\phi\to\Cw_2\phi$ and $\nvDash\Cw_4\phi\to\Cw_1\phi$.
\end{proposition}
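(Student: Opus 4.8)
The goal is to show $\nvDash\Cw_4\phi\to\Cw_3\phi$. Since this is a non-validity, the plan is to exhibit a single pointed model and a concrete formula (most naturally $\phi:=p$ for a propositional variable $p$) at which $\Cw_4 p$ holds but $\Cw_3 p$ fails. Recall that we are in the single-agent case $\G=\{i\}$, so $\Ew\phi$ collapses to $\Kw_i\phi$, and hence $\Cw_3 p=\bigwedge_{k\geq 1}(\Kw_i)^k p$ and $\Cw_4 p=\C\Kw_i p\vee\C\neg\Kw_i p$. The key observation driving the construction is that $\Cw_4$ only requires the \emph{truth value} of $\Kw_i p$ to be commonly settled (either $\Kw_i p$ holds throughout the reachable part, or $\neg\Kw_i p$ holds throughout), whereas $\Cw_3$ demands that the agent actually knows whether $p$ at every depth of iterated introspection.

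The natural approach is to build a model in which, from the evaluation point $s$, the agent \emph{commonly} fails to know whether $p$ — that is, $\C\neg\Kw_i p$ holds at $s$ — which immediately satisfies $\Cw_4 p$ via its second disjunct. At the same time I want $\Cw_3 p$ to fail, which happens as soon as one of the iterated conjuncts $(\Kw_i)^k p$ is false at $s$; the first conjunct $\Kw_i p$ already fails whenever $s$ has two successors disagreeing on $p$. So I would take a model where $s$ can see (at least) two worlds, one satisfying $p$ and one satisfying $\neg p$, forcing $s\nvDash\Kw_i p$ and thus $s\nvDash\Cw_3 p$. The point is then to arrange the accessibility relation so that \emph{every} world reachable from $s$ likewise fails to know whether $p$, giving $\C\neg\Kw_i p$.

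A clean way to achieve $\C\neg\Kw_i p$ is to make the relevant reachable part of the model highly symmetric with respect to $p$, so that no world ever knows whether $p$. For instance, a two-point model $\{s,t\}$ with $s$ and $t$ each able to reach both a $p$-world and a $\neg p$-world — so that neither $s$ nor $t$ satisfies $\Kw_i p$ — and with the relation arranged so that the worlds reachable from $s$ are exactly $\{s,t\}$ (or a set all of whose members fail $\Kw_i p$). Then at every reachable world $\Kw_i p$ is false, so $\neg\Kw_i p$ is true everywhere reachable, giving $\M,s\vDash\C\neg\Kw_i p$ and hence $\M,s\vDash\Cw_4 p$, while $\M,s\nvDash\Kw_i p$ already kills $\Cw_3 p$. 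The corollaries $\nvDash\Cw_4\phi\to\Cw_2\phi$ and $\nvDash\Cw_4\phi\to\Cw_1\phi$ then follow for free: by Proposition~\ref{prop.Cw1toCw2} we have $\vDash\Cw_1\phi\to\Cw_3\phi$ and $\vDash\Cw_2\phi\to\Cw_3\phi$, so a countermodel to $\Cw_4\to\Cw_3$ is automatically a countermodel to $\Cw_4\to\Cw_2$ and $\Cw_4\to\Cw_1$.

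The main obstacle is the tension between the two requirements at a single point: $\Cw_4$ wants the \emph{common} (reflexive–transitive closure) behaviour of $\Kw_i p$ to be uniform, while the failure of $\Cw_3$ wants some iterated $\Kw$-conjunct to break. These pull in compatible directions only if I make $\Kw_i p$ uniformly \emph{false} throughout the reachable region rather than uniformly true — so I must take care to use the $\C\neg\Kw_i p$ disjunct of $\Cw_4$, not the $\C\Kw_i p$ one, and to verify that the closure $\twoheadrightarrow$ from $s$ does not accidentally reach a world where $\Kw_i p$ holds (e.g.\ a world with at most one successor, which would vacuously satisfy $\Kw_i p$). Concretely I will double-check that every world in the reachable set has two successors disagreeing on $p$, so that $\neg\Kw_i p$ genuinely holds everywhere the closure can go.
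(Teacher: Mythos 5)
Your proposal is correct and matches the paper's own proof: the paper uses exactly the two-world model you describe ($s\colon p$ and $t\colon\neg p$, each with a reflexive loop and an arrow to the other, so that every reachable world sees both a $p$-world and a $\neg p$-world), establishes $\C\neg\Kw_i p$ at $s$ to get $\Cw_4 p$ via the second disjunct, and kills $\Cw_3 p$ by noting $s\nvDash\Kw_i p$. The corollaries follow just as you say, from the already-established implications $\Cw_1\to\Cw_3$ and $\Cw_2\to\Cw_3$.
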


\begin{proof}
Consider the following model:
\[
\xymatrix{s:p\ar@(ul,ur)|i\ar[rr]|i&&t:\neg p\ar@(ur,ul)|i\ar[ll]|i}
\]
It is easy to check that $s\vDash\neg\Kw_i p$ and $t\vDash\neg\Kw_i p$, from which we can obtain $s\vDash\C\neg\Kw_i p$. Then $s\vDash\Cw_4 p$.

However, since $s\nvDash\Kw_ip$, we have $s\nvDash\Ew p$, i.e. $s\nvDash\Cw_3 p$. Therefore, $\nvDash \Cw_4p\to \Cw_3p$. 
\end{proof}

From the proofs of the Props.~\ref{prop.cw2tocw4} and~\ref{prop.cw4ntocw3}, it follows that even on $\mathcal{S}5$-models, $\Cw_4$ is not logically equivalent to $\Cw_1$, $\Cw_2$ and $\Cw_3$. 

\begin{proposition}
$\vDash\Cw_3\phi\lra \Cw_5\phi$.
\end{proposition}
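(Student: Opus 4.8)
The plan is to exploit the fact that we are in the single-agent case, where both notions collapse to the very same infinite conjunction of iterated $\Kw_i$'s. First I would record two reductions. Since $\G=\{i\}$, the only finite nonempty sequences over $\G$ are $i, ii, iii, \dots$, i.e.\ the powers $i^n$ with $n\geq 1$; hence $\Kw_s\phi$ is just $(\Kw_i)^n\phi$, and so $\Cw_5\phi=\bigwedge_{n\geq 1}(\Kw_i)^n\phi$. On the other side, by Proposition~\ref{prop.single-agent-equiv} (together with $\vDash\E\psi\lra\K_i\psi$ and $\vDash\E\neg\psi\lra\K_i\neg\psi$ in the single-agent case) we have $\vDash\Ew\chi\lra\Kw_i\chi$ for every formula $\chi$, independently of which definition of $\Ew$ is used.

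Next I would show $\vDash(\Ew)^k\phi\lra(\Kw_i)^k\phi$ for every $k\geq 1$ by induction on $k$. The base case $k=1$ is exactly $\vDash\Ew\phi\lra\Kw_i\phi$. For the inductive step, I would invoke a replacement-of-equivalents (congruence) property for $\Kw_i$: if $\vDash\psi\lra\chi$ then $\vDash\Kw_i\psi\lra\Kw_i\chi$. This is immediate from $\Kw_i\psi\equiv\K_i\psi\vee\K_i\neg\psi$ together with the normality of $\K_i$, since $\vDash\psi\lra\chi$ yields $\vDash\neg\psi\lra\neg\chi$ and hence both $\vDash\K_i\psi\lra\K_i\chi$ and $\vDash\K_i\neg\psi\lra\K_i\neg\chi$. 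Writing $(\Ew)^{k+1}\phi=\Ew((\Ew)^k\phi)$, the reduction above gives $\vDash(\Ew)^{k+1}\phi\lra\Kw_i((\Ew)^k\phi)$, and applying the induction hypothesis under the outermost $\Kw_i$ via congruence yields $\vDash\Kw_i((\Ew)^k\phi)\lra(\Kw_i)^{k+1}\phi$; chaining the two equivalences closes the step.

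Finally I would assemble the two infinite conjunctions term by term: $\Cw_3\phi=\bigwedge_{k\geq 1}(\Ew)^k\phi$ is, conjunct by conjunct, equivalent to $\bigwedge_{k\geq 1}(\Kw_i)^k\phi=\Cw_5\phi$, whence $\vDash\Cw_3\phi\lra\Cw_5\phi$.

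I do not expect a genuine obstacle here: the content of the proposition is really the observation that in the single-agent case the nested-conjunction phenomenon that would separate $\Cw_3$ from $\Cw_5$ in the multi-agent case disappears, because $\Ew$ degenerates to a single $\Kw_i$ and the agent-sequences degenerate to powers of $i$. The only point deserving care is that $\Kw_i$ is a non-normal, merely congruential, modality, so substitution under it must be justified by the replacement-of-equivalents property rather than by any distribution over conjunction --- indeed it is precisely the failure of $\Kw_i$ to distribute over conjunction that would be expected to break the analogous equivalence once $|\G|>1$.
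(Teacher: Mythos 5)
Your proposal is correct and follows essentially the same route as the paper: reduce $\Ew$ to $\Kw_i$ in the single-agent case, observe that the agent-sequences in $\Cw_5$ degenerate to powers of $i$, and match the two infinite conjunctions term by term. The paper compresses the step $\vDash(\Ew)^k\phi\lra(\Kw_i)^k\phi$ into a single sentence; your explicit induction with the replacement-of-equivalents property for $\Kw_i$ is exactly the justification that step tacitly relies on.
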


\begin{proof}
Since in the case that $\G=\{i\}$, $\vDash\Ew\phi\lra\Kw_i\phi$, we have $\vDash \Cw_3\phi\lra \bigwedge_{k\geq 1}(\Kw_i)^k\phi$. Moreover, $\vDash \Cw_5\phi\lra (\Kw_i\phi\land\Kw_i\Kw_i\phi\land\cdots)$. So, $\vDash\Cw_3\phi\lra \Cw_5\phi$.
\end{proof}

Summarize the main results of this subsection as follows.

\begin{theorem}\label{th.12345k}
In the single-agent case, the logical relationships among $\Cw_1$, $\Cw_2$, $\Cw_3$, $\Cw_4$ and $\Cw_5$ are shown in Fig.~\ref{fig.single}, where an arrow from one operator $O$ to another $O'$ means that for all $\phi$, $O\phi\to O'\phi$ is valid over the class of frames in question, and the arrows are transitive.
\end{theorem}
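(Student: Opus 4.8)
The plan is to assemble the theorem's Hasse/implication diagram directly from the individual propositions already proved in this subsection, so the proof is essentially a bookkeeping exercise rather than a fresh argument. The strategy is to treat each proposition as one directed edge (a valid implication) or one non-edge (a refuting countermodel) in the graph on the five operators, and then invoke transitivity to fill in any implications that follow by composition.

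First I would record the positive edges. From Proposition~\ref{prop.Cw1toCw2} we obtain $\Cw_1\to\Cw_2$, $\Cw_2\to\Cw_3$, and hence $\Cw_1\to\Cw_3$; from Proposition~\ref{prop.cw2tocw4} we obtain $\Cw_2\to\Cw_4$ together with its corollary $\Cw_1\to\Cw_4$; and from the final proposition we obtain the equivalence $\Cw_3\lra\Cw_5$, which lets us identify $\Cw_3$ and $\Cw_5$ as a single node in the diagram. Next I would record the negative edges that establish the diagram is complete, i.e. that no additional arrows hold: Proposition~\ref{prop.Cw2ntoCw1} gives $\nvDash\Cw_2\to\Cw_1$ (so the $\Cw_1$--$\Cw_2$ implication is strict), Proposition~\ref{prop.cw3ntocw2} gives $\nvDash\Cw_3\to\Cw_2$ (and hence $\nvDash\Cw_3\to\Cw_1$), Proposition~\ref{prop.cw3ntocw4} gives $\nvDash\Cw_3\to\Cw_4$, and Proposition~\ref{prop.cw4ntocw3} gives $\nvDash\Cw_4\to\Cw_3$ (with corollaries $\nvDash\Cw_4\to\Cw_2$ and $\nvDash\Cw_4\to\Cw_1$). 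Because $\Cw_3$ and $\Cw_5$ are equivalent, every non-implication stated for $\Cw_3$ transfers verbatim to $\Cw_5$, so no separate refutations for $\Cw_5$ are needed.

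The main (and really the only) subtlety is a completeness check rather than a soundness check: I must verify that the arrows drawn in Fig.~\ref{fig.single}, closed under transitivity, coincide exactly with the set of valid implications, neither asserting an implication that some countermodel refutes nor omitting one that the propositions establish. Concretely, the delicate point is confirming that $\Cw_4$ sits incomparably with the $\Cw_1$/$\Cw_2$/$\Cw_3$ chain: Propositions~\ref{prop.cw2tocw4} and~\ref{prop.cw4ntocw3} jointly show $\Cw_2\to\Cw_4$ holds but $\Cw_4\to\Cw_2$ fails, while Propositions~\ref{prop.cw3ntocw4} and~\ref{prop.cw4ntocw3} show $\Cw_3$ and $\Cw_4$ are mutually independent, so $\Cw_4$ is strictly below $\Cw_1$ and $\Cw_2$ but incomparable with $\Cw_3\equiv\Cw_5$. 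I would finish by reading off the figure and remarking that, since every edge is either one of the cited valid implications or a transitive composite thereof, and every missing edge is blocked by one of the cited countermodels, the diagram faithfully captures all and only the valid implications in the single-agent case.

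Since all the work has already been done in the preceding propositions, the proof itself can be stated in one or two sentences: the validities are exactly Propositions~\ref{prop.Cw1toCw2},~\ref{prop.cw2tocw4}, and the $\Cw_3\lra\Cw_5$ equivalence together with their transitive consequences, and the separations are exactly Propositions~\ref{prop.Cw2ntoCw1},~\ref{prop.cw3ntocw2},~\ref{prop.cw3ntocw4}, and~\ref{prop.cw4ntocw3}; collecting these yields Fig.~\ref{fig.single}.
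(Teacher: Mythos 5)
Your proposal is correct and matches the paper's own treatment: the paper states Theorem~\ref{th.12345k} explicitly as a summary of the preceding propositions (the positive arrows from Propositions~\ref{prop.Cw1toCw2} and~\ref{prop.cw2tocw4} plus the $\Cw_3\lra\Cw_5$ equivalence, and the separations from Propositions~\ref{prop.Cw2ntoCw1},~\ref{prop.cw3ntocw2},~\ref{prop.cw3ntocw4}, and~\ref{prop.cw4ntocw3}), exactly as you assemble them. Your additional bookkeeping check that the transitive closure of the drawn arrows coincides with the set of valid implications is the right sanity check, and nothing further is needed.
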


\begin{figure}
\[
\xymatrix@R-10pt{\Cw_1\ar[r]&\Cw_2\ar[r]\ar[d]&\Cw_3\ar[d]\\
&\Cw_4&\Cw_5\ar[u]\\}
\]
\caption{Logical relationships in the single-agent case}~\label{fig.single}
\end{figure}

\subsection{Multi-agent Cases}

Now we move on to the multi-agent cases, that is, the cases when $|\G|>1$. As explained before, it is possible that some agents knows that $\phi$ is true but others know that $\phi$ is false, thus $\Ew_1$ is stronger than $\Ew_2$.

\begin{proposition}\label{prop.ew1toew2}
The following statements hold:
\begin{itemize}
\item[(a)]$\vDash \Ew_1\phi\to \Ew_2\phi$; 
\item[(b)]$\nvDash \Ew_2\phi\to \Ew_1\phi$.
\end{itemize}
\end{proposition}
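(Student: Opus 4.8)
The proposition has two parts, so I will tackle them separately.

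For part (a), the claim is that $\Ew_1\phi \to \Ew_2\phi$ is valid. Recall that $\Ew_1\phi := \E\phi \vee \E\neg\phi$ and $\Ew_2\phi := \bigwedge_{i\in\G}\Kw_i\phi$. The plan is to reason semantically at an arbitrary pointed model $(\M,w)$. Suppose $\M,w\vDash \Ew_1\phi$, so either $\M,w\vDash\E\phi$ or $\M,w\vDash\E\neg\phi$. In the first case, every world $u$ with $w\to u$ satisfies $\phi$; in particular, for each agent $i$, every $R_i$-successor of $w$ satisfies $\phi$ (since $R_i\subseteq\to$), so any two $R_i$-successors agree on $\phi$, giving $\M,w\vDash\Kw_i\phi$ for all $i$, hence $\M,w\vDash\Ew_2\phi$. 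The second case ($\E\neg\phi$) is symmetric: all $R_i$-successors satisfy $\neg\phi$, so they still agree on the truth value of $\phi$. This establishes the implication, and it holds over all frames. This is essentially the single-agent observation already noted informally in the text that ``$\Ew_1$ is stronger than $\Ew_2$'', now spelled out for the multi-agent accessibility relations.

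For part (b), I must exhibit a countermodel witnessing $\nvDash\Ew_2\phi\to\Ew_1\phi$. The intuition, already flagged in the paper, is that $\Ew_2$ can hold when different agents have opposite but individually decisive information: one agent knows $\phi$ is true while another knows $\phi$ is false. The plan is to take two agents, say $\G=\{i,j\}$, and a model where from the evaluation world $w$, agent $i$'s accessible worlds all satisfy $\phi$ and agent $j$'s accessible worlds all satisfy $\neg\phi$ (or vice versa). Concretely I would use a three-world model: $w$ with $w R_i u$ and $w R_j v$, where $\M,u\vDash\phi$ and $\M,v\vDash\neg\phi$. Then $\M,w\vDash\Kw_i\phi$ (all $i$-successors, here just $u$, agree on $\phi$) and $\M,w\vDash\Kw_j\phi$ (all $j$-successors, here just $v$, agree on $\phi$), so $\M,w\vDash\Ew_2\phi$. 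But $\M,w\nvDash\E\phi$ because $v$ is a $\to$-successor failing $\phi$, and $\M,w\nvDash\E\neg\phi$ because $u$ is a $\to$-successor failing $\neg\phi$; hence $\M,w\nvDash\Ew_1\phi$.

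The only genuine subtlety is which frame class to draw the countermodel over: if we want the separation to persist on the stronger classes (reflexive, $\mathcal{S}5$, etc.), the single-point-successor trick does not immediately survive reflexivity, so I would check whether the authors intend $\mathcal{K}$ or a richer class here. I expect part (b) to be stated over $\mathcal{K}$, where the simple three-world model above suffices and no obstacle arises; the main care is just to verify both $\Kw_i\phi$ and $\Kw_j\phi$ hold vacuously-or-trivially given the single successor per agent, and that the union relation $\to$ reaches both $u$ and $v$ so that $\Ew_1$ genuinely fails. The overall argument is routine model-checking once the model is drawn, so I would present it compactly with a diagram analogous to those used in Propositions~\ref{prop.Cw2ntoCw1} and~\ref{prop.cw3ntocw2}.
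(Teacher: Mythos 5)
Your proof is correct and follows essentially the same route as the paper: part (a) is the routine semantic observation that $R_i\subseteq\to$, and part (b) uses a countermodel in which each of two agents has a single successor, one satisfying $p$ and the other $\neg p$ (the paper's version merely realizes agent $i$'s successor as a reflexive loop at the evaluation world instead of a third world). No gaps.
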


\begin{proof}
(a) is straightforward by definitions of $\Ew_1$ and $\Ew_2$ and the semantics. For (b), consider the following model $\M$: \[\xymatrix{s:p\ar@(ul,ur)|{i}\ar[rr]|j&& t:\neg p}\]

It is clear that $\M,s\vDash \Kw_ip\land\Kw_jp$, thus $\M,s\vDash\Ew_2p$. However, $\M,s\nvDash\E p$ and $\M,s\nvDash\E \neg p$, thus $\M,s\nvDash\Ew_1p$. Therefore, $\nvDash\Ew_2 p\to\Ew_1p$.
\end{proof}

Due to Prop.~\ref{prop.ew1toew2}, in the multi-agent cases, we need to distinguish between $\Ew_1$ and $\Ew_2$, and thus need to distinguish between $\Cw_{21}$ and $\Cw_{22}$, and also between $\Cw_{31}$ and $\Cw_{32}$.

The following is the main result in this part.

\begin{theorem}\label{th.12345km}

In the multi-agent cases, the logical relationships among $\Cw_1$, $\Cw_{21}$, $\Cw_{22}$, $Cw_{31}$, $\Cw_{32}$, $\Cw_4$ and $\Cw_5$ are shown in Fig.~\ref{fig.one}.
\end{theorem}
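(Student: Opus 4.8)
The plan is to read Fig.~\ref{fig.one} as a Hasse-style diagram and discharge it in two halves: establish every arrow by a validity argument and refute every missing arrow by an explicit pointed model (writing $\Cw_{2x},\Cw_{3x}$ for $x\in\{1,2\}$). Throughout I would lean on a small toolkit of validities: $\C$ is monotone and distributes over conjunction; $\vDash\E\psi\to\Ew_1\psi$ and $\vDash\E\psi\to\Ew_2\psi$; $\vDash\Ew_1\psi\to\Ew_2\psi$ (Prop.~\ref{prop.ew1toew2}(a)) and $\vDash\Ew_1\psi\to\Kw_i\psi$; and, crucially, the one-way conjunction law $\vDash\Kw_i\alpha\wedge\Kw_i\beta\to\Kw_i(\alpha\wedge\beta)$ (hence also $\vDash\Ew_2\alpha\wedge\Ew_2\beta\to\Ew_2(\alpha\wedge\beta)$). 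The first layer of arrows is then routine: $\Cw_1\to\Cw_{21}$ mimics the single-agent $\Cw_1\to\Cw_2$ (using $\E\psi\to\Ew_1\psi$ and $\Ew_1\phi\lra\Ew_1\neg\phi$); $\Cw_{21}\to\Cw_{22}$ is immediate from $\Ew_1\to\Ew_2$ under $\C$; and $\Cw_{21}\to\Cw_4$, $\Cw_{22}\to\Cw_4$ follow exactly as in Prop.~\ref{prop.cw2tocw4}, using $\Ew_1\phi\to\Kw_i\phi$ and $\Ew_2\phi\to\Kw_i\phi$ respectively together with the distributivity of $\C$.

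For the iterate operators I would use a persistence argument. To get $\Cw_{2x}\to\Cw_{3x}$ I would show by induction on $k$ that $\vDash\C\Ew_x\phi\to\C(\Ew_x)^k\phi$: if $(\Ew_x)^k\phi$ holds at every reachable world then at each such world all $\to$-successors satisfy it, so $\E(\Ew_x)^k\phi$ holds there, and since $\vDash\E\chi\to\Ew_x\chi$ this yields $(\Ew_x)^{k+1}\phi$ there as well. The same persistence idea gives the one genuinely new positive fact about $\Cw_5$, namely $\Cw_{22}\to\Cw_5$: from $\C\Ew_2\phi$ every conjunct $\Kw_i\phi$ is true at every reachable world, and then an induction on the length of a sequence $s$ shows each $\Kw_s\phi$ holds everywhere reachable (at any reachable $v$ all $s_1$-successors satisfy the shorter suffix, so they trivially agree on it). Thus $\Cw_{22}\to\bigwedge_{s\in\G^+}\Kw_s\phi=\Cw_5$.

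For the non-arrows I would proceed in two steps. First, every non-implication already witnessed in the single-agent case lifts: given a single-agent counterexample I turn it into a $\G$-model by letting all agents share the one relation, so that $\Ew_1$ and $\Ew_2$ collapse (hence $\Cw_{21}$ with $\Cw_{22}$, $\Cw_{31}$ with $\Cw_{32}$, and $\Cw_5$ with $\Cw_{3x}$) and the point evaluates exactly as before; this recovers, for instance, $\nvDash\Cw_4\phi\to\Cw_{32}\phi$, $\nvDash\Cw_{3x}\phi\to\Cw_{2x}\phi$ and $\nvDash\Cw_{2x}\phi\to\Cw_1\phi$. Second, to separate the version-1 operators from the version-2 ones I would build genuinely multi-agent models of the shape used in Prop.~\ref{prop.ew1toew2}, in which two agents know opposite truth values of $p$: placed inside a reflexive-transitive ($\C$-closed) context such a model makes $\C\Ew_2 p$ true but $\Ew_1 p$ false at the root, witnessing $\nvDash\Cw_{22}\phi\to\Cw_{21}\phi$ and $\nvDash\Cw_{32}\phi\to\Cw_{31}\phi$ at once, while a further variant separates $\Cw_5$ from $\Cw_{22}$.

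The hard part will be the placement of $\Cw_5$ relative to the iterated operators, and more generally every comparison that must cross the $\Ew_1/\Ew_2$ or $\Kw$-nesting barrier, because $\Kw_i$ -- and hence $\Ew_1$ and $\Ew_2$ -- is not monotone. Non-monotonicity blocks the naive term-by-term proofs of $\Cw_{31}\to\Cw_{32}$ and of either direction between $\Cw_5$ and $\Cw_{32}$: the one-way conjunction law lets me push a conjunction under a single $\Kw_i$, but not under a chain $\Kw_{s_1}\cdots\Kw_{s_m}$, which is exactly what relating $\bigwedge_{s\in\G^+}\Kw_s\phi$ to the nested $(\Ew_2)^k\phi$ would require. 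I therefore expect these pairs to come out incomparable, and the delicate construction will be engineering a model in which a single deeply-nested $\Kw$-combination fails while the entire infinite conjunction defining $\Cw_5$ (or a prescribed initial segment of the $(\Ew_2)^k$) stays true; breaking one conjunct without collapsing the others is the crux. The remaining effort is bookkeeping: checking that, once the covering arrows and these separations are in place, no further arrow is forced by transitivity, so that Fig.~\ref{fig.one} is exactly complete.
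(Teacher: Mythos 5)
Your positive half is correct and is essentially the paper's own argument: the arrows $\Cw_1\to\Cw_{21}\to\Cw_{22}$, $\Cw_{2x}\to\Cw_{3x}$, $\Cw_{22}\to\Cw_4$ and $\Cw_{22}\to\Cw_5$ are established there by exactly the monotonicity/persistence arguments and the induction on $|s|$ that you describe. Your lifting device (turn a single-agent countermodel into a $\G$-model by letting all agents share one relation, so that $\Ew_1$ and $\Ew_2$, and hence $\Cw_{21}/\Cw_{22}$, $\Cw_{31}/\Cw_{32}$ and $\Cw_5/\Cw_{3x}$, collapse) is sound and covers the non-implications that already fail with one agent, and your two-agent ``opposite ignorance'' model in the style of Prop.~\ref{prop.ew1toew2} does kill $\Cw_{22}\to\Cw_{31}$, $\Cw_{22}\to\Cw_{21}$ and $\Cw_{32}\to\Cw_{31}$.

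The genuine gap is that the separations you yourself identify as the crux are left as conjectures rather than proved: you never exhibit models witnessing $\nvDash\Cw_{31}\phi\to\Cw_{32}\phi$, $\nvDash\Cw_{32}\phi\to\Cw_5\phi$, $\nvDash\Cw_5\phi\to\Cw_{32}\phi$, $\nvDash\Cw_{31}\phi\to\Cw_5\phi$ and $\nvDash\Cw_5\phi\to\Cw_{31}\phi$. These five facts are precisely what makes the multi-agent diagram differ from the single-agent one (where $\Cw_3\lra\Cw_5$ and $\Cw_{31}\lra\Cw_{32}$ trivially), and they absorb essentially all the work in the paper's proof (Props.~\ref{prop.cw31ntocw32}, \ref{prop.cw32ntocw31}, \ref{prop.cw32ntocw5}, \ref{prop.cw5ntocw32}, \ref{prop.cw31ntocw5} and their companions, each built on a bespoke finite tree of five to ten worlds). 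Your diagnosis of \emph{why} they should fail is exactly right: the one-way law $\Kw_i\alpha\wedge\Kw_i\beta\to\Kw_i(\alpha\wedge\beta)$ pushes a conjunction under one $\Kw_i$ but not under a chain, so in fact $\Cw_5\phi\to(\Ew_2)^2\phi$ is still valid and any countermodel to $\Cw_5\to\Cw_{32}$ must be engineered to fail at nesting depth at least $3$ --- which is just what the paper's model does, refuting $(\Ew_2)^3p$ while keeping every $\Kw_s p$ true by making the deep worlds either terminal or uniformly ignorant. But locating the obstruction is not the same as exhibiting the model; breaking one deeply nested conjunct without collapsing the infinite conjunction defining $\Cw_5$ (or the prescribed $(\Ew_x)^k$ prefix) is exactly the part that cannot be waved through, and until those trees are written down and verified the theorem is not proved.
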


\begin{figure}
\[
\xymatrix@R-10pt{  &&& \Cw_5\\ \Cw_1 \ar[r] & \Cw_{21}\ar[r]\ar[dr]
& \Cw_{22} \ar[ur] \ar[dr] \ar[r] & \Cw_4 \\ && \Cw_{31}  & \Cw_{32}}
\]

\caption{Logical relationships in the multi-agent cases}~\label{fig.one}
\end{figure}

\begin{proposition}\label{prop.Cw-implies}
$\vDash \Cw_1\phi\to\Cw_{21}\phi$, $\vDash\Cw_{21}\phi\to\Cw_{22}\phi$. Consequently, $\vDash \Cw_1\phi\to\Cw_{22}\phi$.
\end{proposition}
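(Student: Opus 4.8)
The plan is to reduce everything to two general laws for the common-knowledge operator $\C$ recorded at the beginning of this section — namely $\vDash\C\phi\to\C\E\phi$ and the monotonicity rule ``$\vDash\phi\to\psi$ implies $\vDash\C\phi\to\C\psi$'' — together with the already-established implication $\vDash\Ew_1\phi\to\Ew_2\phi$ of Prop.~\ref{prop.ew1toew2}(a). The whole argument is the multi-agent analogue of the single-agent Prop.~\ref{prop.Cw1toCw2}.

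First I would prove $\vDash\Cw_1\phi\to\Cw_{21}\phi$, i.e. $\vDash(\C\phi\vee\C\neg\phi)\to\C\Ew_1\phi$. The key observation is that both $\E\phi$ and $\E\neg\phi$ occur as disjuncts of $\Ew_1\phi:=\E\phi\vee\E\neg\phi$, so that $\vDash\E\phi\to\Ew_1\phi$ and $\vDash\E\neg\phi\to\Ew_1\phi$ hold trivially. Chaining $\vDash\C\phi\to\C\E\phi$ with monotonicity applied to $\vDash\E\phi\to\Ew_1\phi$ gives $\vDash\C\phi\to\C\Ew_1\phi$; running the same two steps with $\neg\phi$ in place of $\phi$ gives $\vDash\C\neg\phi\to\C\Ew_1\phi$. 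Since both disjuncts entail the common consequent $\C\Ew_1\phi$, propositional disjunction-elimination yields $\vDash(\C\phi\vee\C\neg\phi)\to\C\Ew_1\phi$, which is exactly $\vDash\Cw_1\phi\to\Cw_{21}\phi$.

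For $\vDash\Cw_{21}\phi\to\Cw_{22}\phi$, i.e. $\vDash\C\Ew_1\phi\to\C\Ew_2\phi$, I would simply feed the validity $\vDash\Ew_1\phi\to\Ew_2\phi$ of Prop.~\ref{prop.ew1toew2}(a) into the monotonicity rule for $\C$. The final consequence $\vDash\Cw_1\phi\to\Cw_{22}\phi$ then follows by transitivity of implication. I do not expect a genuine obstacle here: the statement is a routine composition of $\C$-monotonicity with the two auxiliary facts. The only point requiring a moment's care is the disjunctive step in the first part, where one must check that \emph{both} $\C\phi$ and $\C\neg\phi$ drive the same consequent $\C\Ew_1\phi$; this is immediate once one notes that $\Ew_1\phi$ is symmetric in $\phi$ and $\neg\phi$, so that $\E\neg\phi$ is already one of its disjuncts and the two branches land in the same place.
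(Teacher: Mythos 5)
Your proof is correct and follows essentially the same route as the paper: the paper derives the first implication by citing the argument of Prop.~\ref{prop.Cw1toCw2} (chaining $\vDash\C\phi\to\C\E\phi$ with $\C$-monotonicity on $\vDash\E\phi\to\Ew_1\phi$ for each disjunct), and obtains the second by applying $\C$-monotonicity to Prop.~\ref{prop.ew1toew2}(a), exactly as you do. The only cosmetic difference is that you note $\E\neg\phi$ is itself a disjunct of $\Ew_1\phi$, whereas the paper routes the $\C\neg\phi$ branch through $\C\Ew_1\neg\phi$ and the equivalence $\Ew_1\phi\lra\Ew_1\neg\phi$; both land in the same place.
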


\begin{proof}
The first part is immediate from Prop.~\ref{prop.Cw1toCw2}.

Moreover, as $\vDash \Ew_1\phi\to\Ew_2\phi$ (by Prop.~\ref{prop.ew1toew2} (a)), thus $\vDash \C\Ew_1\phi\to\C\Ew_2\phi$, that is, $\vDash \Cw_{21}\phi\to\Cw_{22}\phi$.
\end{proof}

\begin{proposition}
$\vDash \Cw_{21}\phi\to\Cw_{31}\phi$, $\vDash \Cw_{22}\phi\to\Cw_{32}\phi$. 
As a consequence, $\vDash\Cw_{21}\phi\to\Cw_{32}\phi$.
\end{proposition}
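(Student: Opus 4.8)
The two stated implications have identical structure, so I would prove them uniformly. Writing $\Ew$ for either $\Ew_1$ or $\Ew_2$, we have $\Cw_{2j}\phi=\C\Ew_j\phi$ and $\Cw_{3j}\phi=\bigwedge_{k\geq 1}(\Ew_j)^k\phi$, so it suffices to establish the single schema $\vDash\C\Ew\phi\to(\Ew)^k\phi$ for every $k\geq 1$; instantiating $\Ew$ as $\Ew_1$ and as $\Ew_2$ then yields $\vDash\Cw_{21}\phi\to\Cw_{31}\phi$ and $\vDash\Cw_{22}\phi\to\Cw_{32}\phi$ respectively. The final consequence is immediate: by Prop.~\ref{prop.Cw-implies} we have $\vDash\Cw_{21}\phi\to\Cw_{22}\phi$, and chaining this with the second implication gives $\vDash\Cw_{21}\phi\to\Cw_{32}\phi$.

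The first preparatory step is to record the uniform fact $\vDash\E\psi\to\Ew\psi$ for both readings. For $\Ew_1$ this is trivial since $\Ew_1\psi=\E\psi\vee\E\neg\psi$. For $\Ew_2=\bigwedge_{i\in\G}\Kw_i\psi$, note that $\E\psi$ forces every $\to$-successor, hence every $R_i$-successor since $\to=\bigcup_{i}R_i$, to satisfy $\psi$; any two $R_i$-successors then agree on $\psi$, so $\Kw_i\psi$ holds for each $i$. The second and crucial step is to lift this to common knowledge: for arbitrary $\chi$ we have $\vDash\C\chi\to\C\Ew\chi$. This follows by chaining the two facts from the proposition at the start of Section~\ref{sec.implicationrelations}, namely $\vDash\C\chi\to\C\E\chi$ and the rule that $\vDash\chi\to\xi$ entails $\vDash\C\chi\to\C\xi$: applying the latter to $\vDash\E\chi\to\Ew\chi$ gives $\vDash\C\E\chi\to\C\Ew\chi$, and composing yields $\vDash\C\chi\to\C\Ew\chi$.

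With these in hand I would prove, by induction on $k\geq 1$, the strengthened claim $\vDash\C\Ew\phi\to\C(\Ew)^k\phi$. The base case $k=1$ is trivial. For the step, the induction hypothesis gives $\vDash\C\Ew\phi\to\C(\Ew)^k\phi$, and applying the lemma $\vDash\C\chi\to\C\Ew\chi$ with $\chi=(\Ew)^k\phi$ yields $\vDash\C(\Ew)^k\phi\to\C\Ew((\Ew)^k\phi)=\C(\Ew)^{k+1}\phi$; composing proves the claim for $k+1$. Finally, since $\twoheadrightarrow$ is reflexive we have $\vDash\C\psi\to\psi$, so $\vDash\C(\Ew)^k\phi\to(\Ew)^k\phi$; combined with the strengthened claim this delivers $\vDash\C\Ew\phi\to(\Ew)^k\phi$ for every $k\geq 1$, hence $\vDash\C\Ew\phi\to\Cw_3\phi$, as required.

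The main obstacle to a more direct argument is that $\Ew$ is not monotone: because the ``whether'' reading makes $\Ew_1\psi$ depend on $\neg\psi$ as well as on $\psi$, one cannot simply push the implication $\E\psi\to\Ew\psi$ underneath the nested prefix $(\Ew)^k$. Strengthening the induction hypothesis to the $\C$-prefixed form $\C(\Ew)^k\phi$ is precisely what sidesteps this difficulty, since each inductive step then only ever invokes the monotonicity of $\C$, which is available, rather than any monotonicity of $\Ew$ itself.
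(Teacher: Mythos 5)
Your proof is correct and follows essentially the same route as the paper, whose proof of this proposition simply points back to the argument for $\vDash\Cw_2\phi\to\Cw_3\phi$ in Prop.~\ref{prop.Cw1toCw2}: everything rests on $\vDash\E\psi\to\Ew\psi$ lifted through $\C$ via the two facts recorded at the start of Section~\ref{sec.implicationrelations}. Your strengthened induction hypothesis $\C(\Ew)^k\phi$ is just a cleaner bookkeeping of the paper's unfolding $\C\Ew\phi=\Ew\phi\land\E\Ew\phi\land\cdots$ (where the $\E$'s are replaced by $\Ew$'s from the inside out using monotonicity of $\E$, so the non-monotonicity of $\Ew$ you flag never actually has to be confronted there either); the underlying argument is the same.
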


\begin{proof}
Similar to the proof in Prop.~\ref{prop.Cw1toCw2}.

\end{proof}

\begin{proposition}
$\nvDash\Cw_{22}\phi\to\Cw_{31}\phi$. As a consequence, $\nvDash\Cw_{22}\phi\to\Cw_{21}\phi$.
\end{proposition}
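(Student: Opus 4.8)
The plan is to exhibit a single pointed model that satisfies $\Cw_{22}\phi$ but refutes $\Cw_{31}\phi$. The guiding observation is that $\Cw_{22}\phi=\C\Ew_2\phi$ holds at a world iff $\Ew_2\phi$ holds at every world reachable from it under $\twoheadrightarrow$, whereas $\Cw_{31}\phi=\bigwedge_{k\geq 1}(\Ew_1)^k\phi$ already fails as soon as its first conjunct $(\Ew_1)^1\phi=\Ew_1\phi$ fails. So it suffices to find a model in which $\Ew_2 p$ is true throughout the reachable part but $\Ew_1 p$ fails at the evaluation point. This is exactly the $\Ew_1$-versus-$\Ew_2$ gap isolated in Prop.~\ref{prop.ew1toew2}(b): each agent may individually know whether $p$ without everyone sharing a common verdict on $p$.

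Concretely, I would reuse the two-agent model from the proof of Prop.~\ref{prop.ew1toew2}(b),
\[\xymatrix{s:p\ar@(ul,ur)|{i}\ar[rr]|j&& t:\neg p}\]
evaluated at $s$. First I would verify $\Cw_{22}p$ by checking $\Ew_2 p$ at both reachable worlds. At $s$ the only $i$-successor is $s$ and the only $j$-successor is $t$, so each agent's successor set is a singleton and hence trivially agrees on the truth value of $p$, giving $\Kw_i p\wedge\Kw_j p$. At $t$ there are no successors at all, so $\Kw_i p$ and $\Kw_j p$ both hold vacuously. Since the reachable set from $s$ is exactly $\{s,t\}$, this yields $s\vDash\C\Ew_2 p$, i.e.\ $s\vDash\Cw_{22}p$.

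Next I would show $s\nvDash\Ew_1 p$. The $\to$-successors of $s$ are $s$ (through $i$) and $t$ (through $j$); since $s\vDash p$ but $t\vDash\neg p$, neither $\E p$ nor $\E\neg p$ holds at $s$, so $s\nvDash\Ew_1 p$. As $\Ew_1 p$ is the $k=1$ conjunct of $\Cw_{31}p$, we get $s\nvDash\Cw_{31}p$, establishing $\nvDash\Cw_{22}p\to\Cw_{31}p$.

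Finally, for the stated consequence $\nvDash\Cw_{22}\phi\to\Cw_{21}\phi$ I would argue by the already-established validity $\vDash\Cw_{21}\phi\to\Cw_{31}\phi$ (proved in the previous proposition): if $\Cw_{22}\phi\to\Cw_{21}\phi$ were valid, composing the two validities would force $\Cw_{22}\phi\to\Cw_{31}\phi$ to be valid, contradicting the first part. Equally, the same pointed model serves directly, since $s\nvDash\Ew_1 p$ forces $s\nvDash\C\Ew_1 p=\Cw_{21}p$ via the reflexive step contained in $\C$. I do not expect a genuine obstacle here; the only point that needs care is confirming that the reachable part from $s$ is precisely $\{s,t\}$ and that $\Ew_2 p$ holds \emph{vacuously} at the dead-end $t$, so that the global claim $\C\Ew_2 p$ is not accidentally broken at an overlooked successor.
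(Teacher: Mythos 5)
Your proposal is correct and matches the paper's own proof essentially verbatim: the paper also reuses the two-agent model from Prop.~\ref{prop.ew1toew2}(b), checks that $\Ew_2p$ holds at both $s$ and $t$ (the latter vacuously) to get $s\vDash\C\Ew_2p$, and refutes $\Cw_{31}p$ by noting that neither $\E p$ nor $\E\neg p$ holds at $s$. Your justification of the consequence via $\vDash\Cw_{21}\phi\to\Cw_{31}\phi$ is also the intended one.
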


\begin{proof}

Consider the model $\M$ in the proof of Prop.~\ref{prop.ew1toew2} (b).

We have seen that $\M,s\vDash\Ew_2p$. Moreover, it is obvious that $\M,t\vDash\Ew_2p$, and thus $\M,s\vDash \C\Ew_2p$. However, $\M,s\nvDash\E p$ and $\M,s\nvDash\E \neg p$, thus $\M,s\nvDash\Ew_1p$, and hence $\M,s\nvDash\Cw_{31}p$. Therefore, $\nvDash\Cw_{22}p\to\Cw_{31}p$.
\end{proof}

\begin{proposition}\label{prop.cw31ntocw32}
$\nvDash\Cw_{31}\phi\to\Cw_{32}\phi$. As a result, $\nvDash\Cw_{31}\phi\to\Cw_{22}\phi$.
\end{proposition}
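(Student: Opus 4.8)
The plan is to refute the validity by exhibiting a pointed model where $\Cw_{31}p$ holds but $\Cw_{32}p$ fails. Such a model must be genuinely multi-agent, since by Prop.~\ref{prop.single-agent-equiv} the two notions collapse when $|\G|=1$. The conceptual reason the implication can fail, despite $\vDash\Ew_1\psi\to\Ew_2\psi$ holding at each single step, is that ``knowing whether'' is non-monotone and the two operators are applied to \emph{different} arguments as the iteration proceeds: $(\Ew_1)^k p$ and $(\Ew_2)^k p$ compare agreement on $(\Ew_1)^{k-1}p$ versus $(\Ew_2)^{k-1}p$, so stepwise dominance need not lift to the iterates. I would therefore engineer a point whose successors agree on every $\Ew_1$-iterate yet disagree on $\Ew_2 p$.

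Concretely, take $\G=\{i,j\}$ and the model $\M$ on worlds $\{s,a,b,d,e\}$ with $V(p)=\{a,b,d\}$, relations $R_i=\{(s,a),(s,b),(a,d),(b,d),(b,e)\}$ and $R_j=\{(a,e)\}$, where $d$ and $e$ are dead ends. The two $R_i$-successors of $s$ are chosen with opposite local behaviour: at $a$ each agent has a single successor ($i$ to the $p$-world $d$, $j$ to the $\neg p$-world $e$), so $\Ew_2 p$ holds at $a$ while $\Ew_1 p$ fails; at $b$ agent $i$ sees both $d$ and $e$, so already $\Kw_i p$ fails and hence $\Ew_2 p$ fails at $b$. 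Crucially, $a$ and $b$ both satisfy $p$ and both \emph{falsify} $\Ew_1 p$, so they agree on $p$ and on $\Ew_1 p$.

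The verification splits into two halves. For the failure of $\Cw_{32}p$ I only need the level-2 conjunct: since $a,b$ are $R_i$-successors of $s$ that disagree on $\Ew_2 p$, the formula $\Kw_i\Ew_2 p$ fails at $s$, so $\Ew_2\Ew_2 p=(\Ew_2)^2 p$ fails at $s$ and $\Cw_{32}p$ is false. For $\Cw_{31}p$ I must check the whole infinite conjunction, and this is the step needing the most care. The clean way is a short induction anchored at the dead ends: (i) at $d$ and $e$ every formula $(\Ew_1)^k\psi$ with $k\ge 1$ holds vacuously, because $\E\psi$ is true at a world with no $\to$-successors; (ii) hence at $a$ and $b$ we get that $(\Ew_1)^k p$ holds for all $k\ge 2$ (the $\to$-successors $d,e$ both satisfy $(\Ew_1)^{k-1}p$), while $\Ew_1 p$ itself fails; (iii) therefore at $s$ the successors $a,b$ agree on $(\Ew_1)^{k-1}p$ for every $k\ge 2$ --- both false when $k-1=1$, both true when $k-1\ge 2$ --- and they agree on $p$, so $(\Ew_1)^k p$ holds at $s$ for all $k\ge 1$, i.e.\ $\Cw_{31}p$ holds. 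The main obstacle is precisely arranging $a$ and $b$ to agree on \emph{every} $\Ew_1$-iterate while disagreeing on $\Ew_2 p$; the dead-end stabilization in (i) is what makes the iterates collapse to a fixed pattern and keeps the induction finite.

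Finally, the stated corollary $\nvDash\Cw_{31}\phi\to\Cw_{22}\phi$ follows from the same witness without new work: since $\vDash\Cw_{22}\phi\to\Cw_{32}\phi$, the failure of $\Cw_{32}p$ at $s$ forces $\Cw_{22}p$ to fail at $s$ as well, whereas $\Cw_{31}p$ still holds there.
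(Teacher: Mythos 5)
Your counterexample is correct and verifies cleanly: at $s$ the two $R_i$-successors $a,b$ agree on $p$ and on every $\Ew_1$-iterate (both falsify $\Ew_1 p$, both satisfy the higher iterates thanks to the dead ends), yet disagree on $\Ew_2 p$, so $\Cw_{31}p$ holds while $(\Ew_2)^2p$, and hence $\Cw_{32}p$ and $\Cw_{22}p$, fail. This is essentially the same construction and argument as the paper's proof, which uses an isomorphic five-world model with the same mechanism (one successor with singleton per-agent successor sets, one with a splitting $R_i$, and dead ends stabilizing the iteration).
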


\begin{proof}
Consider the following model:
\[
\xymatrix{t_3: p&&t_4:\neg p\\
t_1:p\ar[u]|i\ar@/^12pt/[urr]|i&s:p\ar[l]|i\ar[r]|i&t_2:p\ar[u]|i\ar@/_12pt/[ull]|j\\}
\]

Firstly, $s\vDash\Cw_{31}p$. To see this, notice that $s\vDash\Ew_1p$. Since $\M,t_1\nvDash\Ew_1p$ and $\M,t_2\nvDash\Ew_1p$, we infer that $\M,s\vDash\Ew_1\Ew_1p$. Moreover, as $t_3$ and $t_4$ both have no successors, $t_3\vDash\Ew_1\phi$ and $t_4\vDash\Ew_1\phi$ for all $\phi$, then $t_1\vDash\Ew_1\Ew_1\phi$ and $t_2\vDash\Ew_1\Ew_1\phi$, and therefore $s\vDash\Ew_1\Ew_1\Ew_1\phi$ for all $\phi$. This implies that $s\vDash\Cw_{31}p$.

Secondly, $s\nvDash\Cw_{32}p$. To see this, note that $t_1\nvDash\Ew_2p$ and $t_2\vDash\Ew_2p$. This implies that $s\nvDash\Ew_2\Ew_2p$, and therefore $s\nvDash\Cw_{32}p$. Now we conclude that $\nvDash\Cw_{31}p\to\Cw_{32}p$.
\end{proof}

\begin{proposition}\label{prop.cw32ntocw31}
$\nvDash\Cw_{32}\phi\to\Cw_{31}\phi$.
\end{proposition}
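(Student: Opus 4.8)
The plan is to exhibit a pointed model $(\M,s)$ with $\M,s\vDash\Cw_{32}p$ but $\M,s\nvDash\Cw_{31}p$; this suffices, since otherwise $\Cw_{32}\phi\to\Cw_{31}\phi$ would be valid. I would reuse the two-world model $\M$ from the proof of Prop.~\ref{prop.ew1toew2}(b): worlds $s:p$ and $t:\neg p$, with agent $i$ looping at $s$ and agent $j$ sending $s$ to $t$, while $t$ has no outgoing arrows. The idea is that $s$ already witnesses the gap between $\Ew_2$ and $\Ew_1$, and the degenerate (functional) shape of $s$ will force all $\Ew_2$-iterations to hold there for free.

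For the failure of $\Cw_{31}p$ at $s$, it is enough to refute its first conjunct $\Ew_1 p=\E p\vee\E\neg p$. This is exactly the computation already in Prop.~\ref{prop.ew1toew2}(b): since $s\to s$ with $s\vDash p$ and $s\to t$ with $t\vDash\neg p$, neither $\E p$ nor $\E\neg p$ holds at $s$, so $\Ew_1 p$ fails and hence $\Cw_{31}p$ fails.

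For $\Cw_{32}p$ at $s$, the key observation I would isolate as a small lemma is that $\M,s\vDash\Ew_2\psi$ for \emph{every} formula $\psi$. Indeed, from $s$ agent $i$ has the single successor $s$ and agent $j$ has the single successor $t$, so each conjunct $\Kw_{i'}\psi$ of $\Ew_2\psi=\bigwedge_{i'\in\G}\Kw_{i'}\psi$ holds vacuously at $s$ (a $\Kw$-claim over a singleton successor set is always true). A trivial induction on $k$ then yields $\M,s\vDash(\Ew_2)^k p$ for all $k\geq 1$, since $(\Ew_2)^k p=\Ew_2\bigl((\Ew_2)^{k-1}p\bigr)$ and the lemma applies to the formula $(\Ew_2)^{k-1}p$. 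Hence $\M,s\vDash\Cw_{32}p$.

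The only point that needs care, and the place I expect a reader to pause, is the semantics of the nested $\Ew_2$: one must check that evaluating $(\Ew_2)^k p$ at $s$ only ever inspects the agreement of each agent's successors \emph{at $s$}, rather than forcing any condition to propagate to $t$. Because $\Kw_{i'}\chi$ at $s$ is a statement purely about whether $i'$'s successors of $s$ agree on $\chi$ --- and each such successor set is a singleton --- there is nothing to propagate, which is precisely what makes this minimal model work without adding further structure.
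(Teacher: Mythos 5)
Your proof is correct and follows essentially the same route as the paper: it reuses the two-world model from Prop.~\ref{prop.ew1toew2}(b), refutes $\Cw_{31}p$ at $s$ by refuting its first conjunct $\Ew_1 p$, and establishes $\Cw_{32}p$ by observing that each agent has a singleton successor set at $s$, so $\Ew_2\psi$ holds at $s$ for every $\psi$ and hence all iterates $(\Ew_2)^k p$ hold. Your explicit induction on $k$ merely spells out what the paper compresses into ``$\M,s\vDash\Cw_{32}\phi$ for all $\phi$''.
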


\begin{proof}
Use the model $\M$ in the proof of Prop.~\ref{prop.ew1toew2} (b). We have seen there that $\M,s\nvDash\Ew_1p$, and thus $\M,s\nvDash\Cw_{31}p$.

Moreover, as $s$ has only one $i$-successor and only one $j$-successor, we have $\M,s\vDash\Kw_i\phi\land\Kw_j\phi$ for all $\phi$, that is, $\M,s\vDash\Ew_2\phi$ for all $\phi$, and thus $\M,s\vDash\Cw_{32}\phi$ for all $\phi$, hence $\M,s\vDash\Cw_{32}p$. Therefore, $\nvDash\Cw_{32}p\to \Cw_{31}p$.
\end{proof}

\begin{proposition}\label{prop.notequal}
$\nvDash\Cw_{21}\phi\to\Cw_1\phi$, and thus $\nvDash\Cw_{22}\phi\to\Cw_1\phi$.
\end{proposition}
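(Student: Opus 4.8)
The plan is to refute the implication by exhibiting a pointed multi-agent model $(\M,s)$ at which $\Cw_{21}p$ holds while $\Cw_1 p$ fails; the second (``thus'') claim will then come for free. Recall that $\Cw_{21}p = \C(\E p\vee\E\neg p)$, so I need every world $u$ with $s\twoheadrightarrow u$ to satisfy $\Ew_1 p$, i.e.\ to have all of its $\to$-successors agree on the truth value of $p$; at the same time, to kill $\Cw_1 p = \C p\vee\C\neg p$, I need the set of worlds reachable from $s$ to contain both a $p$-world and a $\neg p$-world.

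The natural candidate is the two-point model already used in Prop.~\ref{prop.Cw2ntoCw1}, now read over $\G=\{i,j\}$: take worlds $s,t$ with $p$ true only at $t$, put both agents on the single edge from $s$ to $t$, and leave $t$ a dead end. First I would verify $\Cw_{21}p$ at $s$. The only $\to$-successor of $s$ is $t$, which satisfies $p$, so $\E p$ and hence $\Ew_1 p$ holds at $s$; and $t$ has no successors, so $\E p$, and thus $\Ew_1 p$, holds vacuously at $t$. Since $\{s,t\}$ is exactly the $\twoheadrightarrow$-closure of $s$, we obtain $\M,s\vDash\C\Ew_1 p$, that is, $\M,s\vDash\Cw_{21}p$. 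Next I would check that $\Cw_1 p$ fails: $\M,s\nvDash\C p$ because $s$ itself falsifies $p$, and $\M,s\nvDash\C\neg p$ because the reachable world $t$ falsifies $\neg p$.

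For the ``thus'' part I would simply invoke Prop.~\ref{prop.Cw-implies}, which gives $\vDash\Cw_{21}\phi\to\Cw_{22}\phi$: at $s$ we therefore also have $\M,s\vDash\Cw_{22}p$ while $\M,s\nvDash\Cw_1 p$, so the very same model witnesses $\nvDash\Cw_{22}p\to\Cw_1 p$.

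The only real subtlety — and the point I would be most careful about — is a global-versus-local tension: $\Cw_{21}$ demands that $p$ look uniform from the vantage point of every reachable world, yet $\Cw_1$ must be defeated by a genuine disagreement about $p$ among the reachable worlds. The dead-end structure resolves this, because the disagreeing world $t$ has no successors of its own, so it contributes nothing to any $\E$-requirement while still breaking both $\C p$ and $\C\neg p$. I would also double-check that the argument does not secretly collapse to a single agent: the construction places both agents on the same edge, so it is a bona fide $|\G|>1$ counterexample, and nothing in the verification changes if one instead lets $j$'s relation be empty.
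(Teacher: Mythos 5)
Your proposal is correct and matches the paper's own argument, which simply reuses the two-point dead-end model from Prop.~\ref{prop.Cw2ntoCw1} and derives the $\Cw_{22}$ case from $\vDash\Cw_{21}\phi\to\Cw_{22}\phi$. Your extra care about the multi-agent reading of that model (both agents on the edge, or $R_j$ empty) is a sensible clarification but does not change the substance.
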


\begin{proof}
Similar to the proof of Prop.~\ref{prop.Cw2ntoCw1}.
\weg{For the first one, consider the following model $\M$, where $i$ is a sole agent:
\[
\xymatrix{s:\neg p\ar[rr]|i&&t: p}
\]
Clearly, $\M,s\vDash\K_i p$ and $\M,t\vDash\K_i p$. Then $\M,s\vDash\E p$ and $\M,t\vDash \E p$. We can obtain that $\M,s\vDash\C\Ew_1 p$. However, $\M,s\nvDash\C p\vee\C\neg p$: Since $\M,s\nvDash  p$, we have $\M,s\nvDash\C p$; since $\M,t\nvDash\neg p$, we have $\M,s\nvDash\C\neg p$. Therefore, $\nvDash\Cw_{21} p\to\Cw_1p$.}
\end{proof}

\begin{proposition}
$\nvDash\Cw_{31}\phi\to\Cw_{21}\phi$, and $\nvDash\Cw_{32}\phi\to\Cw_{22}\phi$.
\end{proposition}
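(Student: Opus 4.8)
The plan is to recycle the counterexample $\mathcal{N}$ from the proof of Prop.~\ref{prop.cw3ntocw2}, now read as a model for the two-agent set $\G=\{i,j\}$ by stipulating $R_j:=R_i$. The underlying observation is that this one model already separates $\Cw_3$ from $\Cw_2$ in the single-agent sense, and that under $R_j=R_i$ both multi-agent readings of ``everyone knows whether'' collapse onto the single-agent one, so the two desired non-validities both fall out of that single separation.

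First I would record the collapse. Since $\to=R_i\cup R_j=R_i$ on $\mathcal{N}$, the operator $\E$ (and hence $\Ew_1\phi=\E\phi\vee\E\neg\phi$) is exactly the single-agent $\E$ (resp. the single-agent $\Ew$). Likewise, because $R_j=R_i$ gives $\Kw_j\psi\equiv\Kw_i\psi$, we have $\Ew_2\psi=\Kw_i\psi\wedge\Kw_j\psi\equiv\Kw_i\psi$, again the single-agent $\Ew$. Hence, as formulas evaluated on $\mathcal{N}$, $\Ew_1\psi$ and $\Ew_2\psi$ have the same truth set as $\Ew\psi$ for every $\psi$; by induction on $k$ the iterations $(\Ew_1)^k p$ and $(\Ew_2)^k p$ coincide with $(\Ew)^k p$, so $\Cw_{31}$ and $\Cw_{32}$ both coincide with $\Cw_3$, and $\Cw_{21}$, $\Cw_{22}$ both coincide with $\Cw_2$, on $\mathcal{N}$.

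Next I would import the two facts established in Prop.~\ref{prop.cw3ntocw2}: that $\mathcal{N},s\vDash\Cw_3 p$ while $\mathcal{N},s\nvDash\Cw_2 p$. The reason for the former is that $s$ has a unique $\to$-successor $t$, so $\Ew\psi$ holds at $s$ for \emph{every} formula $\psi$ (knowing-whether is trivially true at a world with a single successor), and this propagates through every finite iteration $(\Ew)^k p$; the reason for the latter is that $s\twoheadrightarrow t$ and $\Ew p$ fails at $t$, since the successors $s$ and $t$ of $t$ disagree on $p$, so $\C\Ew p$ fails at $s$. Feeding these through the collapse yields $\mathcal{N},s\vDash\Cw_{31}p$ and $\mathcal{N},s\nvDash\Cw_{21}p$, as well as $\mathcal{N},s\vDash\Cw_{32}p$ and $\mathcal{N},s\nvDash\Cw_{22}p$, which are exactly the two claimed non-validities.

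The one point needing care---and the one I would flag as the main (minor) obstacle---is checking that the added agent $j$ does not spoil the unique-successor argument at $s$: with $R_j=R_i$ the world $s$ still has the single $\to$-successor $t$, so $\Ew_1\psi$ and $\Ew_2\psi$ remain true at $s$ for all $\psi$ and the infinite conjunctions defining $\Cw_{31}p$ and $\Cw_{32}p$ go through. (If a non-serial witness were preferred, taking $R_j=\emptyset$ works equally well, since then $\Kw_j\psi$ holds vacuously everywhere and $\Ew_2$ again reduces to $\Kw_i$.) Reachability $s\twoheadrightarrow t$ is immediate from $s\to t$, so $\C$ detects the failure of $\Ew p$ at $t$ in both the $\Cw_{21}$ and $\Cw_{22}$ cases.
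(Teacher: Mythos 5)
Your proposal is correct and matches the paper's approach: the paper's proof is literally ``Similar to the proof of Prop.~\ref{prop.cw3ntocw2}'', i.e.\ it reuses the same model $\mathcal{N}$ in which $s$ has a unique successor (so all iterated $\Ew$-formulas hold at $s$) while $\Ew p$ fails at the reachable world $t$. Your extra care in making the model genuinely multi-agent (via $R_j=R_i$ or $R_j=\emptyset$) and checking that $\Ew_1$ and $\Ew_2$ collapse there is a detail the paper glosses over, but it is the same argument.
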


\begin{proof}
Similar to the proof of Prop.~\ref{prop.cw3ntocw2}.
\weg{Consider the following model $\mathcal{N}$, where $i$ is a sole agent:
\[
\xymatrix{s:p\ar[rr]|i&&t:\neg p\ar[ll]|i\ar@(ur,ul)|i}
\]
Since $s$ has only one successor, $s\vDash\Kw_i\phi$ for all $\phi$, thus $s\vDash\Ew\phi$ for all $\phi$, and hence $s\vDash \Cw_3\phi$. In particular, $s\vDash\Cw_3p$.

However, $t\nvDash\Kw_ip$, thus $t\nvDash\Ew p$, and then $s\nvDash\K_i\Ew p$,  and hence $s\nvDash\E\Ew p$. Therefore $s\nvDash\Cw_2p$.}
\end{proof}

\begin{proposition}
$\vDash\Cw_{22}\phi\to \Cw_4\phi$, and thus $\vDash\Cw_{21}\phi\to\Cw_4\phi$.
\end{proposition}

\begin{proof}
By definition of $\Ew_2$, we have $\vDash\Ew_2\phi\lra \bigwedge_{i\in\G}\Kw_i\phi$, then $\vDash\C\Ew_2\phi\lra \C\bigwedge_{i\in\G}\Kw_i\phi$, that is, $\vDash \C\Ew_2\phi\lra \bigwedge_{i\in\G}\C\Kw_i\phi$, and thus $\vDash \Cw_{22}\phi\to \Cw_4\phi$.
\end{proof}

\begin{proposition}
$\nvDash\Cw_4\phi\to \Cw_{32}\phi$ and $\nvDash\Cw_4\phi\to \Cw_{31}\phi$. Consequently, $\nvDash\Cw_4\phi\to \Cw_{22}\phi$, $\nvDash\Cw_4\phi\to\Cw_{21}\phi$, $\nvDash\Cw_4\phi\to \Cw_1\phi$.
\end{proposition}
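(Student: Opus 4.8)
The plan is to establish the two non-validities directly with a single counter-model, and then read off the three corollaries contrapositively from valid implications already proved. For the non-validities I would lift the single-agent model used in Prop.~\ref{prop.cw4ntocw3} to the multi-agent setting by giving \emph{every} agent the same accessibility relation. Concretely, take the model $\M$ with $W=\{s,t\}$, $V(p)=\{s\}$, and $R_i=\{(s,s),(s,t),(t,s),(t,t)\}$ for every $i\in\G$, so that $\to$ and $\twoheadrightarrow$ both coincide with the full relation on $\{s,t\}$. For two agents this is the model
\[
\xymatrix{s:p\ar@(ul,ur)|{i,j}\ar[rr]|{i,j}&&t:\neg p\ar@(ur,ul)|{i,j}\ar[ll]|{i,j}}
\]
and the general case is identical, with all agents collapsed onto this one relation.

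Next I would verify that $\M,s\vDash\Cw_4 p$ while both $\Cw_{31}p$ and $\Cw_{32}p$ fail at $s$. Since each of $s$ and $t$ has both a $p$-world and a $\neg p$-world among its $R_i$-successors, no agent knows whether $p$ at either world; hence $\neg\Kw_i p$ holds throughout the (full) $\twoheadrightarrow$-reachable set, giving $\M,s\vDash\C\neg\Kw_i p$ for each $i$, and therefore $\M,s\vDash\Cw_4 p$. For $\Cw_{32}$, note that $\Cw_{32}p$ entails $\Ew_2 p=\bigwedge_{i\in\G}\Kw_i p$, which already fails at $s$; for $\Cw_{31}$, note that $\Cw_{31}p$ entails $\Ew_1 p=\E p\vee\E\neg p$, and at $s$ both disjuncts fail, because $t$ is a $\to$-successor satisfying $\neg p$ while $s$ is a $\to$-successor satisfying $p$. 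Thus $\M,s\nvDash\Cw_{31}p$ and $\M,s\nvDash\Cw_{32}p$, establishing $\nvDash\Cw_4\phi\to\Cw_{31}\phi$ and $\nvDash\Cw_4\phi\to\Cw_{32}\phi$.

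Finally I would deduce the three corollaries from the valid implications established earlier. Since $\vDash\Cw_{22}\phi\to\Cw_{32}\phi$ (proved above), validity of $\Cw_4\to\Cw_{22}$ would force validity of $\Cw_4\to\Cw_{32}$, so $\nvDash\Cw_4\phi\to\Cw_{22}\phi$; since $\vDash\Cw_{21}\phi\to\Cw_{31}\phi$, the same reasoning gives $\nvDash\Cw_4\phi\to\Cw_{21}\phi$; and since $\vDash\Cw_1\phi\to\Cw_{21}\phi$ (Prop.~\ref{prop.Cw-implies}) together with $\vDash\Cw_{21}\phi\to\Cw_{31}\phi$, validity of $\Cw_4\to\Cw_1$ would again force validity of $\Cw_4\to\Cw_{31}$, so $\nvDash\Cw_4\phi\to\Cw_1\phi$. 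There is no serious obstacle here; the only point needing care is to confirm that collapsing all agents onto a single relation does not accidentally validate $\Ew_1$ or $\Ew_2$ at $s$, and it does not, precisely because each individual agent already fails to know whether $p$ at $s$. This is exactly the feature that makes $\Cw_4$ hold (each agent \emph{commonly} fails to know whether $p$) while the iterated notions $\Cw_{31}$ and $\Cw_{32}$ break at the very first level.
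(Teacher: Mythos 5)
Your proposal is correct and follows essentially the same route as the paper: it reuses the two-world countermodel from Prop.~\ref{prop.cw4ntocw3}, where $\Cw_4 p$ holds because $\neg\Kw_i p$ is common knowledge, while $\Cw_{31}p$ and $\Cw_{32}p$ already fail at the first iteration since $\Ew_1 p$ and $\Ew_2 p$ fail at $s$, and then obtains the remaining non-implications from the previously established valid implications. Your explicit lifting of the single-agent model to the multi-agent setting (all agents sharing the full relation) is a small precision the paper leaves implicit, but it is not a different argument.
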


\begin{proof}
Consider the model in the proof of Prop.~\ref{prop.cw4ntocw3}. We have already seen that $s\vDash\Cw_4p$.

However, since $s\nvDash\Kw_ip$, thus $s\nvDash\Ew_2p$, which implies that $s\nvDash\Ew_1p$ due to Prop.~\ref{prop.ew1toew2} (a). It follows that $s\nvDash\Cw_{32}p$ and $s\nvDash\Cw_{31}p$. Therefore, $\nvDash\Cw_4p\to\Cw_{32}p$ and $\nvDash\Cw_4p\to\Cw_{31}p$.
\end{proof}

\begin{proposition}
$\vDash \Cw_{22}\phi\to \Cw_5\phi$.
\end{proposition}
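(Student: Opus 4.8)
The plan is to unfold both sides semantically and argue by induction on the length of the agent sequences $s\in\G^+$. Fix a pointed model $(\M,w)$ with $\M,w\vDash\Cw_{22}\phi$, that is, $\M,w\vDash\C\Ew_2\phi$. By the semantics of $\C$ together with reflexivity of $\twoheadrightarrow$, this unpacks to the statement that for every world $u$ with $w\twoheadrightarrow u$ and every agent $i\in\G$ we have $\M,u\vDash\Kw_i\phi$. The goal is to show $\M,w\vDash\Kw_{s_1}\cdots\Kw_{s_n}\phi$ for every nonempty sequence $s=s_1\cdots s_n$, and then conjoin over all $s$.

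First I would record a persistence lemma: $\C\Ew_2\phi$ is inherited by every $\twoheadrightarrow$-reachable world. Concretely, if $\M,w\vDash\C\Ew_2\phi$ and $w\twoheadrightarrow u$, then $\M,u\vDash\C\Ew_2\phi$, because every world reachable from $u$ is reachable from $w$ by transitivity of $\twoheadrightarrow$. This is just the common-knowledge principle $\vDash\C\psi\to\C\C\psi$ instantiated with $\psi=\Ew_2\phi$, and it is what makes the induction hypothesis available at successor worlds.

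The core is then an induction on $n$ proving the following universally quantified statement: for every world $v$ with $\M,v\vDash\C\Ew_2\phi$ and every sequence $t$ of length $n$, we have $\M,v\vDash\Kw_t\phi$. For $n=1$, reflexivity of $\twoheadrightarrow$ gives $\M,v\vDash\Ew_2\phi$, hence $\M,v\vDash\Kw_{t_1}\phi$ directly from the definition of $\Ew_2$. For the inductive step I read the target as $\Kw_{s_1}(\Kw_{s_2}\cdots\Kw_{s_n}\phi)$ and exploit the fact that knowing whether a formula is automatically satisfied once that formula is \emph{constantly true} across the relevant successors. Precisely: for every $R_{s_1}$-successor $u$ of $w$ we have $w\twoheadrightarrow u$, so by the persistence lemma $\M,u\vDash\C\Ew_2\phi$, and the induction hypothesis applied to the shorter sequence $s_2\cdots s_n$ yields $\M,u\vDash\Kw_{s_2}\cdots\Kw_{s_n}\phi$. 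Since this holds at \emph{every} $R_{s_1}$-successor, the subformula takes the same (true) value at all of them, so the semantic clause for $\Kw_{s_1}$ is met and $\M,w\vDash\Kw_s\phi$.

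The one delicate point is that $\Kw$ is not a normal modality, so I cannot push it through conjunctions or invoke a distribution axiom; the argument instead turns entirely on the observation that knowing-whether is vacuously satisfied when its argument does not vary across successors. What makes this usable is the strength of $\C\Ew_2\phi$: it guarantees not merely that $\phi$ is locally settled, but that the whole property $\C\Ew_2\phi$ propagates to every reachable world, keeping the induction hypothesis in force at each successor. Collecting the conjunction over all $s\in\G^+$ gives $\M,w\vDash\Cw_5\phi$, and since $(\M,w)$ was arbitrary we conclude $\vDash\Cw_{22}\phi\to\Cw_5\phi$.
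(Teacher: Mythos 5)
Your proof is correct and follows essentially the same route as the paper: an induction on the length of the agent sequence, with the key observation that $\Kw_{s_1}$ is automatically satisfied when its argument is true at every $R_{s_1}$-successor. The only cosmetic difference is that the paper carries ``for all $u$ with $w\twoheadrightarrow u$'' inside the induction statement, whereas you carry the hypothesis $\C\Ew_2\phi$ at each world together with a persistence lemma --- these are interchangeable bookkeeping devices for the same argument.
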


\begin{proof}
Let $\M=\lr{W,\{R_i\mid i\in\G\},V}$ be a model and $w\in W$. Suppose that $\M,w\vDash\Cw_{22}\phi$, to show that $\M,w\vDash\Cw_5\phi$, that is, $\M,w\vDash\bigwedge_{s\in G^+}\Kw_s\phi$. Let $s\in G^+$ be arbitrary. It suffices to prove that $\M,w\vDash\Kw_s\phi$. For this, we show a stronger result: $(\ast)$ for any $u$ such that $w\twoheadrightarrow u$, we have that $\M,u\vDash Kw_s\phi$. From this it follows immediately that $\M,w\vDash\Kw_s\phi$ due to the fact that $w\twoheadrightarrow w$.

We proceed by induction on the length of $s$, denoted by $|s|$. By supposition, for all $u$ such that $w\twoheadrightarrow u$, we have $\M,u\vDash\bigwedge_{i\in \G}\Kw_i\phi$.

Base step: $|s|=1$. We may assume that $s=i_1$, where $i_1\in\G$. For any $u$ with $w\twoheadrightarrow u$, as $\M,u\vDash\bigwedge_{i\in \G}\Kw_i\phi$, we have obviously that $\M,u\vDash \Kw_{i_1}\phi$. Therefore, $(\ast)$ holds.

Induction step: hypothesize that $(\ast)$ holds for $|s|=k$ (IH), we prove that $(\ast)$ also holds for $|s|=k+1$. For this, we may assume that $s=i_1i_2\cdots i_{k+1}$, where $i_m\in\G$ for $m\in[1,k+1]$.

For all $u$ such that $w\twoheadrightarrow u$, for all $v$ such that $uR_{i_1}v$, we have $w\twoheadrightarrow v$. Note that $|i_2\cdots i_{k+1}|=k$. Then by IH, we derive that $\M,v\vDash\Kw_{i_2\cdots i_{k+1}}\phi$, and therefore, $\M,u\vDash\Kw_{i_1i_2\cdots i_{k+1}}\phi$, that is, $\M,u\vDash\Kw_s\phi$. We have now shown $(\ast)$, as desired.
\end{proof}

\begin{proposition}\label{prop.cw32ntocw5}
$\nvDash\Cw_{32}\phi\to\Cw_5\phi$. As a consequence, $\nvDash\Cw_{32}\phi\to\Cw_{22}\phi$.
\end{proposition}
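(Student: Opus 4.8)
The plan is to build a two-agent counterexample that exploits the failure of $\Kw_i$ to distribute over conjunction: the operator $(\Ew_2)^{k}$ only requires each agent to know whether the \emph{whole} conjunction $(\Ew_2)^{k-1}\phi$ holds, whereas $\Cw_5$ requires knowing-whether of the individual iterated modalities $\Kw_{s_1}\cdots\Kw_{s_n}\phi$. The idea is to arrange a world whose two $i$-successors disagree about $\Kw_j p$ but agree (both false) about $\Ew_2 p=\Kw_i p\land\Kw_j p$. Concretely, I would take $\G=\{i,j\}$ and the following model, in which $a,b,d,e$ are dead ends:
\[
\xymatrix{
a:p & b:\neg p & & d:p & e:\neg p \\
u:p \ar[u]|i \ar[ur]|i & & s:p \ar[ll]|i \ar[r]|i & v:p \ar[u]|j \ar[ur]|j &
}
\]
Thus $s$ has $i$-successors $u,v$; $u$ has $i$-successors $a,b$ and no $j$-successor; $v$ has $j$-successors $d,e$ and no $i$-successor.

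First I would refute $\Cw_5 p$ at $s$ through the length-two sequence $ij$, i.e.\ $\Kw_i\Kw_j p$. Since $u$ has no $j$-successor, $\M,u\vDash\Kw_j p$ vacuously; since $v$'s $j$-successors $d,e$ disagree on $p$, we have $\M,v\nvDash\Kw_j p$. As $u,v$ are exactly the $i$-successors of $s$ and they disagree on $\Kw_j p$, we obtain $\M,s\nvDash\Kw_i\Kw_j p$, hence $\M,s\nvDash\Cw_5 p$.

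The substantial part is to verify $\M,s\vDash\Cw_{32}p$, that is $\M,s\vDash(\Ew_2)^k p$ for \emph{every} $k\ge 1$. For $k=1$: the worlds $u,v$ agree on $p$ and $s$ has no $j$-successor, so $\Kw_i p$ and $\Kw_j p$ hold at $s$, giving $\Ew_2 p$. For $k=2$: because $a,b$ disagree on $p$ we get $\M,u\nvDash\Ew_2 p$, and because $d,e$ disagree on $p$ we get $\M,v\nvDash\Ew_2 p$; thus both $i$-successors of $s$ falsify $\Ew_2 p$, so $\Kw_i\Ew_2 p$ holds, while $\Kw_j\Ew_2 p$ holds vacuously, giving $(\Ew_2)^2 p$ at $s$. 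For the remaining levels I would prove the stabilization claim that $(\Ew_2)^k p$ is true at \emph{every} world for all $k\ge 2$, by induction on $k$: the dead ends $a,b,d,e$ satisfy every $\Ew_2$-formula vacuously, and once $(\Ew_2)^{k}p$ holds throughout the model each agent trivially knows whether it at each world, so $(\Ew_2)^{k+1}p$ holds throughout as well. In particular $(\Ew_2)^k p$ holds at $s$ for every $k$, so $\M,s\vDash\Cw_{32}p$. The main obstacle is exactly this step: one must check that the iteration never re-exposes the local disagreement at $v$ about $\Kw_j p$, which survives only because $\Kw_i$ at $s$ sees the \emph{conjunction} $\Ew_2 p$ and that conjunction is uniformly false one step below $s$.

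Finally, the stated consequence $\nvDash\Cw_{32}\phi\to\Cw_{22}\phi$ is immediate from the previous proposition, which gives $\vDash\Cw_{22}p\to\Cw_5 p$: since $\M,s\nvDash\Cw_5 p$, the contrapositive forces $\M,s\nvDash\Cw_{22}p$, whereas $\M,s\vDash\Cw_{32}p$; hence $\Cw_{32}p\to\Cw_{22}p$ fails at $s$.
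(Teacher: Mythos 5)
Your proof is correct and follows essentially the same strategy as the paper: a small countermodel in which the two $i$-successors of $s$ both falsify the conjunction $\Ew_2 p$ (so all $(\Ew_2)^k p$ hold at $s$, stabilizing because the deeper worlds are dead ends), while they disagree on a single conjunct $\Kw_a p$, killing one iterated $\Kw_s p$ and hence $\Cw_5 p$; the consequence is then drawn from $\vDash\Cw_{22}\phi\to\Cw_5\phi$ exactly as the paper intends. The concrete model differs only cosmetically (the paper refutes $\Kw_i\Kw_i p$ where you refute $\Kw_i\Kw_j p$ via a vacuous truth), and your explicit stabilization induction for $k\ge 2$ is a welcome bit of extra rigour.
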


\begin{proof}
Consider the following model:
\[
\xymatrix{
s:p\ar[rr]|i\ar[drr]|i&&t:p\ar[rr]|{i,j}\ar@/^5pt/[drr]|j&&v:p\\
&&u:p\ar[rr]|{i,j}\ar@/_5pt/[urr]|i&&w:\neg p\\}
\]

In this model, on one hand, $s\vDash\Cw_{32}p$: note that $s\vDash\Ew_2p$ is easily verified; since $t\nvDash\Kw_jp$, thus $t\nvDash\Ew_2 p$, similarly, since $u\nvDash\Kw_ip$, thus $u\nvDash\Ew_2p$, from which it follows that $s\vDash\Ew_2\Ew_2p$. Moreover, since both $v$ and $w$ have no successors, $\Ew_2 p\land\Ew_2\Ew_2 p\land\cdots$ holds at both $v$ and $w$, then $\Ew_2\Ew_2p\land\Ew_2\Ew_2\Ew_2p\land\cdots$ holds at both $t$ and $u$. Therefore, $s\vDash\Ew_2\Ew_2\Ew_2p\land\Ew_2\Ew_2\Ew_2\Ew_2p\land\cdots$, as desired.

On the other hand, $t\vDash\Kw_ip$ but $u\nvDash\Kw_ip$, and hence $s\nvDash\Kw_i\Kw_ip$. This entails that $s\nvDash\Cw_5p$, as desired.
\end{proof}

\begin{proposition}\label{prop.cw5ntocw32}
$\nvDash\Cw_5\phi\to\Cw_{32}\phi$.
\end{proposition}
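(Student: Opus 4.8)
The plan is to exhibit a pointed model $(\M,s)$, with agents $\G=\{i,j\}$, such that $\M,s\vDash\Cw_5 p$ but $\M,s\nvDash\Cw_{32}p$. Since $\Cw_{32}p=\bigwedge_{k\geq 1}(\Ew_2)^k p$, it suffices to falsify a single conjunct $(\Ew_2)^k p$, so first I would pin down the least $k$ for which this is possible. The key fact about `knowing whether' is the valid implication $\vDash\Kw_i\alpha\land\Kw_i\beta\to\Kw_i(\alpha\land\beta)$: if all $R_i$-successors agree on $\alpha$ and all agree on $\beta$, then all agree on $\alpha\land\beta$. Using this, the length-$1$ and length-$2$ pure sequences already contained in $\Cw_5p$ force $(\Ew_2)^1p$ and $(\Ew_2)^2p$: from $\bigwedge_{a\in\G}\Kw_a p$ in $\Cw_5$ we get $\Ew_2 p$, and from $\bigwedge_{a\in\G}\Kw_i\Kw_a p$ we get $\Kw_i(\bigwedge_{a\in\G}\Kw_a p)=\Kw_i\Ew_2 p$ (and likewise for $j$), hence $(\Ew_2)^2p$. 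Therefore the earliest conjunct that $\Cw_5$ need not guarantee is $(\Ew_2)^3p$, and that is the one I would falsify.

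The construction exploits the non-monotonicity of $\Kw_i$: two worlds can agree on every pure sequence $\Kw_\tau p$ and yet disagree on the conjunctive formula $(\Ew_2)^2 p$. Concretely, I would let $s$ have exactly two successors $v_1,v_2$, both via agent $i$ and no $j$-successor, so that $(\Ew_2)^3p=\Kw_i((\Ew_2)^2p)\land\Kw_j((\Ew_2)^2p)$ reduces (its $\Kw_j$-conjunct being vacuous at $s$) to $\Kw_i((\Ew_2)^2p)$. I would then arrange $\M,v_1\vDash(\Ew_2)^2p$ but $\M,v_2\nvDash(\Ew_2)^2p$, so that $v_1,v_2$ disagree on $(\Ew_2)^2p$ and $\Kw_i((\Ew_2)^2p)$ fails at $s$. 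To make $(\Ew_2)^2p$ fail at $v_2$, I give $v_2$ two $i$-successors disagreeing on $\Ew_2p=\Kw_i p\land\Kw_j p$: one a $p$-leaf, on which $\Ew_2p$ holds vacuously, and one on which $\Kw_i p$ fails while $\Kw_j p$ holds, so that $\Ew_2p$ fails. To keep $(\Ew_2)^2p$ true at $v_1$, I give $v_1$ two $i$-successors on which $\Ew_2p$ is constantly false: both satisfy $\neg\Kw_j p$, so even though they disagree on $\Kw_i p$, the conjunction $\Kw_i p\land\Kw_j p$ is false at both and they agree on it.

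The main obstacle is securing $\M,s\vDash\Cw_5p$, i.e.\ $\M,s\vDash\Kw_\sigma p$ for every $\sigma\in\G^+$. Sequences beginning with $j$ are vacuously true at $s$ (no $j$-successor), and a sequence $\Kw_i\Kw_\tau p$ holds at $s$ iff $v_1$ and $v_2$ agree on $\Kw_\tau p$. The crucial check is that the two $i$-successor pairs, of $v_1$ and of $v_2$, have identical \emph{agreement patterns} on all pure subformulas, even though their actual valuations differ: on each side the two successors disagree on $\Kw_i p$ (so both $v_1$ and $v_2$ falsify $\Kw_i\Kw_i p$, consistently) and agree on $\Kw_j p$ (so both satisfy $\Kw_i\Kw_j p$); the only difference is that this shared $\Kw_j p$-value is false on $v_1$'s side and true on $v_2$'s side. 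That difference is invisible to pure sequences but is exactly what flips the conjunction $\Ew_2 p$ from constant (on $v_1$'s side) to varying (on $v_2$'s side). This non-distribution gap is what drives the whole example.

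To finish, since the model is a finite tree of depth $3$, only sequences of length at most $3$ are non-trivial (longer ones reach leaves and hold vacuously), so I would verify the finitely many sequences up to length $3$ at $s$, confirming $\M,s\vDash\Kw_\sigma p$ for each, and separately confirm $\M,v_1\vDash(\Ew_2)^2p$ and $\M,v_2\nvDash(\Ew_2)^2p$. Together these give $\M,s\vDash\Cw_5p$ and $\M,s\nvDash(\Ew_2)^3p$, whence $\M,s\vDash\Cw_5p\land\neg\Cw_{32}p$ and therefore $\nvDash\Cw_5\phi\to\Cw_{32}\phi$.
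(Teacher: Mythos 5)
Your proposal is correct and follows essentially the same route as the paper's proof: a depth-three tree whose root has two $i$-successors that agree on every pure sequence $\Kw_\tau p$ but disagree on $(\Ew_2)^2p$ --- one whose children uniformly falsify $\Ew_2 p$, the other whose children split between a dead-end (vacuously satisfying $\Ew_2 p$) and a genuine failure --- so that $\Cw_5 p$ holds at the root while $(\Ew_2)^3p$ fails. The paper's model instantiates exactly this pattern, down to falsifying the same (third) iterate of $\Ew_2$.
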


\begin{proof}
Consider the following model:
\[
\xymatrix{v_1:p&v_2:\neg p&&v_3:p&v_4:\neg p\\
u_1:p\ar[u]|i\ar@/_10pt/[ur]|i&u_2:p\ar[u]|j\ar@/^10pt/[ul]|j&&u_3:p&u_4:p\ar[ul]|{i,j}\ar[u]|{i,j}\\
&t_1:p\ar[ul]|i\ar[u]|i&&t_2:p\ar[u]|i\ar[ur]|i&&\\
&&s:p\ar[ul]|i\ar[ur]|i&&\\}
\]

We now show that $\M,s\vDash\Cw_5p$ but $\M,s\nvDash\Cw_{32}p$.

One may easily see that $\M,s\vDash\Kw_j\phi$ for all $\phi$, and $s\vDash \Kw_ip$, thus $s\vDash\Kw_ip\land\Kw_jp$. Moreover, $t_1$ and $t_2$ both satisfy $\Kw_j\phi$ for all $\phi$ and $\Kw_ip$, and then $s\vDash\Kw_i\Kw_ip\land\Kw_i\Kw_jp$. Besides, $u_1\vDash\neg\Kw_ip\land\Kw_jp$ and $u_2\vDash\Kw_ip\land\neg\Kw_jp$, which implies that $t_1\vDash\neg\Kw_i\Kw_ip\land\neg\Kw_i\Kw_jp$; $u_3\vDash\Kw_i\phi\land\Kw_j\phi$ for any $\phi$, $u_4\vDash\neg\Kw_ip\land\neg\Kw_jp$, thus $t_2\vDash\neg\Kw_i\Kw_ip\land\neg\Kw_i\Kw_jp$, and hence $s\vDash\Kw_i\Kw_i\Kw_ip\land\Kw_i\Kw_i\Kw_jp$. Of course $s\vDash\Kw_i\Kw_j\phi$ for all $\phi$, thus $s\vDash\Kw_i\Kw_j\Kw_ip\land\Kw_i\Kw_j\Kw_jp$. Because $v_1,v_2,v_3,v_4$ all satisfy $\Kw_a\phi$ for all $\phi$ and $a$, we can obtain that $s\vDash\Kw_a\Kw_b\Kw_c\Kw_d\phi$ for all $a,b,c,d\in\G$. Therefore, $\M,s\vDash\Cw_5p$.

Since $u_k\nvDash\Kw_ip\land\Kw_jp$ for $k=1,2$, thus $t_1\vDash\Kw_i\Ew_2 p$, and then $t_1\vDash\Kw_i\Ew_2 p\land\Kw_j\Ew_2p$, i.e. $t_1\vDash\Ew_2\Ew_2p$. However, $u_3\vDash\Ew_2p$ and $u_4\nvDash\Ew_2p$, which entails that $t_2\nvDash\Kw_1\Ew_2p$, and thus $t_2\nvDash\Ew_2\Ew_2p$. Therefore, $s\nvDash\Kw_i\Ew_2\Ew_2p$, and thus $s\nvDash\Ew_2\Ew_2\Ew_2p$. This leads to $\M,s\nvDash\Cw_{32}p$.
\end{proof}

\begin{proposition}\label{prop.cw31ntocw5}
$\nvDash\Cw_{31}\phi\to\Cw_5\phi$.
\end{proposition}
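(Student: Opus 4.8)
The plan is to refute the implication by exhibiting a pointed model $(\M,s)$ with $\M,s\vDash\Cw_{31}p$ but $\M,s\nvDash\Cw_5p$. The guiding observation is that $\Ew_1\phi=\E\phi\vee\E\neg\phi$ holds at a world exactly when all of its $\to$-successors agree on the truth value of $\phi$; thus $\Ew_1$ behaves like a single ``knowing whether'' operator $\Kw_{\to}$ for the union relation $\to$, and $\Cw_{31}p=\bigwedge_{k\geq 1}(\Ew_1)^kp$ is the iterated ``knowing whether'' built from it. By contrast, $\Cw_5p$ demands $\Kw_sp$ for \emph{every} finite sequence $s$ of individual agents. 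Since knowing-whether operators are not monotone, an iterate of $\Kw_{\to}$ can hold while a mixed iterate such as $\Kw_i\Kw_jp$ fails, and this is the gap I would exploit. The construction is structurally similar to the model used for Prop.~\ref{prop.cw32ntocw5}, but arranged so that the stronger operator $\Ew_1$ (rather than $\Ew_2$) is the one that survives.

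Concretely, I would take $\G=\{i,j\}$ and build a small tree rooted at $s$: let $s$ have two $R_i$-successors $t_1,t_2$; give $t_1$ two $R_i$-successors $a$ (with $p$) and $b$ (with $\neg p$); give $t_2$ two $R_j$-successors $c$ (with $p$) and $d$ (with $\neg p$); make $a,b,c,d$ dead ends and set $p$ true at $s,t_1,t_2$. The design principle is that $t_1$ satisfies $\Kw_jp$ (vacuously, having no $R_j$-successor) while $t_2$ falsifies it (its $R_j$-successors disagree on $p$), and yet \emph{both} $t_1$ and $t_2$ falsify $\Ew_1p$ (each has $\to$-successors disagreeing on $p$). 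Thus the two branches disagree for the purposes of $\Cw_5$ but agree for the purposes of $\Cw_{31}$. The failure of $\Cw_5$ is then immediate from the witnessing sequence $ij\in\G^+$: the $R_i$-successors $t_1,t_2$ of $s$ disagree on $\Kw_jp$, so $\M,s\nvDash\Kw_i\Kw_jp$ and hence $\M,s\nvDash\Cw_5p$.

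The main work, and the real obstacle, is verifying $\M,s\vDash(\Ew_1)^kp$ for \emph{all} $k\geq 1$ simultaneously; the infinite conjunction is what must be controlled. I would organise this by showing that $t_1$ and $t_2$ assign the same truth value to $(\Ew_1)^mp$ for every $m\geq 1$: at $m=1$ both falsify $\Ew_1p$ (each has two successors disagreeing on $p$), while at every $m\geq 2$ both satisfy $(\Ew_1)^mp$, because all of their successors are dead ends, so $(\Ew_1)^{m-1}p$ is vacuously true there and the required agreement is trivial. Since $t_1$ and $t_2$ also both satisfy $p$, every conjunct of $\Cw_{31}p$ reduces to exactly this agreement: the $k=1$ conjunct to agreement on $p$, and the $k\geq 2$ conjunct $(\Ew_1)^kp=\Kw_{\to}\big((\Ew_1)^{k-1}p\big)$ to agreement of $t_1,t_2$ on $(\Ew_1)^{k-1}p$, which I have just established. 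Hence $\M,s\vDash\Cw_{31}p$, and together with the displayed failure of $\Kw_i\Kw_jp$ this yields $\nvDash\Cw_{31}\phi\to\Cw_5\phi$.
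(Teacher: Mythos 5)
Your proof is correct and takes essentially the same approach as the paper's: the paper reuses the model from Prop.~\ref{prop.cw32ntocw5} and refutes $\Cw_5p$ via the failure of $\Kw_i\Kw_ip$, whereas you build a fresh two-agent tree and use $\Kw_i\Kw_jp$, but in both cases the root's two $i$-successors jointly falsify $\Ew_1p$ (so every $(\Ew_1)^k$-conjunct holds, the higher iterates vacuously at the dead ends) while disagreeing on a one-step knowing-whether formula. Your handling of the infinite conjunction is sound, so the countermodel works as claimed.
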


\begin{proof}
Consider the model in the proof of Prop.~\ref{prop.cw32ntocw5}. It has been shown there that $s\nvDash\Cw_5p$. It then suffices to prove that $s\vDash\Cw_{31}p$.

As $s\vDash \K_ip$ and $s\vDash\K_j\phi$ for all $\phi$, we obtain $s\vDash \E p$, and thus $s\vDash\Ew_1p$. Since $t\nvDash\E p\vee\E\neg p$ and $u\nvDash \E p\vee\E\neg p$, we can show that $s\vDash\E\neg \Ew_1 p$, and then $s\vDash\Ew_1\Ew_1 p$. Moreover, because $v$ and $w$ both have no successors, $v\vDash\Ew_1\phi$ and $w\vDash\Ew_1\phi$ for all $\phi$, thus $t\vDash\Ew_1\Ew_1\phi$ and $u\vDash\Ew_1\Ew_1\phi$, and hence $s\vDash\Ew_1\Ew_1\Ew_1\phi$ for all $\phi$. This together gives us $s\vDash\Cw_{31}p$.
\end{proof}

\begin{proposition}
$\nvDash\Cw_5\phi\to\Cw_{31}\phi$.
\end{proposition}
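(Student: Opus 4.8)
The plan is to reuse the two-world model $\M$ from the proof of Prop.~\ref{prop.ew1toew2}(b), namely the frame with a reflexive $i$-loop at $s:p$ and a single $j$-edge from $s$ to the dead end $t:\neg p$, and to show that $\M,s\vDash\Cw_5 p$ while $\M,s\nvDash\Cw_{31}p$. The guiding intuition is that $\Cw_5$ is built purely from \emph{individual} knowing-whether operators, which are sensitive only to each agent's own successors, whereas the first conjunct $\Ew_1 p=\E p\vee\E\neg p$ of $\Cw_{31}$ asks the whole group to agree on the truth value of $p$. By letting the $i$-successor satisfy $p$ and the $j$-successor satisfy $\neg p$, the agents individually know whether $p$ but the group does not, which is exactly the gap I want to exploit.

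For the positive half, I would observe that in $\M$ the world $s$ has a unique $i$-successor (itself) and a unique $j$-successor ($t$), and that $t$ has no successors at all. Consequently, for every agent $a$ and every formula $\psi$, the truth condition for $\Kw_a\psi$ at $s$ is trivially met, since it quantifies over pairs of $a$-successors and there is at most one such successor. Hence for every nonempty sequence $s_1\cdots s_n\in\G^+$, the outermost operator $\Kw_{s_1}$ makes $\M,s\vDash\Kw_{s_1}\cdots\Kw_{s_n}p$ hold regardless of the truth values of the nested subformulas. This yields $\M,s\vDash\bigwedge_{s\in\G^+}\Kw_s p$, i.e.\ $\M,s\vDash\Cw_5 p$. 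This is the same single-successor argument already invoked in Prop.~\ref{prop.cw32ntocw31} to establish $\Cw_{32}p$ in this model.

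For the negative half, it suffices to refute the very first conjunct of $\Cw_{31}p=\bigwedge_{k\geq 1}(\Ew_1)^k p$, namely $\Ew_1 p=\E p\vee\E\neg p$, at $s$. Since $s\to t$ and $\M,t\vDash\neg p$ we get $\M,s\nvDash\E p$, and since $s\to s$ and $\M,s\vDash p$ we get $\M,s\nvDash\E\neg p$; therefore $\M,s\nvDash\Ew_1 p$ and a fortiori $\M,s\nvDash\Cw_{31}p$. This is exactly the computation already recorded in Prop.~\ref{prop.ew1toew2}(b).

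I do not anticipate a genuine obstacle here: the whole argument reduces to picking a model in which every agent has at most one successor from the evaluation point, so that all knowing-whether formulas collapse to trivial truths, combined with a truth-value clash between the two agents that breaks group-level $\E$. The only point requiring a little care is to note that the single-successor observation disposes of \emph{every} sequence in $\G^+$ uniformly through its outermost operator, so that no separate induction on sequence length is needed.
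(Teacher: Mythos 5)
Your proof is correct, but it takes a genuinely different and more economical route than the paper's. The paper reuses the large model from Prop.~\ref{prop.cw5ntocw32} (the one with worlds $s$, $t_1,t_2$, $u_1,\dots,u_4$, $v_1,\dots,v_4$), where establishing $s\vDash\Cw_5 p$ required a lengthy case analysis, and then refutes $\Cw_{31}p$ only at the third iteration, by checking that $\Ew_1 p$ holds at $u_3$ but fails at $u_1,u_2,u_4$, so that $t_1$ and $t_2$ disagree on $\Ew_1\Ew_1 p$ and hence $s\nvDash\Ew_1\Ew_1\Ew_1 p$. You instead reuse the two-world model of Prop.~\ref{prop.ew1toew2}(b), where the positive half is immediate: since $s$ has at most one successor per agent, every formula of the form $\Kw_a\chi$ is true at $s$, so each conjunct $\Kw_{s_1}\cdots\Kw_{s_n}p$ of $\Cw_5 p$ is disposed of by its outermost operator alone, with no induction on sequence length --- exactly the observation the paper itself makes for $\Cw_{32}$ in Prop.~\ref{prop.cw32ntocw31}. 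The negative half then fails already at the first conjunct $\Ew_1 p$, because the $i$-successor and the $j$-successor of $s$ disagree on $p$. Your countermodel is smaller and the refutation shallower; what the paper's choice buys is only the reuse of a model whose $\Cw_5$-behaviour it had already analysed. Both arguments are sound.
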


\begin{proof}
Consider the model in the proof of Prop.~\ref{prop.cw5ntocw32}. It has been shown there that $s\vDash\Cw_5p$. The remainder is to show that $s\nvDash\Cw_{31}p$. The proof is as follows.

One may check that $\Ew_1p$ is true at $u_3$ but false at $u_1,u_2,u_4$. It then follows that $t_1\vDash\Ew_1\Ew_1p$ but $t_2\nvDash\Ew_1\Ew_1p$, and therefore $s\nvDash\Ew_1\Ew_1\Ew_1p$, which implies that $s\nvDash\Cw_{31}p$, as desired.
\end{proof}

\weg{\begin{proof}
Prove by induction.

Consider $C_3w\phi$ in two cases:

The first case is $C_2w\phi:=C_{22}w\phi$. Let $(M,r)$ be an arbitrary pointed model, such that $M=\langle W,R_i,R_j,...,R_m,V \rangle$ and $M,r\models C_2w\phi$. Then, for any node $t$ with $r\twoheadrightarrow t$, we have $M,t\models \bigwedge_{i\in G}Kw_i\phi$. For arbitrary $s\in G^+$, let $s= \langle i_1,i_2,...,i_k \rangle$, where $\{i_n\mid 1\leq n\leq k\}\subseteq G$. We show a stronger result: $M,r\models Kw_s\phi$ and for any $t$ such that $r\twoheadrightarrow t$, we have that $M,t\models Kw_s\phi$.

By induction on the length $n=|s|$ of $s$:

When $n=1$, $s=\langle i_1 \rangle$, where $i_1$ is an arbitrary agent. By $M,r\models C_2w\phi$, we have $M,r\models Kw_{i_1}\phi$. And since for any $t$ with $r\twoheadrightarrow t$, there is $M,t\models E_2w\phi$. Thus there is $M,t\models Kw_{i_1}\phi$.

Induction hypothesis: when $n=k$, $s=\langle i_1,i_2,...,i_k \rangle$, there is $M,r\models Kw_s\phi$ and for any $t$ with $r\twoheadrightarrow t$, there is $M,t\models Kw_s\phi$.

When $n=k+1$, $s= \langle i_1,i_2,...,i_k,i_{k+1}\rangle$. By induction hypothesis, for any $t$ such that $r\to t$, for any $|s_k|=k$, there is $M,t\models Kw_{s_k}\phi$. Thus, for any $i\in G$, there is  $M,r\models Kw_iKw_{s_k}\phi$, saying, for any $|s_{k+1}|=k+1$, we have $M,r\models Kw_{s_{k+1}}\phi$. Now considering any $u$ with $r\twoheadrightarrow u$, let $u\to v$. By the definition of $\to$, $r\twoheadrightarrow v$. By induction hypothesis, we get $M,v\models Kw_{s_k}\phi$. And since $v$ is an arbitrary node such that $u\to v$, we have for any $i\in G$, $M,u\models Kw_iKw_{s_k}\phi$. That means, for any $|s_{k+1}|=k+1$, there is $M,u\models Kw_{s_{k+1}}\phi$.

Therefore, we proved that, for arbitrary $s\in G^+$ and arbitrary $t$ with $r\twoheadrightarrow t$, there are $M,r\models Kw_s\phi$ and $M,t\models Kw_s\phi$, which means 
$M,r\models \bigwedge_{s\in G^+}Kw_s\phi$. Thus, we have $M,r\models C_5w\phi$. Above all, we obtain $\mathcal{K}\models C_{22}w\phi\to C_5w\phi$.

For the second case where $C_2w\phi:=C_{21}w\phi$, since $\mathcal{K}\models C_{21}w\phi\to C_{22}w\phi$, we have $\mathcal{K}\models C_{21}w\phi\to C_5w\phi$. Therefore, we proved $\mathcal{K}\models C_2w\phi\to C_5w\phi$.

\end{proof}}

\begin{proposition}\label{prop.cw32ntocw4}
$\nvDash\Cw_{32}\phi\to\Cw_4\phi$.
\end{proposition}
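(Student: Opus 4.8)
The plan is to recycle the counterexample already constructed for Proposition~\ref{prop.cw32ntocw5}, since that model was built precisely to make the iterated-$\Ew_2$ conjunction hold while creating a disagreement about $\Kw_i p$ among successors. In that model we have already established $\M,s\vDash\Cw_{32}p$, so I would not redo that computation; the only thing left to verify is $\M,s\nvDash\Cw_4 p$. Recall $\Cw_4 p=\bigwedge_{a\in\G}(\C\Kw_a p\vee\C\neg\Kw_a p)$, so to refute it at $s$ it suffices to exhibit a single agent $a$ for which \emph{both} disjuncts $\C\Kw_a p$ and $\C\neg\Kw_a p$ fail at $s$.

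I would take $a=i$ and examine the truth value of $\Kw_i p$ across the worlds reachable from $s$. First, $s$ itself is reachable ($s\twoheadrightarrow s$) and its $i$-successors are $t$ and $u$, both satisfying $p$; hence $\M,s\vDash\Kw_i p$, so $\neg\Kw_i p$ is false at the reachable world $s$, which already kills $\C\neg\Kw_i p$. Second, $u$ is reachable ($s\twoheadrightarrow u$) and its $i$-successors are $v\,(p)$ and $w\,(\neg p)$, which disagree on $p$; hence $\M,u\nvDash\Kw_i p$, and since $u$ is reachable this kills $\C\Kw_i p$. Therefore $\C\Kw_i p\vee\C\neg\Kw_i p$ is false at $s$, so $\M,s\nvDash\Cw_4 p$, giving $\nvDash\Cw_{32}p\to\Cw_4 p$.

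There is essentially no hard obstacle here beyond bookkeeping, but the step worth flagging is the conceptual reason the reuse works: $\Cw_{32}$ only constrains iterated \emph{knowing-whether of $p$} (formulas $(\Ew_2)^k p$) and never forces agreement about the higher-order formula $\Kw_i p$ itself, whereas $\Cw_4$ demands exactly that $\Kw_i p$ be settled (either commonly known to hold or commonly known to fail) throughout the reachable region. The model exploits this gap by letting $\Kw_i p$ be true at $s$ but false at the reachable world $u$, so I would make sure to present the two $i$-successor sets ($\{t,u\}$ for $s$ versus $\{v,w\}$ for $u$) explicitly, as these are what drive the separation.
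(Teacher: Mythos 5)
Your proposal is correct and follows essentially the same route as the paper: reuse the model from Proposition~\ref{prop.cw32ntocw5}, cite the already-established $s\vDash\Cw_{32}p$, and refute $\Cw_4p$ by exhibiting reachable worlds disagreeing on $\Kw_i p$. The only (immaterial) difference is that you witness the failure of $\C\neg\Kw_i p$ at $s$ itself, while the paper uses the reachable world $t$; both work since $s\twoheadrightarrow s$ and $t$ each satisfy $\Kw_i p$.
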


\begin{proof}
Use the model in the proof of Prop.~\ref{prop.cw32ntocw5}. It has been shown there that $s\vDash\Cw_{32}p$. 

Moreover, $s\nvDash\Cw_4p$. To see this, note that $t\nvDash\neg\Kw_ip$ and $u\nvDash\Kw_ip$. This entails that $s\nvDash\C\Kw_ip\vee\C\neg\Kw_ip$. Therefore, $s\nvDash\Cw_4p$.
\end{proof}

\begin{proposition}\label{prop.cw31ntocw4}
$\nvDash\Cw_{31}\phi\to\Cw_4\phi$.
\end{proposition}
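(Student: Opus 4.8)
The plan is to reuse the counterexample model from the proof of Prop.~\ref{prop.cw32ntocw5}, which is the same five-state model already employed in Prop.~\ref{prop.cw31ntocw5}. In that model it has already been established that $s\vDash\Cw_{31}p$, so it remains only to verify that $s\nvDash\Cw_4p$. Recalling that $\Cw_4p$ unfolds to $\bigwedge_{a\in\G}(\C\Kw_ap\vee\C\neg\Kw_ap)$, it suffices to exhibit a single agent, say $i$, for which both disjuncts of the corresponding conjunct fail at $s$.

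First I would tabulate the truth value of $\Kw_ip$ at each state $\twoheadrightarrow$-reachable from $s$. The key observation is that at $u$ the two $i$-successors are $v$ (with $p$) and $w$ (with $\neg p$), so these successors disagree on the truth value of $p$ and hence $u\nvDash\Kw_ip$, whereas at $s$ the only $i$-successors $t,u$ both satisfy $p$, so $s\vDash\Kw_ip$. (The terminal states $v,w$ satisfy $\Kw_ip$ vacuously, and $t$ has the single $i$-successor $v$, so $t\vDash\Kw_ip$ as well.)

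Then the argument closes quickly. Since $u$ is reachable from $s$ via $\twoheadrightarrow$ and $u\nvDash\Kw_ip$, we obtain $s\nvDash\C\Kw_ip$. Since $s\twoheadrightarrow s$ and $s\vDash\Kw_ip$, i.e.\ $s\nvDash\neg\Kw_ip$, we obtain $s\nvDash\C\neg\Kw_ip$. Thus the conjunct $\C\Kw_ip\vee\C\neg\Kw_ip$ is false at $s$, whence $s\nvDash\Cw_4p$. Combined with $s\vDash\Cw_{31}p$ this yields $\nvDash\Cw_{31}p\to\Cw_4p$.

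There is essentially no serious obstacle here: the whole argument rests on noticing that the very state $u$ that makes $\Kw_ip$ fail (blocking the first disjunct) coexists with the reflexively reachable root $s$ where $\Kw_ip$ holds (blocking the second disjunct). The only point requiring care is the bookkeeping of which successors belong to agent $i$ versus agent $j$ in the model, since a slip there would alter the $\Kw_i$ values at $t$ and $u$ and break the counterexample.
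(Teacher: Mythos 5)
Your proposal is correct and follows essentially the same route as the paper: reuse the model from Prop.~\ref{prop.cw32ntocw5}, import $s\vDash\Cw_{31}p$ from Prop.~\ref{prop.cw31ntocw5}, and refute $\Cw_4p$ at $s$ by showing both disjuncts of $\C\Kw_ip\vee\C\neg\Kw_ip$ fail. The only (immaterial) difference is that the paper delegates $s\nvDash\Cw_4p$ to Prop.~\ref{prop.cw32ntocw4}, which uses $t\vDash\Kw_ip$ as the witness against $\C\neg\Kw_ip$, whereas you use the reflexively reachable $s$ itself; your bookkeeping of the $i$- versus $j$-successors is accurate.
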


\begin{proof}
Use the model in the proof of Prop.~\ref{prop.cw32ntocw5}. By Prop.~\ref{prop.cw32ntocw4}, we have $s\nvDash\Cw_4p$. By Prop.~\ref{prop.cw31ntocw5}, we obtain $s\vDash\Cw_{31}p$. Therefore, $\nvDash\Cw_{31}p\to\Cw_4p$.
\end{proof}

\begin{proposition}\label{th.n45k}
$\nvDash \Cw_4\phi\to \Cw_5\phi$.
\end{proposition}

\begin{proof}
Consider the model in the proof of Prop.~\ref{prop.cw4ntocw3}.
We have seen that $M,s\models Cw_4p$. However, as $M,s\vDash \neg Kw_ip$, we can obtain that $M,s\nvDash \Cw_5p$. So, $\nvDash \Cw_4 p\to \Cw_5p$.
\end{proof}

\begin{proposition}\label{prop.cw5ntocw4}
$\nvDash\Cw_5\phi\to\Cw_4\phi$.
\end{proposition}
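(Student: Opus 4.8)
The plan is to refute the validity by producing a single pointed model at which $\Cw_5 p$ holds but $\Cw_4 p$ fails; since both operators are already well understood on the witness models built earlier, I would reuse the model $\M$ from the proof of Prop.~\ref{prop.cw5ntocw32} rather than build a new one. In that proof it is already established that $\M,s\vDash\Cw_5 p$, so the only genuinely new task is to check that $\M,s\nvDash\Cw_4 p$, i.e. that the conjunct for agent $i$ in $\Cw_4 p=\bigwedge_{k\in\G}(\C\Kw_k p\vee\C\neg\Kw_k p)$ is refuted.

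For the $\Cw_4$ part I would argue by exhibiting, among the worlds reachable from $s$ via $\twoheadrightarrow$, one world where $\Kw_i p$ is true and one where it is false. At $s$ itself both $i$-successors $t_1,t_2$ satisfy $p$, so $\M,s\vDash\Kw_i p$; since $s\twoheadrightarrow s$, this already blocks $\C\neg\Kw_i p$. At the reachable world $u_1$ (with $s\twoheadrightarrow u_1$ along $s\to t_1\to u_1$) the two $i$-successors $v_1,v_2$ disagree on $p$, so $\M,u_1\nvDash\Kw_i p$, which blocks $\C\Kw_i p$. Hence $\M,s\nvDash\C\Kw_i p\vee\C\neg\Kw_i p$, and therefore $\M,s\nvDash\Cw_4 p$.

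Putting the two halves together yields a pointed model satisfying $\Cw_5 p$ but not $\Cw_4 p$, which gives $\nvDash\Cw_5 p\to\Cw_4 p$ and hence $\nvDash\Cw_5\phi\to\Cw_4\phi$. I expect no real obstacle here: the subtle direction ($\Cw_5 p$ holding) is already discharged in the cited proposition, and the remaining reachability check for the failure of $\Cw_4$ is routine once one notices that the common-knowledge modality $\C$ in $\Cw_4$ sees both a $\Kw_i$-true world and a $\Kw_i$-false world along $\twoheadrightarrow$, whereas the nested-$\Kw$ conjunction defining $\Cw_5$ is insensitive to this mismatch because each top-level $\Kw_{s_1}$ is evaluated only at $s$. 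If a self-contained witness were preferred, the same phenomenon could be reproduced on a two-agent variant of the single-agent model $\mathcal{N}$ from Prop.~\ref{prop.cw3ntocw2}, but reusing the existing model keeps the argument shortest.
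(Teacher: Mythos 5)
Your proof is correct and follows essentially the same route as the paper: both reuse the model from Prop.~\ref{prop.cw5ntocw32} (so that $\Cw_5 p$ at $s$ comes for free) and refute $\Cw_4 p$ by exhibiting one $\twoheadrightarrow$-reachable world where $\Kw_i p$ holds and one where it fails. The only cosmetic difference is your choice of witness for $\Kw_i p$ being true ($s$ itself rather than the paper's $u_3$), which changes nothing of substance.
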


\begin{proof}
Use the model in the proof of Prop.~\ref{prop.cw5ntocw32}. There, we have shown that $\M,s\vDash\Cw_5p$. It is now sufficient to prove that $\M,s\nvDash\Cw_4p$.

Note that $u_1\nvDash\Kw_ip$ and $u_3\nvDash\neg\Kw_ip$. This gives us $s\nvDash\C\Kw_ip\vee\C\neg\Kw_ip$, and therefore $s\nvDash\Cw_4p$.
\end{proof}

\weg{\begin{proposition}\label{th.n45k}
$\nvDash \Cw_4\phi\to \Cw_5\phi$.
\end{proposition}

\begin{proof}
Consider the following model:
\[
\xymatrix@R-10pt{
t_3:p\ar[r]|i\ar@(ur,ul)|i&t_4:\neg p\ar[l]|i\ar@(ur,ul)|i&&t_5:p\ar[r]|i\ar@(ur,ul)|i&t_6:\neg p\ar[l]|i\ar@(ur,ul)|i\\
&t_1:p\ar[ul]|i\ar[u]|i&&t_2:\neg p\ar[u]|i\ar[ur]|i&&\\
&&s:p\ar[ul]|i\ar[ur]|i&&\\}
\]

In this model, $Kw_ip$ is false on every world, thus $M,s\models Cw_4p$. However, as $M,s\models \neg Kw_ip$, we can obtain that $M,s\not\models \Cw_5p$. Therefore, $\nvDash \Cw_4 p\to \Cw_5p$.
\end{proof}}

\weg{\subsection{Single-agent Case}
The results above concern the multi-agent case. Considering the  single-agent case, observe that $\mathcal{K}\models E_1w\phi\leftrightarrow E_2w\phi$, which implies that $C_{21}w$ is equivalent to $C_{22}w$ and that $C_{31}w$ is equivalent to $C_{32}w$. Moreover, since only one agent is involved, it is trivial to find that $C_3w$ is equivalent to $C_5w$. 

In terms of logical relationships, the five definitions share the same relations with the multi-agent case over $\mathcal{K}$, $\mathcal{KD}45$, $\mathcal{T}$ and $\mathcal{S}5$, respectively.} 

We have thus completed the proof of Thm.~\ref{th.12345k}. Note that all proofs involved are based on $\mathcal{K}$. In fact, we can also explore the implication relations 
of $\Cw_1-\Cw_5$ over $\mathcal{KD}45$, over $\mathcal{T}$, and also over $\mathcal{S}5$. It turns out that the implication relations over $\mathcal{KD}45$ is the same as those over $\mathcal{K}$ (see Fig.~\ref{fig.one}), and the implication relations over $\mathcal{T}$ and $\mathcal{S}5$ is figured as follows (see Fig.~\ref{fig.s5}). We omit the proof details due to the space limitation.

\begin{figure}
\[
\scriptsize
\xymatrix@C-10pt{\Cw_1 \ar[r] \ar[d] &\Cw_{21} \ar[r]  \ar[l] &\Cw_{22} \ar[l] \ar[r] &\Cw_{31} \ar[r] \ar[l] &\Cw_{32} \ar[r] \ar[l] &\Cw_5 \ar[l] \\ \Cw_4 }
\]
\caption{Logical relationships over $\mathcal{T}$ and $\mathcal{S}5$ (multi-agent).}\label{fig.s5}
\end{figure}

Therefore, $\Cw_1$, $\Cw_2$, $\Cw_3$ and $\Cw_5$ boil down to the same thing once the frame is reflexive, which can be attributed to agents' agreements on the values of $\phi$. For example, if there is $\M,s\models \Kw_i\phi\wedge \Kw_j\phi$ where $\M$ is reflexive, the values of $\phi$ on all $i$-successors must agree with those on all $j$-successors since they share a common successor $s$. Comparatively, if $\M$ is not reflexive, the case of $\M,s\models \K_i\phi\wedge \K_j\neg \phi$ is possible.

\weg{\subsection{Over $\mathcal{T}$ and $\mathcal{S}5$}

The model of knowledge is generally defined over $\mathcal{S}5$ \cite{van2007dynamic,fagin2004reasoning}, which commits knowledge must be true via axiom $T$. The implication relations among the five definitions of Section 2 over $\mathcal{T}$ or $\mathcal{S}5$ are given in the  following theorem.

\begin{theorem}\label{th.12345t}

Over $\mathcal{T}$ and $\mathcal{S}5$, the implication relations between $Cw_{1,2,3,4,5}$ are as in Fig ~\ref{fig.s5}.

\end{theorem}

\begin{figure}
\[
\scriptsize
\xymatrix@C-10pt{Cw_1\phi \ar[r] \ar[d] &Cw_{21}\phi \ar[r]  \ar[l] &Cw_{22}\phi \ar[l] \ar[r] &Cw_{31}\phi \ar[r] \ar[l] &Cw_{32}\phi \ar[r] \ar[l] &Cw_5\phi \ar[l] \\Cw_4\phi }
\]
\caption{Logical relationships over $\mathcal{T}$ and $\mathcal{S}5$ (multi-agent).}\label{fig.s5}
\end{figure}

$Cw_1\phi$, $Cw_2\phi$, $Cw_3\phi$ and $Cw_5\phi$ boil down to the same thing once the frame is reflexive, which should be attributed to agents' agreements on the values of $\phi$. For example, if there is $M,s\models Kw_i\phi\wedge Kw_j\phi$ where $M$ is reflexive, the values of $\phi$ on all $i$-successors must agree with those on all $j$-successors since they share a common successor $s$. Comparatively, if $M$ is not reflexive, the case of $M,s\models K_i\phi\wedge K_j\neg \phi$ is possible.

\subsection{Over $\mathcal{KD}45$}

$\mathcal{KD}45$ is considered to be the usual class of frames for doxastic (belief) logic \cite{kraus1988knowledge,van2007dynamic}. The implication relations over $\mathcal{KD}45$ among these five definitions are shown in the following theorem.

\begin{theorem}\label{th.12345kd45}

Over $\mathcal{KD}45$, the implication relations are as in Figure \ref{fig.one}.
\end{theorem}}

\weg{\begin{figure}\label{fig.irk}
\[
\xymatrix@R-10pt{  &&& C_5w\phi\\ C_1w\phi \ar[r] & C_{21}w\phi\ar[r]
& C_{22}w\phi \ar[ur] \ar[dr] \ar[d] \ar[r] & C_4w\phi \\ && C_{32}w\phi  & C_{31}w\phi}
\]
\caption{Logical relationships over $\mathcal{KD}45$ and $\mathcal{K}$}~\label{fig.one}
\end{figure}}

\section{Axiomatization}\label{sec.axiomatization}

$\mathcal{S}5$ is the class of frames specifically for knowledge or epistemic description. As mentioned above, over $\mathcal{S}5$, the five definitions of `commonly knowing whether' are logically equivalent except $\Cw_4$. Also, one may verify that $\Ew_1$ and $\Ew_2$ are logically equivalent over $\mathcal{S}5$. In this section, we axiomatize $\PLKwCw$ over $\mathcal{S}5$. The language $\PLKwCw$ can now be defined recursively as follows:
\[
\begin{array}{l}
\phi::= p \mid \neg \phi \mid (\phi\land\phi)\mid \Kw_i\phi \mid \Cw\phi,\\
\end{array}
\]
where $\Cw$ means $\Cw_1$, and $\Ew\phi$ abbreviates $\bigwedge_{i\in\G}\Kw_i\phi$.

The semantics of $\PLKwCw$ is defined as before in addition to the semantics of $\Cw$ as follows.
\begin{definition}
   $\M,w\vDash \Cw\phi  \Leftrightarrow for\ all\  u,v,\ if\ w\twoheadrightarrow u\ and\ w\twoheadrightarrow v,\ then\  (\M,u\vDash\phi\iff \M,v\vDash\phi).$
\end{definition}

\subsection{Proof system and Soundness}

The proof system $\mathbb{PLCKW}5$ is an extension of the axiom system of the logic of `knowing whether' $\mathbb{CLS}5$ in \cite{Fan2015CONTINGENCY} plus the axioms and rules concerning $\Cw$.

\begin{definition}
The axiomatization of $\mathbb{PLCKWS}5$ consists of the following axiom schemas and inference rules:

\begin{center}
 \begin{tabular}{c c c c} 
 
All instances of tautologies & (TAUT) \\

$\Kw_i(\chi\to\phi)\wedge \Kw_i(\neg\chi\to\phi)\to \Kw_i\phi$ & (Kw-CON)\\

$\Kw_i\phi\to \Kw_i(\phi\to\psi)\vee \Kw_i(\neg\phi\to\chi)$ & (Kw-DIS)\\

$\Kw_i\phi\wedge \Kw_i(\phi\to \psi)\wedge \phi\to \Kw_i\psi$ & (Kw-T)\\

$\neg \Kw_i\phi\to\Kw_i\neg \Kw_i\phi$ & (wKw-5)\\

$\Kw_i\varphi \leftrightarrow \Kw_i \neg\varphi$ & (Kw-$\leftrightarrow$)\\

$\Cw(\chi\to\phi)\wedge \Cw(\neg\chi\to\phi)\to \Cw\phi$ & (Cw-CON)\\

$\Cw\phi\to \Cw(\phi\to\psi)\vee \Cw(\neg\phi\to\chi)$ & (Cw-DIS)\\

$\Cw\phi\wedge \Cw(\phi\to \psi)\wedge \phi\to \Cw\psi$ & (Cw-T)\\

$\Cw\phi\to (\Ew\phi\wedge \Ew\Cw\phi)$ & (Cw-Mix)\\

$\Cw(\phi\to \Ew\phi)\to (\phi\to \Cw\phi)$ & (Cw-Ind)\\

from $\phi$  infer $\Kw_i\phi$ & (Kw-NEC)\\

from $\phi$ infer $\Cw\phi$ & (Cw-NEC)\\

from $\phi\leftrightarrow\psi$ infer $ \Kw_i\phi\leftrightarrow \Kw_i\psi$ & (Kw-RE)\\

from $\phi\leftrightarrow\psi$ infer $\Cw\phi\leftrightarrow \Cw\psi$ & (Cw-RE)\\

from $\phi$ and $\phi\to\psi$ infer $\psi$ & (MP)
\end{tabular}
\end{center}
\end{definition}



\begin{proposition}
$\mathbb{PLCKWS}5$ is sound with respect to $\mathcal{S}5$.
\end{proposition}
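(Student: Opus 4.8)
The plan is to prove soundness in the standard way: show that every axiom schema is valid over $\mathcal{S}5$ and that every inference rule preserves $\mathcal{S}5$-validity. Since $\mathbb{PLCKWS}5$ is an extension of the knowing-whether system $\mathbb{CLS}5$, I would first dispose of (TAUT), (Kw-CON), (Kw-DIS), (Kw-T), (wKw-5), (Kw-$\lra$), (Kw-NEC), (Kw-RE) and (MP) by appealing to the soundness of $\mathbb{CLS}5$ over $\mathcal{S}5$, so the real work concerns the $\Cw$-schemas. The key structural observation I would record up front is this: over $\mathcal{S}5$ each $R_i$ is reflexive and Euclidean, hence an equivalence relation; consequently $\to=\bigcup_{i\in\G}R_i$ is reflexive and symmetric, and its reflexive-transitive closure $\twoheadrightarrow$ is again an equivalence relation. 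Moreover, the semantics of $\Cw$ is literally the knowing-whether truth condition read off the relation $\twoheadrightarrow$. Therefore (Cw-CON), (Cw-DIS), (Cw-T), together with (Cw-NEC) and (Cw-RE), are valid and validity-preserving by exactly the arguments that justify their $\Kw$-counterparts in $\mathbb{CLS}5$, only instantiated at the equivalence relation $\twoheadrightarrow$ in place of $R_i$. This reduces the genuinely new content to (Cw-Mix) and (Cw-Ind).

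For (Cw-Mix), $\Cw\phi\to(\Ew\phi\wedge\Ew\Cw\phi)$, fix a pointed model and a world $w$, and let $C$ denote the $\twoheadrightarrow$-equivalence class of $w$. Since $w R_i u$ implies $w\twoheadrightarrow u$, we have $R_i[w]\subseteq C$ for every $i$. If $\Cw\phi$ holds at $w$ then $\phi$ has constant truth value on $C$, hence on each $R_i[w]$, giving $\Kw_i\phi$ for all $i$, i.e.\ $\Ew\phi$. For the second conjunct, I would use that $\twoheadrightarrow$ is an equivalence relation: any $u$ with $w\twoheadrightarrow u$ satisfies $\twoheadrightarrow[u]=C$, so the truth value of $\Cw\phi$ (which depends only on the class) is the same at $u$ as at $w$; thus $\Cw\phi$ is constant on $C$, in particular on each $R_i[w]$, yielding $\Kw_i\Cw\phi$ for all $i$, i.e.\ $\Ew\Cw\phi$.

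The main obstacle will be (Cw-Ind), $\Cw(\phi\to\Ew\phi)\to(\phi\to\Cw\phi)$, which is the induction axiom and where the whether-reading genuinely complicates the usual common-knowledge argument, because constancy of a formula across $C$ can mean either uniformly true or uniformly false. I would argue as follows. Fix $w$, assume $\Cw(\phi\to\Ew\phi)$ and $\phi$ hold at $w$, and again write $C$ for the $\twoheadrightarrow$-class of $w$; it suffices to show $\phi$ is constant on $C$, since that is precisely $\Cw\phi$ at $w$. The hypothesis says $\phi\to\Ew\phi$ has constant truth value on $C$, so I split into two cases. If $\phi\to\Ew\phi$ is false at every point of $C$, then $\phi$ is \emph{true} at every point of $C$ (falsity of $\phi\to\Ew\phi$ forces $\phi$), and we are done. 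If instead $\phi\to\Ew\phi$ is true throughout $C$, I propagate truth of $\phi$ outward from $w$ by induction on $\to$-path length: at any $u\in C$ where $\phi$ holds, $\Ew\phi$ holds, so each $\Kw_i\phi$ holds, and by reflexivity $u\in R_i[u]$ pins the constant value of $\phi$ on $R_i[u]$ to $\phi(u)=\text{true}$; hence $\phi$ is true on $\to[u]=\bigcup_i R_i[u]$, and iterating gives $\phi$ true on all of $\twoheadrightarrow[w]=C$. In both cases $\phi$ is constant on $C$, so $\Cw\phi$ holds at $w$. Finally, (MP), (Kw-NEC)/(Cw-NEC) and (Kw-RE)/(Cw-RE) preserve $\mathcal{S}5$-validity by the usual elementary checks. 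I expect the case distinction driven by the whether-operator, combined with the reflexivity step that converts $\Kw_i\phi$ plus $\phi(u)$ into ``$\phi$ true on $R_i[u]$'', to be the only delicate point.
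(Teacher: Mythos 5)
Your proposal is correct and follows essentially the same route as the paper: discharge the $\Kw$-axioms via the soundness of $\mathbb{CLS}5$, treat (Cw-CON), (Cw-DIS), (Cw-T), (Cw-NEC), (Cw-RE) as $\twoheadrightarrow$-instances of their $\Kw$-counterparts, and concentrate on (Cw-Mix) and (Cw-Ind), where your case split on the constant truth value of $\phi\to\Ew\phi$ followed by induction on $\to$-path length using reflexivity is exactly the paper's argument. The only cosmetic difference is that for (Cw-Mix) you invoke the equivalence-class structure of $\twoheadrightarrow$ over $\mathcal{S}5$, whereas the paper runs a reductio that only needs transitivity of $\twoheadrightarrow$; both are fine.
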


   \begin{proof}
The soundness of the axioms (Kw-CON), (Kw-DIS), (Kw-T), (wKw-5) and the soundness of the rules (Kw-NEC), (Kw-RE), (MP) are already proved in \cite{Fan2015CONTINGENCY}. Moreover, the soundness of (Cw-CON), (Cw-DIS), (Cw-T), (Cw-NEC) and (Cw-RE) can be shown as their Kw-counterparts. It suffices to show the soundness of (Cw-Mix) and (Cw-Ind).  Let $\M=\lr{W,\{R_i\mid i\in\G\},V}$ be an arbitrary $\mathcal{S}5$-model and $w\in W$.

\medskip

For (Cw-Mix): Suppose that $\M,w\vDash\Cw\phi$, to show that $\M,w\vDash \Ew\phi\land\Ew\Cw\phi$. We only show that $\M,w\vDash\Ew\Cw\phi$, since $\M,w\vDash\Ew\phi$ is straightforward. 

If $\M,w\nvDash\Ew\Cw\phi$, then for some $i\in\G$ and some $u,v$ such that $wR_iu$ and $wR_iv$ we have $\M,u\vDash\Cw\phi$ and $\M,v\nvDash\Cw\phi$. Then there exist $v_1,v_2$ such that $v\twoheadrightarrow v_1$ and $v\twoheadrightarrow v_2$ and $\M,v_1\vDash\phi$ and $\M,v_2\nvDash\phi$. Then $w\twoheadrightarrow v_1$ and $w\twoheadrightarrow v_2$, and therefore $\M,w\nvDash\Cw\phi$, which is contrary to the supposition.

\medskip

For (Cw-Ind):  Suppose that $\M,w\vDash \Cw(\phi\to \Ew\phi)\wedge\phi$. We show a stronger result: $(\ast)$: for all $u$ such that $w\twoheadrightarrow u$, we have $\M,u\vDash\phi$; equivalently, for all $n\in\mathbb{N}$, for all $u$ such that $w\to^nu$, we have $\M,u\vDash\phi$ (Recall that $\twoheadrightarrow=\bigcup_{n\in\mathbb{N}}\to^n$). From $\M,w\vDash\Cw(\phi\to\Ew\phi)$, it follows that {\em either} (1) for all $u$ such that $w\twoheadrightarrow u$ we have $\M,u\vDash\phi\to\Ew\phi$ {\em or} (2) for all $u$ such that $w\twoheadrightarrow u$ we have $\M,u\vDash\phi\land\neg\Ew\phi$. The case (2) immediately gives us $(\ast)$. In the remainder, it suffices to consider the case (1).

We proceed with induction on $n$.

Base step (i.e. $n=0$). In this case, we need to show that $\M,w\vDash\phi$. This follows directly from the supposition.

Inductive step. Assume by induction hypothesis (IH) that $(\ast)$ holds for $n=k$. We show that $(\ast)$ also holds for $n=k+1$. Hypothesize that $w\to^{k+1}u$, then there exists $v$ such that $w\to^kv\to u$. By IH, $\M,v\vDash\phi$. Obviously, $w\twoheadrightarrow v$, then using (1) we infer that $\M,v\vDash\phi\to\Ew\phi$, thus $\M,v\vDash\Ew\phi$. Together with the fact that $\M,v\vDash\phi$ and the reflexivity of $\to$ (since $R_i$ is reflexive for all $i\in\G$), this would implies that for all $x$ such that $v\to x$ we have $\M,x\vDash\phi$, and therefore $\M,u\vDash\phi$, as desired.
\end{proof}

\subsection{Completeness of $\mathbb{PLCKWS}5$}

We follow the basic idea on proving completeness of the logic of common knowledge to prove the completeness of $\mathbb{PLCKWS}5$.

\begin{definition}
The {\em closure} of $\phi$, denoted as $cl(\phi)$, is the smallest set satisfying following conditions:
\begin{enumerate}
    \item $\phi\in cl(\phi)$.
    \item if $\psi\in cl(\phi)$, then $sub(\psi)\subseteq cl(\phi)$.
    \item if $\psi\in cl(\phi)$ and $\psi$ is not itself of the form $\neg\chi$, then $\neg\psi\in cl(\phi)$.
    \item if $\Kw_i\psi\in cl(\phi)$ and $\Kw_i\chi\in cl(\phi)$, and $\chi$ and $\psi$ are not themselves  conditionals, then $\Kw_i(\chi\to\psi)\in cl(\phi)$.
    \item if $\Cw\psi\in cl(\phi)$, then $\{\Kw_i\Cw\psi \mid i\in \G\}\subseteq cl(\phi)$.
    \item If $\Cw\psi\in cl(\phi)$, then $\{\Kw_i\psi\mid i\in\G\}\subseteq cl(\phi)$.
    \item If $\neg \Kw_i\psi_1\in cl(\phi)$, $\psi_1$ is not a negation and $\Cw\psi_2\in cl(\phi)$, then $\Kw_i\neg(\psi_1\wedge\neg\psi_2)\in cl(\phi)$.
\end{enumerate}
\end{definition}

The definition of the canonical model for $\mathbb{PLCKWS}5$ is inspired by the canonical models of $\mathbb{CLS}5$ in \cite{Fan2015CONTINGENCY}, where the definition of the  canonical relation is inspired by an almost schema.

\begin{definition}
$\Phi$ is a closure of some formula. We define {\em the canonical model} based on $\Phi$ as $\M^c=\langle W^c, \{R_i^T\mid i\in\G\}, V^c \rangle$ where:

\begin{enumerate}
    \item $W^c=\{\Sigma\mid\Sigma\ is\ maximal\ consistent\ in\ \Phi\}$.
    \item For each $i\in \G$, let $R_i^T$  be the reflexive closure of $R_i^c$, where $\Sigma R_i^c \Delta$ iff there exists an $\chi$ which is not a conditionals, such that: 
          \begin{enumerate}
              \item $\neg \Kw_i\chi\in \Sigma$ and 
              \item for all $\phi\in\Phi$: ($\Kw_i\phi\in\Sigma$ and $\Kw_i\neg(\phi\wedge\neg\chi)\in \Sigma$) implies $\phi\in\Delta$.
          \end{enumerate}
          
    \item $V^c(p)=\{\Sigma\in W^c\mid p\in\Sigma\}$.
\end{enumerate}
\end{definition}

Here, it should be noticed that it has already been proved that $R_i^T$ is an equivalence relation in \cite{Fan2015CONTINGENCY}, which means we no longer need the  transitive and symmetric closure. A useful proposition should also be given in advance.

\begin{proposition}\label{prop.bigwedge}
For all $n>1$:
\begin{enumerate}
    \item $\vdash (Kw_i(\bigwedge_{k=1}^n\phi_k\to\neg\psi)\wedge \bigwedge_{k=1}^n Kw_i\phi_k\wedge \bigwedge_{k=1}^n Kw_i(\psi\to\phi_k))\to Kw_i\psi$
    \item $\vdash (Kw_i(\bigwedge_{k=1}^n\phi_k\to\psi)\wedge \bigwedge_{k=1}^n Kw_i\phi_k\wedge \bigwedge_{k=1}^n Kw_i(\neg\psi\to\phi_k))\to Kw_i\psi$
\end{enumerate}
\end{proposition}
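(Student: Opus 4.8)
The plan is to prove both statements by reducing the general-$n$ claim to an ``$n=1$'' core together with an $n$-ary conjunction lemma for $\Kw_i$. Writing $\Phi_n:=\bigwedge_{k=1}^n\phi_k$, I would first note that the tautology $\bigwedge_{k=1}^n(\psi\to\phi_k)\lra(\psi\to\Phi_n)$ combined with (Kw-RE), plus the conjunction lemma $\vdash\bigwedge_{k=1}^n\Kw_i\phi_k\to\Kw_i\Phi_n$ (and its instance applied to the formulas $\psi\to\phi_k$), turns the hypotheses $\bigwedge_k\Kw_i\phi_k$ and $\bigwedge_k\Kw_i(\psi\to\phi_k)$ into the single facts $\Kw_i\Phi_n$ and $\Kw_i(\psi\to\Phi_n)$. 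Granting this, statement~1 collapses (taking $\Phi:=\Phi_n$) to the core $\vdash\Kw_i(\Phi\to\neg\psi)\wedge\Kw_i\Phi\wedge\Kw_i(\psi\to\Phi)\to\Kw_i\psi$, and statement~2 to $\vdash\Kw_i(\Phi\to\psi)\wedge\Kw_i\Phi\wedge\Kw_i(\neg\psi\to\Phi)\to\Kw_i\psi$. The work thus splits into (a) the conjunction lemma and (b) the core.

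The decisive tool throughout is (Kw-T), $\Kw_i\chi\wedge\Kw_i(\chi\to\theta)\wedge\chi\to\Kw_i\theta$, which propagates ``knowing whether'' across an implication once the antecedent is known to be \emph{true}. For (b) I would therefore do a purely propositional case split on $\Phi$ (justified by (TAUT) and (MP): from $\vdash H\wedge\Phi\to\Kw_i\psi$ and $\vdash H\wedge\neg\Phi\to\Kw_i\psi$ infer $\vdash H\to\Kw_i\psi$). In the branch where $\Phi$ holds I apply (Kw-T) with $\chi=\Phi$, $\theta=\neg\psi$ to $\Kw_i\Phi$ and the hypothesis $\Kw_i(\Phi\to\neg\psi)$ to get $\Kw_i\neg\psi$, and then (Kw-$\leftrightarrow$) yields $\Kw_i\psi$. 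In the branch where $\neg\Phi$ holds I rewrite $\Kw_i(\psi\to\Phi)$ to $\Kw_i(\neg\Phi\to\neg\psi)$ by contraposition and (Kw-RE), obtain $\Kw_i\neg\Phi$ from $\Kw_i\Phi$ by (Kw-$\leftrightarrow$), and apply (Kw-T) with $\chi=\neg\Phi$, $\theta=\neg\psi$, again closing with (Kw-$\leftrightarrow$). Statement~2's core is completely symmetric: one uses $\Kw_i(\Phi\to\psi)$ directly in the $\Phi$-branch and the contrapositive of $\Kw_i(\neg\psi\to\Phi)$, i.e.\ $\Kw_i(\neg\Phi\to\psi)$, in the $\neg\Phi$-branch.

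The main obstacle is the conjunction lemma (a): the obvious attempt to derive $\Kw_i(\alpha\wedge\beta)$ from $\Kw_i\alpha\wedge\Kw_i\beta$ by splitting with (Kw-CON) is circular, since each case reproduces the target (or the equivalent $\Kw_i(\alpha\to\beta)$). I would break the circle, once more, with (Kw-T) under a propositional case split on the truth values of $\alpha$ and $\beta$ (the exhaustive cases being $\alpha\wedge\beta$, $\neg\alpha$, $\neg\beta$). The only nontrivial branch is $\alpha\wedge\beta$: there I first derive $\Kw_i(\alpha\to\beta)$ by (Kw-T) from $\Kw_i\beta$, the truth of $\beta$, and $\Kw_i(\beta\to(\alpha\to\beta))$ (a (Kw-NEC) consequence of a tautology), then rewrite it as $\Kw_i(\alpha\to(\alpha\wedge\beta))$ by (Kw-RE) and feed it, with $\Kw_i\alpha$ and the truth of $\alpha$, into a second (Kw-T) to obtain $\Kw_i(\alpha\wedge\beta)$; in the branches $\neg\alpha$ and $\neg\beta$ the conjunction is false, so (Kw-T) on $\Kw_i\neg\alpha$ (resp.\ $\Kw_i\neg\beta$) with the tautology $\neg\alpha\to\neg(\alpha\wedge\beta)$ and (Kw-$\leftrightarrow$) gives $\Kw_i(\alpha\wedge\beta)$ at once. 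The $n$-ary form then follows by a routine induction on $n$, and assembling (a) and (b) finishes both parts. Notably, only (Kw-T), (Kw-$\leftrightarrow$), (Kw-RE), (Kw-NEC), (TAUT) and (MP) are needed; (Kw-CON) and (Kw-DIS) are not, which is exactly why exploiting the $\mathcal{S}5$-reflexivity encoded in (Kw-T) is the key move.
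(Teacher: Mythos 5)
Your derivation is correct, but it is not the route the paper takes: the paper does not actually prove this proposition in-house --- it imports item~1 wholesale from the cited work on the logic of `knowing whether' \cite{Fan2015CONTINGENCY}, where the lemma is established for the minimal system (over arbitrary frames, by reasoning built on (Kw-CON)/(Kw-DIS) rather than on reflexivity), and then obtains item~2 from item~1 essentially by the negation-symmetry of $\Kw_i$ (substitute $\neg\psi$ for $\psi$ and clean up with (Kw-$\lra$) and (Kw-RE)). You instead give a self-contained derivation inside $\mathbb{PLCKWS}5$ whose engine is (Kw-T): the object-level case split on $\Phi_n$ (and on $\alpha,\beta$ for your auxiliary lemma) always supplies a \emph{true} antecedent to feed into (Kw-T), and I checked that both the conjunction lemma $\vdash\Kw_i\alpha\wedge\Kw_i\beta\to\Kw_i(\alpha\wedge\beta)$ and the two ``cores'' go through exactly as you describe, using only axiom instances and propositional reasoning on theorem schemas. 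The trade-off is clear: your argument is short and verifiable without leaving the paper, which is a genuine plus given that the paper's ``proof'' is a citation; but it is intrinsically tied to reflexive frames, so it would not establish the lemma for the base `knowing whether' logic over $\mathcal{K}$ --- which is what the cited proof provides and what one would need for the $\mathcal{K}$-axiomatizations the conclusion lists as future work. Two minor remarks: the paper's one-line derivation of item~2 from item~1 is cheaper than re-running your symmetric core argument, and your observation that (Kw-CON)/(Kw-DIS) are dispensable holds only in the $\mathcal{S}5$ setting for exactly the reason above.
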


The proof of Proposition \ref{prop.bigwedge}.1 is given in \cite{Fan2015CONTINGENCY} and Proposition \ref{prop.bigwedge}.2 can thereby be directly derived.

\begin{lemma}\label{le.lindenbaum}
(Lindenbaum's Lemma) Let $\Phi$ be the closure of some formula. Every consistent subset of $\Phi$ is a subset of a maximal consistent set in $\Phi$.
\end{lemma}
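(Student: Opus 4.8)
The plan is to read this as the finite, closure-bounded analogue of the classical Lindenbaum Lemma, exactly as in the completeness proof for the logic of common knowledge that the authors are following. The key preliminary observation is that $\Phi = cl(\phi)$ is a \emph{finite} set. A routine inspection of the seven closure clauses shows each one contributes only finitely many new formulas to an already finite stock and never feeds its own output back into a clause that could trigger unbounded growth: for instance, clause~4 produces formulas $\Kw_i(\chi\to\psi)$ whose argument is itself a conditional, so it cannot re-enter clause~4, and similarly the outputs of clause~7 are $\Kw_i$-formulas with a negated argument, which cannot re-trigger clauses~4 or~7. Hence the least fixed point $\Phi$ is finite, and this is what lets us avoid any appeal to Zorn's Lemma.

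Given finiteness, I would fix an enumeration $\Phi = \{\psi_1,\dots,\psi_m\}$ and, starting from a consistent $\Delta \subseteq \Phi$, build an increasing chain $\Delta = \Sigma_0 \subseteq \Sigma_1 \subseteq \cdots \subseteq \Sigma_m$ by the greedy rule $\Sigma_k = \Sigma_{k-1}\cup\{\psi_k\}$ when this set is consistent in $\mathbb{PLCKWS}5$, and $\Sigma_k = \Sigma_{k-1}$ otherwise; then set $\Sigma := \Sigma_m$. Containment $\Delta \subseteq \Sigma$ is immediate from the construction. Consistency of $\Sigma$ follows by a finite induction on $k$: $\Sigma_0 = \Delta$ is consistent by hypothesis, and at each stage a formula is added only when doing so preserves consistency, so every $\Sigma_k$ is consistent. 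For maximality, suppose $\psi_k \in \Phi \setminus \Sigma$; then $\psi_k$ was rejected at stage $k$, meaning $\Sigma_{k-1}\cup\{\psi_k\}$ is inconsistent, and since $\Sigma_{k-1}\subseteq \Sigma$, monotonicity of inconsistency gives that $\Sigma\cup\{\psi_k\}$ is inconsistent as well. Thus no formula of $\Phi$ can be consistently added to $\Sigma$, so $\Sigma$ is maximal consistent in $\Phi$ and contains $\Delta$, as required.

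There is no genuine obstacle here; the only points requiring care are bookkeeping ones. First, one must keep the two relativizations distinct: \emph{consistency} is measured against the full derivability relation $\vdash$ of $\mathbb{PLCKWS}5$ (so that a finite $\Delta\subseteq\Phi$ is inconsistent exactly when $\vdash\neg\bigwedge\Delta$), whereas \emph{maximality} is measured only against membership in $\Phi$. Second, the maximality step relies solely on the fact that a superset of an inconsistent set is inconsistent, which is immediate from the $\vdash$-based definition of consistency. The finiteness of $\Phi$ established at the outset guarantees the construction terminates after $m$ steps, so the entire argument is a terminating greedy procedure rather than a transfinite one.
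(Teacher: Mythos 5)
Your greedy-enumeration argument is the standard proof, and it is exactly what the paper has in mind: the paper omits the proof of this lemma entirely, remarking only that it is standard, so your construction (extend a consistent $\Delta\subseteq\Phi$ one formula at a time, adding $\psi_k$ whenever consistency is preserved, and read off maximality from monotonicity of inconsistency) matches the intended route.

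One caveat on your preliminary claim that $\Phi=cl(\phi)$ is finite ``by routine inspection'': the inspection is less routine than you suggest, and as the closure is literally defined it is not clear that it is finite at all. Clause~7 applied to $\neg\Kw_i(\chi\to\psi)$ (which enters via clause~3 from a clause~4 output, since a conditional is not a negation) produces $\Kw_i\neg((\chi\to\psi)\wedge\neg\psi_2)$, whose argument is a negation and hence \emph{not} a conditional; that formula is therefore again eligible as an input to clause~4, whose output is a new, strictly longer conditional, which via clause~3 re-enters clause~7, and so on. So the cycle clauses $3\to 7\to 4\to 3$ can generate formulas of unbounded length unless one reads the clauses more restrictively than they are stated. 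Fortunately this does not endanger the Lindenbaum Lemma itself: $\Phi$ is in any case a countable set of formulas, so the same enumeration works with $\Sigma:=\bigcup_k\Sigma_k$ at the end, consistency of the union following from the finiteness of derivations. You should either decouple your proof from the finiteness claim in this way, or justify finiteness with an explicit complexity measure rather than by appeal to routineness.
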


The proof of Lemma \ref{le.lindenbaum} is standard. In order to prove the completeness of $\mathbb{PLCKWS}5$, we refer to the basic idea of the completeness of classical common knowledge in \cite{van2007dynamic} and \cite{fagin2004reasoning} and Lemma \ref{le.Cwtruthlemma} should be proved in advance.

\begin{definition}\label{def.path}
($\alpha$-path) In the canonical model $M^c$, if $\Sigma\twoheadrightarrow\Delta$, then the sequence of maximal-consistent sets $l=\langle \Gamma_0,\Gamma_1,\cdots,\Gamma_n \rangle$ satisfying following two conditions:\footnote{Here we abuse the notation and use $\twoheadrightarrow$ to denote the reflexive-transitive closure of the union of the canonical relations.}
\begin{enumerate}
    \item $\Sigma=\Gamma_0$, $\Delta=\Gamma_n$;
    \item for any $k$ ($0\leq k<n$), there is an agent $i\in G$ such that $\Gamma_k R^T_i\Gamma_{k+1}$;
    \item for any $m$ ($0\leq m \leq n$), $\alpha\in\Gamma_m$.
\end{enumerate}

is an {\em $\alpha$-path} from $\Sigma$ to $\Delta$.
\end{definition}

\begin{lemma}\label{le.Cwtruthlemma}
If $\Cw\phi\in\Phi$, then $\Cw\phi\in\Sigma$ iff every path from $\Sigma$ is a $\phi$-path or every path from $\Sigma$ is a $\neg\phi$-path.
\end{lemma}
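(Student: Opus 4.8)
The plan is to prove Lemma~\ref{le.Cwtruthlemma} by establishing both directions of the biconditional, treating the canonical model $\M^c$ and using the definition of $\alpha$-path (Def.~\ref{def.path}) together with the mixing axiom (Cw-Mix) and the induction axiom (Cw-Ind). The right-to-left direction should be the easier one: assuming $\Cw\phi\notin\Sigma$, I would produce both a $\phi$-reachable world and a $\neg\phi$-reachable world, i.e.\ exhibit a path witnessing $\phi$ somewhere and a path witnessing $\neg\phi$ somewhere, contradicting the disjunction ``every path is a $\phi$-path or every path is a $\neg\phi$-path''. Concretely, since $\Cw\phi\in\Phi$ but $\Cw\phi\notin\Sigma$, maximal consistency in $\Phi$ gives $\neg\Cw\phi\in\Sigma$; unwinding the intended semantics of $\Cw$ in the canonical model, this should force the existence of reachable maximal consistent sets $\Delta_1\ni\phi$ and $\Delta_2\ni\neg\phi$. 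The main work here is to show that canonical reachability $\Sigma\twoheadrightarrow\Delta$ behaves correctly, which is precisely what the (Cw-Mix) axiom secures in its first conjunct $\Cw\phi\to\Ew\phi$ and its second conjunct $\Cw\phi\to\Ew\Cw\phi$.

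For the harder left-to-right direction, I would assume $\Cw\phi\in\Sigma$ and argue that the disjunction on paths holds. I expect to proceed by contraposition combined with the induction axiom. Suppose for contradiction that neither disjunct holds: then there is a path from $\Sigma$ that is \emph{not} a $\phi$-path and another that is \emph{not} a $\neg\phi$-path, meaning $\phi$ fails somewhere reachable and $\neg\phi$ fails (i.e.\ $\phi$ holds) somewhere reachable. The idea is to show this contradicts $\Cw\phi\in\Sigma$ via the fixed-point/mixing behaviour: from $\Cw\phi\in\Sigma$ and repeated application of (Cw-Mix), every agent-step preserves membership of $\Cw\phi$, so $\Cw\phi$ propagates along every path, and then the first conjunct of (Cw-Mix) forces all reachable worlds to agree on the truth value of $\phi$. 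The key lemma I would isolate is a ``transfer'' statement: if $\Cw\phi\in\Gamma_k$ and $\Gamma_k R_i^T\Gamma_{k+1}$, then $\Cw\phi\in\Gamma_{k+1}$ and the two sets agree on $\phi$; this is exactly the combination $\Cw\phi\to\Ew\Cw\phi$ (propagation) and $\Cw\phi\to\Ew\phi$ plus $\Kw_i$-agreement (value agreement). Iterating along the finite path yields that all $\Gamma_m$ share one truth value of $\phi$, so every path from $\Sigma$ is uniformly a $\phi$-path or uniformly a $\neg\phi$-path.

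The main obstacle I anticipate is the interaction between the nonstandard ``knowing whether'' semantics and reachability. Unlike ordinary common knowledge, where $\C\phi$ forces $\phi$ everywhere reachable, here $\Cw\phi$ only forces \emph{agreement} on the value of $\phi$ across reachable worlds, not a fixed value; so the induction must carefully thread a single truth value determined at $\Sigma$ itself and then show it is preserved, rather than assuming $\phi$ holds. The delicate point is guaranteeing that the canonical relation $R_i^T$ (built from the almost-schema condition via $\neg\Kw_i\chi$ and the $\Kw_i\neg(\phi\wedge\neg\chi)$ clause) actually realizes the ``agreement'' promised by $\Kw_i$ at the syntactic level inside the closure $\Phi$. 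I would lean on Prop.~\ref{prop.bigwedge} to handle conjunctions of $\Kw_i$-formulas along a step, and on the closure conditions~(5)--(7) of $cl(\phi)$ to ensure that all the required formulas ($\Kw_i\Cw\psi$, $\Kw_i\psi$, and the witnesses $\Kw_i\neg(\psi_1\wedge\neg\psi_2)$) are present in $\Phi$, so that maximal consistency can be invoked to move membership from $\Gamma_k$ to $\Gamma_{k+1}$.

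Finally, I would assemble the two directions into the biconditional, taking care that the finiteness of each path (guaranteed by Def.~\ref{def.path} giving a finite sequence $\langle\Gamma_0,\dots,\Gamma_n\rangle$) makes the inductive transfer terminate, and that the uniform truth value established along any single path is in fact the same across \emph{all} paths because it is already pinned down by whether $\phi\in\Sigma$ at the common origin $\Sigma=\Gamma_0$.
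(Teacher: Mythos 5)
Your outline of the left-to-right direction is essentially the paper's argument: induction on path length with the strengthened hypothesis that $\Cw\phi$ itself propagates along every step (via $\Cw\phi\to\Ew\Cw\phi$ from (Cw-Mix)), while $\Cw\phi\to\Ew\phi$ together with a careful unpacking of $R_i^T$, (Kw-DIS) and (Kw-T) forces the truth value of $\phi$ fixed at $\Sigma$ to persist. Your remarks about threading a single truth value rather than assuming $\phi$, and about the delicacy of making $R_i^T$ realize the syntactic agreement promised by $\Kw_i$, are exactly the right worries.

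The genuine gap is in the right-to-left direction, which you dismiss as the easy one and for which you name the wrong tools. ``Unwinding the intended semantics of $\Cw$ in the canonical model'' to extract reachable sets $\Delta_1\ni\phi$ and $\Delta_2\ni\neg\phi$ from $\neg\Cw\phi\in\Sigma$ is circular at this stage: the correspondence between membership of $\Cw\phi$ and behaviour of paths is precisely what this lemma (and later the Truth Lemma) establishes, and $\neg\Cw\phi\in\Sigma$ by itself yields no witnesses. Moreover (Cw-Mix) is useless here, since it only tells you what follows \emph{from} $\Cw\phi$. The actual argument runs in the other direction: assuming the path condition, one shows $\vdash\bigwedge\Sigma\to\Ew\phi$ by contradiction --- if $\neg\Kw_i\phi\in\Sigma$ for some $i$, one proves that both $\{\theta \mid \Kw_i\theta\in\Sigma \text{ and } \Kw_i(\phi\to\theta)\in\Sigma\}\cup\{\phi\}$ and $\{\theta \mid \Kw_i\theta\in\Sigma \text{ and } \Kw_i(\neg\phi\to\theta)\in\Sigma\}\cup\{\neg\phi\}$ are consistent (this is where Prop.~\ref{prop.bigwedge} is really needed), extends them by Lindenbaum to two $R_i^T$-successors disagreeing on $\phi$, and contradicts the path condition. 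One then needs (Cw-NEC) together with the induction axiom (Cw-Ind) to convert derivability of $\Ew\phi$ at all reachable states into $\Cw\phi\in\Sigma$; without the induction axiom there is no finitary way to conclude $\Cw\phi$ from iterated $\Ew$. You have assigned (Cw-Ind) to the left-to-right direction, where it plays no role, and omitted it from the direction that cannot be completed without it.
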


\begin{proof}
($\Rightarrow$) Proof by induction on the length $n$ of the path. We have to prove a stronger lemma: if $\Cw\phi\in\Sigma$, then every path from $\Sigma$ is a $\phi$-path and a $\Cw\phi$-path or every path from $\Sigma$ is a $\neg\phi$-path and $\Cw$-path.

When $n=0$, $l=\langle\Sigma\rangle$. Since $\Sigma$ is maximal consistent and $\phi\in\Phi$, $\phi\in\Sigma$ or $\neg\phi\in\Sigma$. And $\Cw\phi\in\Sigma$ is our premise.
 
Induction hypothesis: If $\Cw\phi\in\Sigma$, then every path of length $n$ is a $\phi$-path and a $\Cw\phi$-path or every path of length $n$ is a $\neg\phi$-path and a $\Cw\phi$-path.

Induction Step: Assume $l_n=\langle \Sigma_0,\Sigma_1,...,\Sigma_n \rangle$ is a $\phi$-path and $\Cw\phi$-path. Take a path of length $(n+1)$ from $\Sigma$. By induction hypothesis, $\Cw\phi\in\Sigma_n$. Let $i$ be the agent such that $\Sigma_n R_i^T \Sigma_{n+1}$. Since $\mathbb{PLCKWS}5\vdash \Cw\phi\to \Kw_i\phi$ and $\Kw_i\phi\in\Phi$, it must be the case that $\Kw_i\phi\in \Sigma_n$. Suppose $\phi\not\in \Sigma_{n+1}$. By the definition of $R_i^T$, there exists a $\chi$ which is not conditionals such that $\neg\Kw_i\phi\in\Sigma_n$, and $\Kw_i\phi\not\in\Sigma_n$ or $\Kw_i(\chi\in\phi)$. By $\Kw_i\phi\in\Sigma_n$, we can infer that $\Kw_i(\chi\to\phi)\not\in\Sigma_n$. Since $\Cw\phi\in\Phi$, we know that $\phi$ is itself not a conditionals. So $\Kw_i(\chi\to\phi)\in\Phi$. Thus, $\neg\Kw_i(\chi\to\phi)\in\Sigma_n$. By (Kw-Dis) and (Kw-T), we have $\{\Kw_i\phi,\neg\Kw_i(\chi\to\phi),\phi\}\vdash\Kw_i\chi$, which implies that $\Kw_i\chi\in\Sigma_n$. It contradicts to $\neg\Kw_i\chi\in\Sigma_n$. Therefore, $\phi\in\Sigma_{n+1}$.

Since $\mathbb{PLCKWS}5\vdash \Cw\phi\to \Ew\Cw\phi$ and $\mathbb{PLCKWS}5\vdash \Ew\Cw\phi\to \Kw_i\Cw\phi$, it must be the case that $\Kw_i\Cw\phi\in \Sigma_n$. Similarly, we can prove that $\Cw\phi\in\Sigma_{n+1}$. Thus, for any path with (n+1) length, it is a  $\phi$-path and $\Cw\phi$-path.

Assume $l_n=\langle \Sigma_0,\Sigma_1,...,\Sigma_n \rangle$ is a $\neg\phi$-path and $\Cw\phi$-path. We can also prove that for any path with (n+1) length, it is a $\neg\phi$-path and $\Cw\phi$-path.

($\Leftarrow$) Suppose that every path from $\Sigma$ is a $\phi$-path or every path from $\Sigma$ is a $\neg\phi$-path. We want to prove that $\vdash\bigwedge\Sigma\to \Ew\phi$. Assume, to reach a contradiction, that $\bigwedge\Sigma\wedge\neg\Ew\phi$ is consistent. It implies that there exists $i\in\G$ such that $\bigwedge\Sigma\wedge\neg\Kw_i\phi$ is consistent. By $\Cw\phi\in\Phi$, $\neg\Kw_i\phi\in\Phi$. So $\neg\Kw_i\phi\in\Sigma$. And we know that $\phi$ is not a conditionals. Now we are to construct two maximal consistent sets in $\Phi$, $\Gamma_1$ and $\Gamma_2$ such that $\Sigma R_i^T\Gamma_1$ and $\Sigma R_i^T\Gamma_2$ and $\phi\in\Gamma_1$ and $\neg\phi\in\Gamma_2$. Firstly, we will show the following two items: 

\begin{enumerate}
    \item\label{enumerate.1} $\{\theta\mid \theta\ \rm{is\ not\ conditionals\ and}\ \Kw_i\theta\in\Sigma\ \rm{and}\ \Kw_i(\phi\to\theta)\in\Sigma \}\cup\{\phi\}$ is consistent.
    \item\label{enumerate.2} $\{\theta\mid \theta\ \rm{is\ not\ conditionals\ and}\ \Kw_i\theta\in\Sigma\ \rm{and}\ \Kw_i(\neg\phi\to\theta)\in\Sigma \}\cup\{\neg\phi\}$ is consistent.
\end{enumerate}

As for~\ref{enumerate.1}, assume that it is not consistent. It implies that there exist $\theta_1,\cdots,\theta_n$ in it such that $\vdash(\theta_1\wedge\cdots\wedge\theta_n)\to\neg\phi$ and $\Kw_i\theta_k\in\Sigma$ and $\Kw_i(\phi\to\theta_k)\in\Sigma$ for all $1\leq k\leq n$. By (Kw-NEC), $\vdash\Kw_i((\theta_1\wedge\cdots\theta_n)\to\neg\phi)$. By Proposition \ref{prop.bigwedge}.1, we infer that $\Kw_i\phi\in\Sigma$. Contradiction.

As for~\ref{enumerate.2}, assume that it is not consistent. It implies that there exist $\theta_1,\cdots,\theta_n$ in it such that $\vdash(\theta_1\wedge\cdots\wedge\theta_n)\to\phi$ and $\Kw_i\theta_k\in\Sigma$ and $\Kw_i(\neg\phi\to\theta_k)\in\Sigma$ for all $1\leq k\leq n$. By (Kw-NEC), $\vdash\Kw_i((\theta_1\wedge\cdots\theta_n)\to\phi)$. By Proposition \ref{prop.bigwedge}.2, we infer that $\Kw_i\phi\in\Sigma$. Contradiction.

Thus, these two consistent sets can be extended to two maximal consistent sets $\Gamma_1$ and $\Gamma_2$, according to Lindenbaum lemma. Accroding to the definition of $R_i^T$, $\Sigma R_i^T\Gamma_1$ and $\Sigma R_i^T\Gamma_2$ and $\phi\in\Gamma_1$ and $\neg\phi\in\Gamma_2$. It contradicts to our supposition that every path from $\Sigma$ is a $\phi$-path or every path from $\Sigma$ is a $\neg\phi$-path. So we have $\vdash\bigwedge\Sigma\to\Ew\phi$. By (Cw-NEC) and (Cw-Ind), $\vdash\bigwedge\Sigma\to\Cw\phi$. Therefore, $\Cw\phi\in\Sigma$.
\end{proof}

\begin{lemma}\label{le.truthlemma}
(Finite Truth Lemma) For any $\PLKwCw$-formula $\psi$, for all $\Sigma\in W^c$, we have $\M^c,\Sigma\models\psi$ iff $\psi\in\Sigma$.
\end{lemma}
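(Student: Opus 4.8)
The plan is to prove the Finite Truth Lemma by induction on the structure of the formula $\psi$, where we must be careful that induction runs over the subformula-and-closure structure rather than plain subformulas, since the closure $cl(\phi)$ was precisely engineered to be closed under the auxiliary formulas (such as $\Kw_i(\chi\to\psi)$ and $\Kw_i\Cw\psi$) that the modal cases invoke. The atomic case $\psi=p$ is immediate from the definition of $V^c$, and the Boolean cases $\neg\psi$ and $\psi_1\land\psi_2$ follow routinely from the maximal-consistency of each $\Sigma\in W^c$ together with the closure conditions (2) and (3) guaranteeing that the relevant subformulas and their negations stay inside $\Phi$.

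The two genuinely modal cases are $\Kw_i\psi$ and $\Cw\psi$. For the $\Cw\psi$ case I would simply invoke Lemma~\ref{le.Cwtruthlemma}: by that lemma $\Cw\psi\in\Sigma$ iff every path from $\Sigma$ in the canonical model is a $\psi$-path or every path is a $\neg\psi$-path; translating the path condition through the induction hypothesis (applied to $\psi$ and $\neg\psi$, both of which lie in $\Phi$ by closure conditions (1)--(3)), this says exactly that for all $u,v$ with $\Sigma\twoheadrightarrow u$ and $\Sigma\twoheadrightarrow v$ we have $\M^c,u\vDash\psi\iff\M^c,v\vDash\psi$, which is the semantic clause for $\M^c,\Sigma\vDash\Cw\psi$. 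The only subtlety here is matching the canonical $\twoheadrightarrow$ (reflexive-transitive closure of the union of the $R_i^T$) with the notion of path in Def.~\ref{def.path}, but this is precisely the bookkeeping already set up in that definition.

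The heart of the argument, and the step I expect to be the main obstacle, is the individual operator case $\Kw_i\psi$. Here one must show $\M^c,\Sigma\vDash\Kw_i\psi$ iff $\Kw_i\psi\in\Sigma$, and this is where the nonstandard canonical relation $R_i^T$ (the reflexive closure of the schema-driven $R_i^c$) does its work. For the direction from membership to truth, I would show that if $\Kw_i\psi\in\Sigma$ and $\Sigma R_i^T\Delta$ and $\Sigma R_i^T\Delta'$, then $\psi$ has the same truth value (via IH, the same membership status) in $\Delta$ and $\Delta'$; this uses the defining clause of $R_i^c$ together with the whether-flavoured axioms (Kw-CON), (Kw-DIS), (Kw-T) exactly as in the completeness proof of $\mathbb{CLS}5$ in~\cite{Fan2015CONTINGENCY}. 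For the converse I would argue contrapositively: if $\Kw_i\psi\notin\Sigma$, then (since $\psi$ and $\neg\psi$ are in $\Phi$) one can build, along the lines of the construction in Lemma~\ref{le.Cwtruthlemma}, two $R_i^T$-successors of $\Sigma$ disagreeing on $\psi$, witnessing $\M^c,\Sigma\nvDash\Kw_i\psi$. The delicate points are that $\psi$ may itself be a conditional (so one must reduce to the non-conditional witnesses the relation quantifies over, invoking closure condition (4) and Prop.~\ref{prop.bigwedge}), and that the reflexive closure in $R_i^T$ must be handled so that the single-point successor does not spuriously force $\Kw_i\psi$; both are inherited from the $\mathbb{CLS}5$ machinery, so I would cite that development rather than reprove it, and concentrate the new work on checking that the added closure conditions (5)--(7) and the $\Cw$ axioms do not disturb these local $\Kw_i$ arguments.
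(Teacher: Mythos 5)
Your proposal matches the paper's proof: the paper likewise proceeds by induction on $\psi$, delegates the Boolean and $\Kw_i$ cases to the completeness machinery of $\mathbb{CLS}5$ in~\cite{Fan2015CONTINGENCY}, and handles the $\Cw\psi$ case by invoking Lemma~\ref{le.Cwtruthlemma} together with the induction hypothesis to pass between the path condition and the semantic clause for $\Cw$. Your extra care about the closure-driven induction and the reflexive closure in $R_i^T$ is consistent with, and slightly more explicit than, what the paper writes.
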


\begin{proof}
By induction on $\psi$:

$\bullet$ When $\psi$ is a Boolean formula or is $\Kw_i\phi$, the proof  can be shown as in \cite{Fan2015CONTINGENCY}.
    
$\bullet$ When $\psi=\Cw\phi$:
    
    `Only if': Suppose that $\Cw\phi\not\in\Sigma$. By Lemma $\ref{le.Cwtruthlemma}$, there exist two paths $l_1$ and $l_2$ such that $l_1$ is not a $\phi$-path and $l_2$ is not a $\neg\phi$-path. Thus, there must be a $\Delta_1\in l_1$ such that $\neg\phi\in\Delta_1$ and a $\Delta_2\in l_2$ such that $\phi\in\Delta_2$. By induction hypothesis, we have $M^c,\Delta_1\vDash\neg\phi$ and $M^c,\Delta_2\vDash\phi$. By the definition \ref{def.path}, $\Sigma\twoheadrightarrow\Delta_1$ and $\Sigma\twoheadrightarrow\Delta_2$. Thus, $M^c,\Sigma
    \nvDash \Cw\phi$.
    
    `If': Assume $\M^c,\Sigma\not\models \Cw\phi$. By the semantics of $\Cw\phi$, there exist $\Delta_1$, $\Delta_2\in W^c$ with $\Sigma\twoheadrightarrow\Delta_1$, $\Sigma\twoheadrightarrow\Delta_2$, such that $\M^c,\Delta_1\models\phi$ and $\M^c,\Delta_2\models \neg\phi$. By induction hypothesis, $\phi\in\Delta_1$ and $\neg\phi\in\Delta_2$. Thus, there exists a path $l_1$ where $\Delta_1\in l_1$ such that $l_1$ us not a $\neg\phi$-path and there also exists a path $l_2$ where $\Delta_2\in l_2$ such that $l_2$ is not a $\phi$-path. By Lemma \ref{le.Cwtruthlemma}, $\Cw\phi\not\in\Sigma$.
\end{proof}

By Lemma \ref{le.truthlemma}, we obtain the completeness of $\mathbb{PLCKWS}5$.

\begin{theorem}\label{th.completeness}
The logic $\mathbb{PLCKWS}5$ is weakly complete with respect to $\mathcal{S}5$.
\end{theorem}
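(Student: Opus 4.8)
The plan is to prove weak completeness by contraposition, showing that every non-theorem is refuted on some finite $\mathcal{S}5$-model; as a by-product this also delivers the finite model property. So suppose $\nvdash\phi$, i.e.\ $\phi$ is not a theorem of $\mathbb{PLCKWS}5$, so that $\{\neg\phi\}$ is consistent. First I would fix the closure $\Phi=cl(\phi)$. By the defining conditions this is a \emph{finite} set, and it contains $\neg\phi$ (condition~3 supplies the negation of $\phi$ when $\phi$ is not itself a negation, and otherwise the relevant member of $\Phi$ is available through subformula closure). Invoking Lindenbaum's Lemma (Lemma~\ref{le.lindenbaum}), I would extend $\{\neg\phi\}$ to a set $\Sigma$ that is maximal consistent in $\Phi$; since $\neg\phi\in\Sigma$ and $\Sigma$ is consistent, $\phi\notin\Sigma$. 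This $\Sigma$ will be the world at which $\phi$ fails.

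Next I would assemble the canonical model $\M^c=\langle W^c,\{R_i^T\mid i\in\G\},V^c\rangle$ built from $\Phi$, and record the two features that make it an admissible counter-model. First, $\M^c$ is finite: as $\Phi$ is finite, there are only finitely many maximal consistent subsets of $\Phi$, so $W^c$ is finite. Second, $\M^c$ is genuinely an $\mathcal{S}5$-model: each canonical relation $R_i^T$ is the reflexive closure of $R_i^c$ and has already been shown to be an equivalence relation (following the construction of \cite{Fan2015CONTINGENCY}), hence reflexive, transitive and symmetric, exactly as $\mathcal{S}5$ requires. Consequently any failure of $\phi$ inside $\M^c$ certifies that $\phi$ is not $\mathcal{S}5$-valid.

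Finally I would apply the Finite Truth Lemma (Lemma~\ref{le.truthlemma}) at the point $\Sigma$: since $\phi\notin\Sigma$, the lemma yields $\M^c,\Sigma\nvDash\phi$. Because $\M^c$ is a finite $\mathcal{S}5$-model, $\phi$ is therefore not valid over $\mathcal{S}5$. By contraposition this shows that every $\mathcal{S}5$-valid formula is a theorem of $\mathbb{PLCKWS}5$, which is precisely weak completeness.

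I expect the only genuine difficulty to lie in the lemmas already established upstream, especially the $\Cw$-clause of the Truth Lemma, which rests on Lemma~\ref{le.Cwtruthlemma} and through it on the fixed-point-style axioms (Cw-Mix) and (Cw-Ind). At the level of the completeness theorem itself, the main point to be careful about is that the whole construction takes place inside the finite closure $\Phi$ rather than over the full language: this is exactly what confines the result to \emph{weak} completeness, since we produce a finite model refuting a single consistent formula rather than one model simultaneously realising an arbitrary consistent set of premises.
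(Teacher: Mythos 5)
Your proposal is correct and follows essentially the same route as the paper: the paper's own proof of Theorem~\ref{th.completeness} is just the one-line observation that completeness follows from the Finite Truth Lemma (Lemma~\ref{le.truthlemma}), and your write-up simply spells out the standard intermediate steps (consistency of $\{\neg\phi\}$, Lindenbaum extension inside the finite closure $\Phi$, finiteness and $\mathcal{S}5$-hood of the canonical model, and the final application of the truth lemma). Nothing in your argument diverges from, or adds a gap to, what the paper intends.
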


\section{Expressivity}\label{sec.expressivity}

In this section, we will compare the expressivity of $\Cw_5$ with that of common knowledge, since both notions are inspired by the hierarchy of inter-knowledge of a group given in \cite{parikh1992levels}. The two languages are:

\[
\begin{array}{l l}
\PLKwCw_5 &  \phi ::= p \mid \neg \phi \mid (\phi \land \phi)\mid \Kw_i\phi \mid \Cw_5\phi \\  \PLKC &  \phi::= p\mid \neg\phi\mid (\phi\wedge\phi)\mid K_i\phi \mid C\phi \nonumber
\end{array}
\]

\subsection{$\PLKwCw_5$ is Bisimulation Invariant}

\begin{definition}
Let $M=\langle W,R,V\rangle$ and $M'=\langle W',R',V'\rangle$ be two Kripke models. A non-empty binary relation $Z\subseteq W\times W'$ is called bisimulation between $M$ and $M'$, written as $M\cong M'$, if the following conditions are satisfied:

(i) If $wZ'w'$, then $w$ and $w'$ satisfy the same proposition letters.

(ii) if $wZ'w'$ and $wRv$, then there is a $v'\in W'$ such that $vZv'$ and $w'R'v'$.

(iii) If $wZ'w'$ and $w'R'v'$, then there exists $v\in W$ such that $vZv'$ and $wRv$.

When $Z$ is a bisimulation linking two states $w$ in $M$ and $w'$ in $M'$, we say that two pointed models are bisimilar and write $Z: (M,w)\cong(M',w')$. If a language $L$ cannot distinguish any pair of bisimilar models, $L$ is bisimulation invariant.
\end{definition}

\begin{theorem}
$\PLKwCw_5$ is bisimulation invariant.
\end{theorem}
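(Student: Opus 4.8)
The plan is to prove bisimulation invariance by a straightforward structural induction on formulas of $\PLKwCw_5$, showing that whenever $Z\colon(M,w)\cong(M',w')$ is a bisimulation linking $w$ and $w'$, we have $M,w\vDash\phi$ iff $M',w'\vDash\phi$ for every $\phi$. The Boolean cases are routine: the propositional-variable case follows immediately from clause (i) of the definition, and negation and conjunction follow from the induction hypothesis. The two interesting cases are $\Kw_i\phi$ and $\Cw_5\phi$, and the key technical observation I would exploit throughout is that $\Kw_i$ and $\Cw_5$ are \emph{whether}-modalities: they test whether all relevant successors \emph{agree} on the truth value of $\phi$, rather than asserting truth outright. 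So the inductive step must transport not a single satisfaction fact but the pattern of agreement across accessible worlds.

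First I would handle $\Kw_i\phi$. Recall $M,w\vDash\Kw_i\phi$ iff for all $u,v$ with $wR_iu$ and $wR_iv$ we have $M,u\vDash\phi\iff M,v\vDash\phi$. Suppose $M,w\vDash\Kw_i\phi$ but $M',w'\nvDash\Kw_i\phi$; then there are $u',v'$ with $w'R_i'u'$, $w'R_i'v'$ and $M',u'\vDash\phi$ while $M',v'\nvDash\phi$. By the back condition (iii) of bisimulation applied to $w Z w'$, I obtain $u,v\in W$ with $wR_iu$, $u Z u'$ and $wR_iv$, $v Z v'$. By the induction hypothesis, $M,u\vDash\phi$ and $M,v\nvDash\phi$, contradicting $M,w\vDash\Kw_i\phi$. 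The converse direction is symmetric, using the forth condition (ii). (Here I am implicitly reading the single relation $R$ in the bisimulation definition as holding per agent $R_i$, which is the standard multi-agent reading; I would make this reading explicit.)

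For $\Cw_5\phi=\bigwedge_{s\in\G^+}\Kw_s\phi$ the cleanest route is to reduce to the $\Kw_i$ case rather than argue about the conjunction directly. The plan is to prove, as an auxiliary claim by induction on the length of $s$, that for every finite nonempty agent-sequence $s$ and every bisimulation-linked pair $w Z w'$, we have $M,w\vDash\Kw_s\phi\iff M',w'\vDash\Kw_s\phi$. The base case $|s|=1$ is exactly the $\Kw_i$ case just established. For the inductive step, writing $s=i\,t$, I treat $\Kw_i(\Kw_t\phi)$ as a single \emph{whether}-modality applied to the formula $\Kw_t\phi$, and rerun the $\Kw_i$ argument using condition (ii)/(iii) of $Z$ together with the length-$|t|$ instance of the claim as the "induction hypothesis" supplying agreement of $\Kw_t\phi$ across the $R_i$-successors. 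Since $\phi$ is a strict subformula, the outer structural induction on $\phi$ still justifies using bisimulation invariance of $\phi$'s subformulas where needed. Having the claim for every $s$, the equivalence for $\Cw_5\phi$ follows because a conjunction holds iff each conjunct does, and $w Z w'$ gives the equivalence conjunct-by-conjunct.

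The main obstacle is bookkeeping rather than any deep difficulty: I must be careful that the single-relation bisimulation definition stated in the excerpt is applied per agent, and that the induction on sequence length is kept separate from the outer structural induction on $\phi$ so the two do not become circular. Concretely, the outer induction is on $\phi$; the auxiliary claim's induction on $|s|$ lives inside the $\Cw_5\phi$ case and only ever appeals to the $\Kw_i$ transport lemma and to shorter sequences, never to $\Cw_5$ of a larger formula. Getting this layering right, and noting that the whole argument uses only the standard forth/back conditions (so it goes through uniformly on all frame classes), is the only point requiring genuine care.
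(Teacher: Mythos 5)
Your proposal is correct and follows essentially the same route as the paper: a structural induction on $\phi$ in which the $\Kw_i$ case transports the agreement/disagreement pattern of successors across the bisimulation via the forth and back clauses, and the $\Cw_5$ case reduces to the formulas $\Kw_s\phi$. If anything, your explicit inner induction on the length of $s$ is more careful than the paper's version, which derives a contradiction by appealing to the already-established $\Kw_{i_1}$ case without spelling out why the inner formula $\Kw_{i_2\cdots i_n}\phi$ is itself invariant.
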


\begin{proof}
By induction on formulas $\phi$ of $\PLKwCw_5$.

When $\phi$ is a Boolean formula, the proof is classical.

When $\phi=Kw_i\psi$, we prove it in three cases. For arbitrary two bisimilar models $(M,r)$ and $(N,t)$, we have:

\begin{itemize}
    \item if $M,r\models Kw_i\psi$ and for all $r_n$ with $r\to_Mr_n$, $M,r_n\models\psi$. Since $M,r\cong N,t$, for any $t_n$ with $t\to_Nt_n$, there is an $r_n$ such that $r\to_Mr_n$ and $M,r_n\cong N,t_n$. By induction hypothesis, $\psi$ is bisimulation invariant. Thus $N,t_n\models\psi$. So $N,t\models K_i\psi$.
    
    \item if $M,r\models Kw_i\psi$ and for all $r_n$ with $r\to_Mr_n$, $M,r_n\models\neg\psi$, similar to above case.
    
    \item if $M,r\models\neg Kw_i\psi$, that means there are $r_1$ with $r\to_Mr_1$ and $r_2$ with $r\to_Mr_2$, such that $M,r_1\models\psi$ and $M,r_2\models\neg\psi$. Since $M,r\cong N,t$, there are $t_1$ with $t\to_N t_1$ and $t_2$ with $t\to_N t_2$, such that $M,r_1\cong N,t_1$ and $M,r_2\cong N,t_2$. By induction hypothesis, $\psi$ is bisimulation invariant. Thus $N,t_1\models\psi$ and $N,t_2\models\neg\psi$. Thus $N,t\models \neg Kw_i\psi$.
    
Thus, $Kw_i\psi$ is bisimulation invariant.    
\end{itemize}

When $\phi=Cw_5\psi$, assume two bisimular models $(M,r)$ and $(N,t)$, such that $M,r\models Cw_5\psi$ and $N,t\models \neg Cw_5\psi$. That means there exists a sequence of agents $s$, such that $M,r\models Kw_s\psi$ and $N,t\models \neg Kw_s\psi$. Let $s=\langle i_1,i_2,...,i_n \rangle$. So $M,r\models Kw_{i_1}\gamma$ and $N,t\models\neg Kw_{i_1}\gamma$, where $\gamma=Kw_{\langle i_2,i_3,...,i_n\rangle}\psi$. However, we have proved that for any formula of the form $Kw_{i_1}\gamma$, they are bisimulation invariant. Thus, if $M,r\models Kw_{i_1}\gamma$, there must be $N,t\models Kw_{i_1}\gamma$. Contradiction.

\smallskip

Therefore, we proved that $\PLKwCw_5$ is bisimulation invariant.
\end{proof}

\subsection{$\PLKC$ and $\PLKwCw_5$}

Although $\Cw_5$ is formed merely with $\Kw$, which can be defined by classical operator $\K$, surprisingly, $\Cw_5$ is not expressible in $\PLKC$. We prove it by constructing two classes of models, which {\em no} $\PLKC$-formula can distinguish whereas some $\PLKwCw_5$-formula can. The following definitions and lemmas facilitate our proofs.

\begin{definition}{(Modal Depth)}
The modal depth of a $\PLKC$-formula is defined by:
\[
\begin{array}{ll}
    d(p)=1; & d(\neg\phi)=d(\phi);\\ d(\phi\wedge\psi)=max\{d(\phi),d(\psi)\};& d(\K\phi)=d(\phi)+1;\\
    d(\C\phi)=d(\phi)+1.
\end{array}\]
\end{definition}

To construct two classes of models, we first define two kinds of sets of possible worlds.
\begin{definition}
For every $n\geq 1$, we inductively define two sets of possible worlds $T_n$ and $Z_n$:

\begin{itemize}
   \item  $T_0=\{t_{00}\}$ and $Z_0=\{ z_0 \}$; 
   \item  If $t_i\in T_n$, then $t_{i0}\in T_{n+1}$ and $t_{i1}\in T_{n+1}$;  if $z_i\in Z_n$, then $z_{i0}\in Z_{n+1}$ and $z_{i1}\in Z_{n+1}$;
   \item  $T_n$ and $Z_n$ have no other possible worlds.
\end{itemize}

where $|j|$ denotes the length of the subscript sequence $j$ in each $t_j$ and $z_j$.
\end{definition}

Then we define two classes of models mentioned above.
\begin{definition}
Define the class of models $\M=\{\M_n=\lr{W_n,R_n,V_n}\mid n\in\mathbb{N}^+\}$, where
\begin{itemize}
\item $W_n=T_n\cup\{r,t_0\}$,
\item $R_n=\{(t_i,t_{i0}),(t_i,t_{i1})\mid t_i\in T_n\}\cup\{(r,t_0),(t_0,t_{00})\}$,
\item $V_n(p)=W_n-\{t_{0i}\}$, where $|i|=n+1$ and $i$ is a finite sequence of $0$s.
\end{itemize}

The class of models $\N=\{\N_n=\lr{W'_n,R'_n,V'_n}\mid n\in\mathbb{N}^+\}$, where
\begin{itemize}
\item $W_n'=W_n\cup Z_n$
\item $R'_n=R_n\cup\{(z_i,z_{i0}),(z_i,z_{i1})\mid z_i\in Z_n\}\cup\{(r,z_0)\}$
\item $V_n'(p)=V_n(p)\cup Z_n-\{z_{0i}\}$, where $|i|=1$ and $i$ is a finite sequence of $0$s.
\end{itemize}
\end{definition}

It is easy to see that for any $n\in\mathbb{N}^+$, $M_n$ is a submodel of $N_n$. We will prove that {\em no} $\PLKC$-formula can distinguish $\M$ and $\N$ with the $CL$-game, which is defined below.

\weg{\begin{definition}
For every $n\geq 1$, define two sets of possible worlds $T_n$ and $Z_n$ with induction:

    \begin{itemize}
        \item $t_{00}\in T_n$; $z_0\in Z_n$
        \item If $t_i\in T_n$, then $t_{i0}\in T_n$ and $t_{i1}\in T_n$; if $z_i\in Z_n$, then $z_{i0}\in Z_n$ and $z_{i1}\in Z_n$
        \item For every $t_j\in T_n$, $|j|\leq n+2$; for every $z_j\in Z_n$, $|j|\leq n+1$
        \item Besides $t_i$ defined above, $T_n$ has no more possible worlds; besides $z_i$ defined above, $Z_n$ has no more possible worlds.
    \end{itemize}

Then define the class of models $M=\{M_n=\langle W_n,R_n,V_n\rangle\mid n\in \mathbb{N}\}$ as follows: for every $n\geq 1$,

\begin{itemize}
    \item $W_n=\{r\}\cup T_n\cup \{t_0\}$
    \item $R_n=\{\langle t_i,t_{i0}\rangle,\langle t_i,t_{i1}\rangle \mid t_i\in T_n, t_{i0}\in T_n\}\cup \{\langle r,t_0\rangle,\langle t_0,t_{00}\rangle\}$
    \item $V_n(p)=W_n-\{t_{0i}\}$, where $|i|=n+1$, $i\in\{0\}^+$。
\end{itemize}

Define the class of models $N=\{N_n=\langle W'_n,R'_n,V'_n\rangle\mid n\in \mathbb{N}\}$ with $M$: for every $n\geq 1$:

\begin{itemize}
\item $W'_n=W_n\cup Z_n$
\item $R'_n=R_n\cup \{\langle z_i,z_{i0}\rangle,\langle z_i,z_{i1}\rangle\mid z_i\in Z_n, z_{i0}\in Z_n \}$
\item $V'_n(p)=V_n(p)\cup Z_n-\{z_{0i}\}$, where $|i|=n$, $i\in\{0\}^+$
\end{itemize}
\end{definition}

The first models $M_1$ and $N_1$ in $M$ and $N$ are shown as Figure 3.

\begin{figure}[htb]\label{fig.m1n1}
\centering
\begin{minipage}[c]{0.5\textwidth}
\centering
\xymatrix@C-20pt@R-10pt{&&r:p\ar[dr]&&& \\ &&&t_0:p\ar[d]  && \\ &&&t_{00}:p\ar[dl]\ar[d]&& \\ &&t_{000}:\neg p&t_{001}:p&& }
\end{minipage}
\centering
\begin{minipage}[c]{0.5\textwidth}
\centering
\xymatrix@C-20pt@R-10pt{&&&&r:p\ar[dr]\ar[dl]&&& \\ &&&z_0:p\ar[dl]\ar[d]&&t_0:p\ar[d]  && \\  & &z_{00}:\neg p&z_{01}:p&&t_{00}:p\ar[dl]\ar[d]&& \\ &&&&t_{000}:\neg p&t_{001}:p&& }
\end{minipage}
\caption{$M_1$(top) and $N_1$(bottom)}
\end{figure}
The model $N_n$ is constructed by adding a new subtree rooted with $z_0$ to the root $r$ and just make $p$ unsatisfied on the leaf node whose index only consists of $0$.
}

\begin{definition}\label{df.clgame}
A $CL$-game is a game with two players, duplicator and spoiler, playing on a Kripke-model. Given two Kripke models $M=\langle W,R,V\rangle$ and $M'=\langle W',R',V'\rangle$, from an arbitrary node $w$ in $W$ and an arbitrary node $w'$ in $W'$, play games in $n$ rounds between duplicator and spoiler as following rules:

\begin{itemize}
\item When $n=0$, if the sets of satisfied formulas on node $w$ and $w'$ are the same, then duplicator wins; otherwise, spoiler wins.
\item When $n\not=0$, 

\begin{itemize}
    \item $K$-move: If spoiler starting from node $w$ does $K$-move to node $x$ which can be reached by $R$, then duplicator starting from $w'$ does $K$-move to a node $y$ in $W'$ with the same set of satisfied propositional variables as $x$. If spoiler starts from $w'$, then duplicator starts from $w$ with similar way to move.
    \item $C$-move: If spoiler starting from node $w$ does $C$-move to node $x$ which can be reached by $\twoheadrightarrow$, then duplicator starting from $w'$ does $C$-move to a node $y$ in $W'$ with same set of satisfied propositional variables to $x$. If spoiler starts from $w'$, then duplicator starts from $w$ with similar way to move.
\end{itemize}
In the game, for arbitrary $x\in W$ and $y\in W'$, we say $(x,y)$ or $(y,x)$ is a state of $CL$-game.
\end{itemize}

\end{definition}

If there is a winning strategy for duplicator in $n$-round games, $(M_n,r)$ and $(N_n,r)$ agree on all $KCL$-formulas whose modal depth is $n$.

\begin{lemma}\label{th.nwscl}
For arbitrary $n\in \mathbb{N}$, duplicator has a winning strategy in the $CL$-game on $(M_n,r)$ and $(N_n,r)$ in $n$ rounds.
\end{lemma}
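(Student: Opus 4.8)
The plan is to prove that duplicator's having a winning strategy in the $n$-round $CL$-game is the same as the existence of an ``$n$-round $CL$-bisimulation'' between $(\M_n,r)$ and $(\N_n,r)$, and then to exhibit such a relation explicitly. Concretely, I would define a descending family of relations $S_0\supseteq S_1\supseteq\cdots\supseteq S_n\subseteq W_n\times W'_n$ and prove by induction on the number $m$ of remaining rounds that each $S_m$ is an ``$m$-round back-and-forth system'' for both move types: related worlds agree on $p$, and every spoiler $\K$-move (along $\to$) or $\C$-move (along $\twoheadrightarrow$) out of one component can be answered by a move of the same kind out of the other component landing again in $S_{m-1}$. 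The base case $S_0$ asks only for agreement on $p$. Since a duplicator strategy can be read off directly from such a system, it then suffices to verify the single membership $(r,r)\in S_n$, from which the lemma follows.

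The relation $S_m$ I have in mind exploits the fact that $\N_n$ is obtained from $\M_n$ by grafting the extra branch $r\to z_0$, and that the submodel generated by $z_0$ is isomorphic, as a $p$-labelled tree, to the submodel generated by $t_{00}$ (each is a full binary tree of height $n$ carrying a single $\neg p$-world at its leftmost leaf). On the shared $T$-part I take $S_m$ to be the identity, so duplicator simply copies any move that stays inside $T$; the whole content of the argument is how to answer moves that enter the $Z$-branch. When spoiler plays the $\K$-move $r\to z_0$ in $\N_n$, duplicator is \emph{forced} to answer $r\to t_0$ in $\M_n$, and must thereafter mirror each descent in the $z$-tree by a descent in the $t$-tree, shifted by the one extra spine level $t_0\to t_{00}$. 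The quantitative heart of this part is to check that the unique $\neg p$-world on each side sits at $\twoheadrightarrow$-distance strictly larger than the number of remaining rounds, so that no sequence of $\K$-moves can ever reach it; along pure $\K$-play both components therefore stay in the all-$p$ region and agreement on $p$ is automatic.

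The main obstacle is the $\C$-move, because $\twoheadrightarrow$ lets spoiler teleport to any world of the generated subtree, in particular to the unique $\neg p$-leaf, irrespective of its depth. I would absorb this into the invariant by recording in $S_m$, besides the $p$-value of the current pair, its \emph{reachable atom-profile}: whether the two generated subtrees both contain a $\neg p$-world or are both all-$p$. Granting this, a spoiler $\C$-move to a $\neg p$-world is answered by a $\C$-move to the (unique) $\neg p$-leaf on the other side, and a $\C$-move to a $p$-world is answered trivially, the resulting pair being a pair of leaves or of all-$p$ nodes in $S_{m-1}$. The delicate point, and the one I expect to be the crux, is to confirm that duplicator's forced passages through the degree-one spine nodes $r\to t_0\to t_{00}$ never destroy the reachable-atom-profile invariant before the rounds are used up; this is precisely where the calibration of the defect depth to $n$ must be exploited, and it is what has to be tracked round by round in the induction step for $m=n,n-1,\dots$. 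Once the lemma is in place, it combines with the already-stated bridge (duplicator winning the $n$-round game implies agreement on all $\PLKC$-formulas of modal depth $n$) and with a $\Cw_5$-formula that does separate $\M_n$ from $\N_n$ to yield the intended inexpressibility of $\Cw_5$ in $\PLKC$.
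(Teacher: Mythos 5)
Your overall plan---reformulating duplicator's winning strategy as a round-indexed back-and-forth system built from (a) the identity on the shared $T$-part, (b) the isomorphism between the subtree generated by $z_0$ in $N_n$ and the subtree generated by $t_{00}$ in $M_n$, and (c) the observation that the unique $\neg p$-leaves lie at depth greater than $n$ from $r$---is the same argument the paper gives, just phrased as an invariant rather than as an explicit case-by-case strategy. But as written it has two genuine gaps. First, your classification of $C$-move outcomes is wrong: a spoiler $C$-move can land on an \emph{internal} $p$-node of the $z$-tree (say $z_{0\cdots0}$ of index length $k\le n$) whose generated subtree still contains the $\neg p$-leaf; the resulting pair is neither ``a pair of leaves'' nor ``a pair of all-$p$ nodes''. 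This is precisely the case that forces the index-shifted reply $z_i\mapsto t_{0i}$, after which the two generated subtrees are isomorphic and duplicator wins outright; your sketch mentions the isomorphism for $z_0$ versus $t_{00}$ but never deploys it in the $C$-move analysis, where it is indispensable. Relatedly, the ``reachable atom-profile'' is not by itself a back-and-forth invariant: the pair $(z_{0^n},t_0)$ has matching profiles (both subtrees contain a $\neg p$-world) yet loses in a single further $K$-round, so the invariant must also record how deep the defect lies relative to the remaining rounds.

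Second, the part you yourself flag as the crux---that pure $K$-play starting from $(z_0,t_0)$ never produces a $p$-disagreement before the rounds run out---is asserted but not carried out, and the ``mirroring shifted by the one extra spine level'' cannot literally be realized: each $K$-move advances both tokens by exactly one step, so from $(z_0,t_0)$ the pairing that actually arises is the same-depth, same-index one $(z_w,t_w)$, under which the $\neg p$-leaf $z$-side and the corresponding $t$-side node \emph{disagree} on $p$. The strategy survives only because that mismatched pair sits at depth $n+1$ and hence cannot be reached within the $n$ available rounds; this round-counting (the paper's ``$i\le n-1$'' bookkeeping in the state $(z_i,t_i)$) is the content of the lemma and needs to be done explicitly, including for mixed play where a $C$-move first teleports deep into the trees and $K$-moves follow. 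With the $C$-move case repaired via the shift isomorphism and the depth arithmetic written out, your argument becomes the paper's proof.
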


\begin{proof}
We describe duplicator’s winning strategy case by case. Starting with the initial state $(r,r)$, we mainly concerns the case where spoiler does a $K$-move. Otherwise, duplicator can move to a isomorphic sub-model such that there must be a winning strategy in following rounds. Thus, the cases below exhaust all possibilities.

\begin{itemize}
    \item The initial state is $(r,r)$:
    
    \begin{itemize}
        \item If spoiler does a $K$-move or a $C$-move on $M_n$ reaching $t_i$, then duplicator does a $K$-move or a $C$-move on $N_n$ to reach the corresponding $t_i$. Since $(M_n,t_i)\cong (N_n,t_i)$, there is a winning strategy after this move.
        \item If spoiler does a $K$-move or a $C$-move on $N_n$ reaching $t_i$, then duplicator does a $K$-move or a $C$-move on $M_n$ to reach the corresponding $t_i$. Since $(M_n,t_i)\cong (N_n,t_i)$, there is a winning strategy after this move.
        \item If spoiler does a $K$-move on $N_n$ reaching $z_0$, duplicator moves to $t_0$.
        \item If spoiler does a $C$-move on $N_n$ reaching an arbitrary node $z_{i(i\not=0)}$ in $Z_n$, then duplicator does a $C$-move to reach $t_{0i}$. Since $(M_n,t_{0i})\cong (N_n,z_i)$, there is a winning strategy after this move.
    \end{itemize}
    
    \item The current state is $(z_0,t_0)$:
    
    \begin{itemize}
        \item If spoiler does a $K$-move reaching  $z_{00}$ or $z_{01}$, then duplicator moves on $M_n$ to reach $t_{00}$.
        \item If spoiler does a $K$-move reaching $t_{00}$ on $M_n$, then duplicator moves to $z_{00}$ on $N_n$.
        \item If spoiler does a $C$-move reaching  $z_{i(i\not=0)}$, then duplicator moves to $t_{0i}$. Since $(M_n,t_{0i})\cong (N_n,z_i)$, there is a winning strategy after this move.
        \item If spoiler does a $C$-move reaching $t_{00}$ on $M_n$, then duplicator moves to $z_{00}$ on $N_n$.
        \item If spoiler does a $C$-move reaching $t_{0i(i\not=0)}$\footnote{The notation $t_{0i}$ is correct since the index for every node in $T_n$ begins with $0$.}, then duplicator moves to $z_i$ on $N_n$. Since $(M_n,t_{0i})\cong (N_n,z_i)$, there is a winning strategy after this move.
    \end{itemize}
    
    \item The current state is $(z_i,t_i)$ and $i\not=0$: this means before the game gets to this state, both players have only done $K$-moves. In the current state, there have been at most $(n-1)$ rounds. Thus, $i\leq (n-1)$ and players can do next round as follows: 
    
    \begin{itemize}
        \item If spoiler does a $K$-move reaching  $z_{i0}$ or $z_{i1}$, then duplicator does a $K$-move to reach $t_{i0}$ where $M_n,t_{i0}\models p$ since $|i0|=|i1|=(i+1)\leq n$ and there are $N_n,z_{i0}\models p$ and $N_n,z_{i1}\models p$.
        \item If spoiler does a $K$-move or a $C$-move reaching $t_{i0}$ or $t_{i1}$, then duplicator does a $K$-move or a  $C$-move to reach $z_{i0}$ where $M_n, z_{i0}\models p$ since $|i0|=|i1|=(i+1)\leq n$ and there are $N_n,t_{i0}\models p$ and $N_n,t_{i1}\models p$.
        \item If spoiler does a $C$-move reaching $z_{j(|j|>|i|)}$, then duplicator does a $C$-move to reach $t_{0j}$. Since $(M_n,t_{0j})\cong (N_n,z_j)$, there is a winning strategy after this move.
        \item If spoiler does a $C$-move reaching $t_{0i(i\not=0)}$, duplicator moves to $z_i$ on $N_n$. Since $(M_n,t_{0i})\cong (N_n,z_i)$, there is a winning strategy after this move.
    \end{itemize}
   
\end{itemize}

Therefore, for arbitrary $n\in \mathbb{N}$, there is a winning strategy for duplicator in the $n$-round $CL$-game over $(M_n,r)$ and $(N_n, r)$.
\end{proof}

For an arbitrary $\PLKC$-formula $\phi$, $\phi$ has finite modal depth $n$. By Lemma \ref{th.nwscl}, $\phi$ is satisfied both on $(M_n,r)$ and $(N_n,r)$, which means $\phi$ cannot distinguish  $(M_n,r)$ and $(N_n,r)$. This implies that we can never find a $\PLKC$-formula $\phi$ to distinguish the class of models $M$ and $N$. But we can find a $\PLKwCw_5$-formula $\Kw_i\Cw^5 p$ to distinguish them since for every  $(M_n,r)\in M$, $M_n,r\models \Kw_i\Cw_5 p$ and for every  $(N_n,r)\in N$, $N_n,r\models \neg\Kw_i\Cw_5 p$.

Therefore, following Theorem \ref{th.nkwcwlwtkcl} can be  proved.

\begin{theorem}\label{th.nkwcwlwtkcl}
Over $\mathcal{K}$, $\PLKwCw_5$ is not expressivity weaker than $\PLKC$.
\end{theorem}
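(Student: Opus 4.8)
The plan is to exhibit a single $\PLKwCw_5$-formula, namely $\Kw_i\Cw_5 p$, that is equivalent over $\mathcal{K}$ to no $\PLKC$-formula; this immediately witnesses that $\PLKwCw_5$ is not expressivity weaker than $\PLKC$. The strategy is the standard model-comparison argument, packaged through the $CL$-game of Definition~\ref{df.clgame}: I will show that $\Kw_i\Cw_5 p$ separates the two families $\M$ and $\N$ uniformly in $n$, whereas no $\PLKC$-formula can separate them, because each $\PLKC$-formula has a bounded modal depth and the game shields $(\M_n,r)$ and $(\N_n,r)$ against every distinction expressible up to that depth.

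First I would record the adequacy of the $CL$-game, which is the bridge between the combinatorial game and logical indistinguishability: if duplicator has a winning strategy in the $n$-round $CL$-game on two pointed models, then those two pointed models satisfy exactly the same $\PLKC$-formulas of modal depth at most $n$. This is a routine induction on $\PLKC$-formulas, reading each $K$-move (resp.\ $C$-move) of the game off an outermost $\K$ (resp.\ $\C$): the base case is precisely the $n=0$ clause of Definition~\ref{df.clgame} (agreement on propositional variables), and in the modal cases a spoiler move witnessing a truth-value difference at depth $n$ is answered by duplicator's strategy at a world agreeing on the relevant depth-$(n-1)$ subformula, whence the inductive hypothesis applies. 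Combining this with Lemma~\ref{th.nwscl}, which supplies duplicator's winning strategy in the $n$-round game on $(\M_n,r)$ and $(\N_n,r)$, yields: for every $n$, the pointed models $(\M_n,r)$ and $(\N_n,r)$ agree on all $\PLKC$-formulas of modal depth at most $n$.

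Next I would verify the two semantic facts that make $\Kw_i\Cw_5 p$ a genuine separator. On the one hand, in $\M_n$ the root $r$ has the unique successor $t_0$, so $\Kw_i\psi$ holds vacuously at $r$ for every $\psi$; hence $\M_n,r\vDash\Kw_i\Cw_5 p$ for all $n$. On the other hand, in $\N_n$ the root $r$ has the two successors $t_0$ and $z_0$, and these disagree on $\Cw_5 p$. Indeed, $t_0$ itself has a unique successor, so every conjunct $\Kw_s p$ of $\Cw_5 p$ begins with a $\Kw_i$ evaluated at a single-successor world and is therefore trivially true, giving $t_0\vDash\Cw_5 p$; whereas $z_0$ has the two successors $z_{00}$ and $z_{01}$ with $z_{00}\nvDash p$ and $z_{01}\vDash p$, so already the conjunct $\Kw_i p$ fails and $z_0\nvDash\Cw_5 p$. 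Since $r$'s two successors disagree on $\Cw_5 p$, we obtain $\N_n,r\vDash\neg\Kw_i\Cw_5 p$ for all $n$.

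Finally I would assemble these by contradiction. Suppose some $\PLKC$-formula $\phi$ were equivalent to $\Kw_i\Cw_5 p$ over $\mathcal{K}$, and let $n$ be its modal depth. By the previous paragraph, $\M_n,r\vDash\Kw_i\Cw_5 p$ while $\N_n,r\nvDash\Kw_i\Cw_5 p$, so $\phi$ would take different truth values at $(\M_n,r)$ and $(\N_n,r)$, contradicting the agreement on all depth-at-most-$n$ $\PLKC$-formulas established above. Hence no such $\phi$ exists, and $\PLKwCw_5$ is not expressivity weaker than $\PLKC$. I expect the game-adequacy step to be the only real obstacle, since the remainder is bookkeeping: one must check that the $CL$-game's two move types correspond exactly to the modalities $\K$ and $\C$ of $\PLKC$ and that ``agreement at round $0$'' is precisely agreement on propositional variables, so that the induction meshes cleanly with the modal-depth accounting used to invoke Lemma~\ref{th.nwscl}.
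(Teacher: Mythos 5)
Your proposal is correct and follows essentially the same route as the paper: both use Lemma~\ref{th.nwscl} together with the adequacy of the $CL$-game to get depth-$n$ agreement of $(\M_n,r)$ and $(\N_n,r)$ on $\PLKC$-formulas, and both use $\Kw_i\Cw_5 p$ as the separating $\PLKwCw_5$-formula. You merely make explicit two steps the paper leaves implicit, namely the game-adequacy induction and the verification that $\M_n,r\vDash\Kw_i\Cw_5 p$ while $\N_n,r\nvDash\Kw_i\Cw_5 p$.
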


This will follow that over $\mathcal{K}$, $\PLKwCw_5$ and $\PLKC$ are incomparable in expressivity. This is because $\PLKC$ is also not expressively weaker than $\PLKwCw_5$. To see this, consider two models $\M_1$ and $\M_2$, where in $\M_1$, $s_1$ can only see a $p$-world, and in $\M_2$, $s_2$ can only see a $\neg p$-world. It is straightforward to check that $\M_1,s_1\vDash\K p$ but $\M_2,s_2\nvDash\K p$, thus $\PLKC$ can distinguish between these pointed models. However, one can show that these two pointed models cannot be distinguished by $\PLKwCw_5$-formulas.

\section{$\PLKwCw_5$ over Binary Trees}\label{sec.binarytrees}

Because of the invalidity of the formula $ (Cw_5\phi\wedge Cw_5\psi)\to Cw_5(\phi\wedge \psi)$, the operator $Cw_5$ is not normal, in the sense that it  cannot be defined with some closures of accessibility relations standardly. However, an interesting observation over binary-tree models can be proved.

\begin{definition}
$(M,r)$ is a binary-tree model with root $r$ if $(M,r)$ is a tree model with root $r$ and for any node $t$ in $M$, $t$ has precisely two successors.
\end{definition}

\begin{theorem}\label{th.bt}
Consider the single-agent case. If $M,r\models Cw_5\phi$ where $(M,r)$ is a binary tree with root $r$, then on every layer of $(M,r)$, the number of the nodes where $\phi$ is satisfied is even.
\end{theorem}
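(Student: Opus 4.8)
The plan is to run a parity argument modulo $2$, layer by layer, trading descent through the formula $(\Kw_i)^k\phi$ against ascent through the tree. Recall that in the single-agent case $\models\Cw_5\phi\lra\bigwedge_{k\geq 1}(\Kw_i)^k\phi$, and that in a binary tree each node $t$ has exactly two children, so $\Kw_i\psi$ holds at $t$ iff the two children of $t$ agree on $\psi$. Write $[\,\chi\,]$ for $1$ when $\chi$ is true and $0$ otherwise, and let $a_n$ denote the parity of the number of layer-$n$ nodes satisfying $\phi$ (counting the root as layer $0$ and its descendants as layers $1,2,\dots$). The goal is to show $a_n=0$ for every $n\geq 1$.

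First I would record the elementary observation driving the whole proof: for any formula $\psi$ and any node $t$ with children $u,v$, the sum of the truth values of $\psi$ at $u$ and $v$ is even exactly when $u$ and $v$ agree on $\psi$; hence, modulo $2$, this sum equals $[\,t\not\models\Kw_i\psi\,]$. Summing this congruence over all nodes $t$ on a fixed layer $m$ yields that the parity of the number of layer-$(m+1)$ nodes satisfying $\psi$ equals the parity of the number of layer-$m$ nodes at which $\Kw_i\psi$ fails.

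Next comes the key step. Layer $m$ contains exactly $2^m$ nodes, and $2^m$ is even whenever $m\geq 1$; therefore, for $m\geq 1$, the number of layer-$m$ nodes where $\Kw_i\psi$ fails has the same parity as the number where $\Kw_i\psi$ holds. Combined with the previous observation, this gives the recurrence (valid for $n\geq 2$): the parity of the $\psi$-count on layer $n$ equals the parity of the $\Kw_i\psi$-count on layer $n-1$. Starting from $\psi=\phi$ and iterating this recurrence upward, one $\Kw_i$ is peeled off at each layer, so that $a_n$ equals the parity of the $(\Kw_i)^{n-1}\phi$-count on layer $1$.

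The last step is the top boundary, which is the genuinely delicate point and the main obstacle. At layer $1$ the node count is $2$, but the parity-flipping trick cannot be pushed one step further, since layer $0$ has the single node $r$ and $2^0=1$ is odd. Instead I apply the elementary observation of the second paragraph directly at $r$: the parity of the $(\Kw_i)^{n-1}\phi$-count on layer $1$ equals $[\,r\not\models\Kw_i(\Kw_i)^{n-1}\phi\,]=[\,r\not\models(\Kw_i)^{n}\phi\,]$. Hence $a_n\equiv [\,r\not\models(\Kw_i)^{n}\phi\,]\pmod 2$. Since $M,r\models\Cw_5\phi$ entails $M,r\models(\Kw_i)^{n}\phi$ for every $n\geq 1$, the bracket vanishes and $a_n=0$, which is exactly the claim. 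I would also note that the single root (layer $0$) need not be even, consistent with the fact that $\Cw_5\phi$ constrains only the $\Kw_i$-iterates and hence only the layers strictly below $r$; this is precisely why the odd count at the top cannot be cancelled by the parity trick and must instead be absorbed into the evaluation of $(\Kw_i)^n\phi$ at $r$.
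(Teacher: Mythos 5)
Your proof is correct, and it takes a genuinely different route from the paper's. The paper establishes a stronger, per-node invariant (its Theorem~\ref{th.bts}): $M,v_m\models \Kw_i^n\phi$ iff the number of $\phi$-nodes on the layer $n$ steps below $v_m$, \emph{among the descendants of $v_m$}, is even; this is proved by induction on the modal depth $n$, splitting each layer into the nodes satisfying $\Kw_i\phi$ and those refuting it and counting their $\phi$-children in each class. You instead work with one global congruence per layer --- summing $[\,u\models\psi\,]+[\,v\models\psi\,]\equiv[\,t\not\models\Kw_i\psi\,]\pmod 2$ over all $t$ in a layer --- and then use the evenness of $2^m$ for $m\geq 1$ to convert the failure count into the success count, which lets you telescope $a_n$ up to layer $1$ and absorb the odd-sized root layer into the evaluation of $(\Kw_i)^n\phi$ at $r$. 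Both arguments rest on the same two facts (siblings agree iff their truth values sum to an even number, and binary layers have even cardinality --- the paper uses the latter only implicitly when it asserts $|S|=2b$), but yours trades the finer per-node biconditional for a leaner global count: the paper's version localizes to subtrees, which is what its multi-agent remark generalizes, whereas your congruence $a_n\equiv[\,r\not\models(\Kw_i)^n\phi\,]$ characterizes only the parity seen from the root --- which is all Theorem~\ref{th.bt} asks for, and even yields the converse at the root for free. Your observation that layer $0$ is exempt is also apt: ``every layer'' must be read as every layer strictly below the root, exactly as your boundary analysis makes explicit.
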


In order to prove Theorem~\ref{th.bt}, we need to prove a stronger theorem:

\begin{theorem}\label{th.bts}
For an arbitrary formula $\phi$, if $M$ is a binary-tree model, then $M,v_m\models Kw_i^n\phi \ (1\leq n)$ iff the number of the $\phi$-satisfied nodes on the $(|m|+n)$th layer that $v_m$ can reach via relation $\twoheadrightarrow$ is even.
\end{theorem}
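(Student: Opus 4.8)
The plan is to prove Theorem~\ref{th.bts} by induction on $n$, using the fact that in a binary tree every node has exactly two successors and that, because the structure is a tree, the subtrees rooted at these two successors are disjoint. The single governing observation is that in a binary tree $M,w\models\Kw_i\psi$ holds precisely when the two successors of $w$ assign the same truth value to $\psi$; equivalently, the number of successors of $w$ satisfying $\psi$ is even (either $0$ or $2$, never $1$). This is immediate from the semantics of $\Kw_i$ specialised to a node with exactly two $R_i$-successors.

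For the base case $n=1$, the two successors of $v_m$ are exactly the nodes on layer $|m|+1$ that $v_m$ reaches via $\twoheadrightarrow$. By the observation above, $M,v_m\models\Kw_i\phi$ iff these two nodes agree on $\phi$, i.e. iff the number of $\phi$-satisfied nodes on layer $|m|+1=|m|+n$ reachable from $v_m$ is even. This settles $n=1$.

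For the inductive step I would write $\Kw_i^n\phi=\Kw_i(\Kw_i^{n-1}\phi)$ and let the two successors of $v_m$ be $v_{m0}$ and $v_{m1}$, both on layer $|m|+1$. By the observation, $M,v_m\models\Kw_i^n\phi$ iff $v_{m0}$ and $v_{m1}$ agree on $\Kw_i^{n-1}\phi$. Applying the induction hypothesis to each successor, with layer index $(|m|+1)+(n-1)=|m|+n$, let $a$ and $b$ be the numbers of $\phi$-satisfied nodes on layer $|m|+n$ reachable from $v_{m0}$ and from $v_{m1}$ respectively; then $v_{mk}\models\Kw_i^{n-1}\phi$ iff the corresponding count is even. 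Since $M$ is a tree, every layer-$(|m|+n)$ node reachable from $v_m$ lies in exactly one of the two subtrees, so the total count reachable from $v_m$ is $a+b$.

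The crux is then a parity identity: $v_{m0}$ and $v_{m1}$ agree on $\Kw_i^{n-1}\phi$ iff $a$ and $b$ have the same parity, and $a$ and $b$ have the same parity iff $a+b$ is even. Combining, $M,v_m\models\Kw_i^n\phi$ iff $a+b$ is even, i.e. iff the number of $\phi$-satisfied nodes on layer $|m|+n$ reachable from $v_m$ is even, closing the induction. The only genuinely delicate point is the bookkeeping here, keeping the agreement condition, the two induction-hypothesis parities, and the additivity of counts over disjoint subtrees aligned; everything else is routine. Theorem~\ref{th.bt} then follows by instantiating $v_m$ at the root and recalling that in the single-agent case $\Cw_5\phi$ abbreviates $\bigwedge_{n\geq 1}\Kw_i^n\phi$, so that $M,r\models\Cw_5\phi$ forces the count on each layer $n\geq 1$ to be even.
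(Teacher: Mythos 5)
Your proof is correct, and it follows the same overall strategy as the paper (induction on $n$, exploiting that $\Kw_i\psi$ at a node of a binary tree is exactly the condition that its two successors agree on $\psi$), but the inductive step is organised differently. The paper splits the two directions of the ``iff'' and treats them asymmetrically: for the forward direction it peels off the \emph{inner} modality, writing $\Kw_i^{k+1}\phi$ as $\Kw_i^k(\Kw_i\phi)$, applies the induction hypothesis to the formula $\Kw_i\phi$ at the same node $v_m$ to get an even-sized set $T$ of $\Kw_i\phi$-nodes on layer $|m|+k$, and then counts $\phi$-nodes on the next layer by summing contributions of $2$ or $0$ over $T$ and of exactly $1$ over its complement $S$ --- which forces it to invoke the extra fact that a layer of a binary tree has an even number of nodes so that $|S|$ is even. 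For the backward direction the paper instead peels off the \emph{outer} modality and argues at the two successors, which is precisely your argument restricted to the negative case. You peel off the outer modality uniformly, apply the induction hypothesis to $\phi$ at the two successors, and close both directions at once with the single parity identity ``the two subtree counts $a,b$ have equal parity iff $a+b$ is even,'' using disjointness of the subtrees for additivity. This buys a shorter, symmetric iff-chain that needs no layer-cardinality fact and no case split; the paper's layer-by-layer count buys nothing extra here, though it makes the combinatorial content (how evenness propagates across a whole layer) slightly more visible. Your concluding remark on how Theorem~\ref{th.bt} follows matches the paper's intent.
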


\begin{proof}

Given a binary tree $(M,v_0)$, where $v_0$ is the root, we firstly define the index of $M$ as follows: if there are nodes $v_m$, $t$, $r$ in $M$ and $v_m\to_it$, $v_m\to_ir$, then define the index of $t$ as $v_{m0}$ and the index of $r$ as $v_{m1}$. 

Let $v_m$ be an arbitrary node in $M$. Do induction on $n$:

\begin{itemize}

\item When $n=1$, 

\begin{itemize}
    \item Assume $M,v_m\models Kw_i\phi$. Since $M$ is a binary tree, there must be two nodes, $v_{m0}$ and $v_{m1}$ such that $v_m\to_iv_{m0}$ and $v_m\to_iv_{m1}$. Since  $M,v_m\models Kw_i^n\phi$, we have ($M,v_{m0}\models\phi$ and $M,v_{m1}\models\phi$) or ($M,v_{m0}\models\neg\phi$ or $M,v_{m1}\models\neg\phi$). Thus, on the $(|m|+1)$th layer, the number of nodes where $\phi$ is satisfied is 2 or 0, both of which are even.
    \item Assume the number of the nodes on the $(|m|+1)$th layer that $v_m$ can reach is even. That means there are only two possible cases: ($M,v_{m0}\models\phi$ and $M,v_{m1}\models\phi$) or ($M,v_{m0}\models\neg\phi$ or $M,v_{m1}\models\neg\phi$). Thus, we have $M,v_m\models Kw_i\phi$. 
\end{itemize}

\item Induction hypothesis: when $n=k$, $M,v_m\models Kw_i^k\phi (1\leq n)$  $\Leftrightarrow$ the number of the $\phi$-satisfied nodes on the $(|m|+k)$th layer that $v_m$ can reach via relation $\twoheadrightarrow$ is even.

\item When $n=k+1$, 

\begin{itemize}
    \item Assume $M,v_m\models Kw_i^{k+1}\phi$, which is equivalent to $M,v_m\models Kw_i^k Kw_i\phi$.
    
    Let $T$ be the set of nodes exactly consisting of all  $Kw_i\phi$-satisfied nodes on the $(|m|+k)$th layer that $v_m$ can reach via relation $\twoheadrightarrow$. By induction hypothesis, $|T|$ is even. let $|T|=2a$. Thus, among all the successors of $T$, the number of $\phi$-satisfied nodes is $2x+0y$, where $x+y=2a$. $2x+0y$ is surely an even number.  Let $S$ be a set of nodes only consisting of $\neg Kw_i\phi$-satisfied nodes on the $(|m|+k)$th layer that $v_m$ can reach via relation $\twoheadrightarrow$.  Since $M$ is a binary tree, let $|S|=2b$. For every node in $S$ has only one $\phi$-satisfied successor, among all the successors of $S$, the number of $\phi$-satisfied nodes is $2b$. Thus, the number of $\phi$-satisfied nodes on the $(|m|+k+1)$th layer is $2x+2b=2(x+b)$ which must be even.
    
    \item Assume $M,v_m\not\models Kw_i^{k+1}\phi$, which means $M,v_{m0}\models Kw_i^k\phi$ and $M,v_{m1}\models \neg Kw_i^k\phi$. By induction hypothesis, the number of the $\phi$-satisfied nodes on the $(|m|+k+1)$th layer that $v_{m0}$ can reach via relation $\twoheadrightarrow$ is even. And the number of the $\phi$-satisfied nodes on the $(|m|+k+1)$th layer that $v_{m1}$ can reach via relation $\twoheadrightarrow$ is odd. That means that the $\phi$-satisfied nodes on the $(|m|+k+1)$th layer that $v_m$ can reach via relation $\twoheadrightarrow$ is an even number plus an odd number, which equals to an odd number.
    
\end{itemize}

\end{itemize}

\end{proof}

\begin{remark}
Theorem \ref{th.bt} can be extended into a more general conclusion considering the multi-agent case: on any $G$-binary-tree model\footnote{A $G$-binary-tree model is a tree model where every node exactly has two $R_i$-successors for every $i\in G$.} $(M,r)$ where $r$ is the root, $M,r\models Cw_5\phi$ iff for any sequence of agents $s$ in $G$, on every layer of the subtree (of $(M,r)$) generated with $s$\footnote{A subtree (of some tree model $(M,r)$) generated with a sequence of agents $s$ is a subtree rooted with $r$ which only consists of all $s$-paths starting with $r$ in $(M,r)$.}, the number of the $\phi$-satisfied nodes is even.
\end{remark}

\section{Conclusion and Future work}\label{sec.conclusions}

This is a preliminary report on `commonly knowing whether'. We defined five possible notions of `commonly knowing whether' and studied how they are related to one another. On $\mathcal{S}5$-frames four of the five notions collapse. We prove the soundness and weak completeness of a  `commonly knowing whether'  logic on that class of frames. Finally, we study the expressivity of one of the proposed languages with respect to the standard common knowledge modal language on $\mathcal{K}$-frames.

There are a lot of future work to be done. For instance, the axiomatizations of $\Cw_1$ over $\mathcal{K}$ and over the class of other various frame classes, the axiomatizations and relative expressivity of other definitions for `commonly knowing whether'.

\section*{Acknowledgement}

The authors are greatly indebted to Yanjing Wang for many insightful discussions on the topics of this work and helpful comments on earlier versions of the paper. Jie Fan acknowledges the support of the project 17CZX053 of National Social Science Fundation of China. Xingchi Su was financially supported by Chinese Scholarship Council (CSC) and we wish to thank CSC for its fundings.

\bibliographystyle{plain}
\bibliography{main.bib}

\end{document}